\title{\huge Finding Regularized Competitive Equilibria of Heterogeneous Agent Macroeconomic Models with Reinforcement Learning}
\author{Ruitu Xu\thanks{Yale University. E-mail: {\tt ruitu.xu@yale.edu}} \qquad\qquad ~
Yifei Min\thanks{Yale University. E-mail: {\tt yifei.min@yale.edu}} \qquad\qquad ~
Tianhao Wang\thanks{Yale University. E-mail: {\tt tianhao.wang@yale.edu}}\\
Zhaoran Wang\thanks{Northwestern University. E-mail: {\tt zhaoranwang@gmail.com}}\qquad\qquad
Michael I. Jordan\thanks{University of California, Berkeley. E-mail: {\tt jordan@cs.berkeley.edu }}\qquad\qquad
Zhuoran Yang\thanks{Yale University. E-mail: {\tt zhuoran.yang@yale.edu}}
}
\date{}
\begin{document}

\maketitle

\begin{abstract}
    We study a heterogeneous agent macroeconomic model with an infinite number of households and firms competing in a labor market.
    Each household earns income and engages in consumption at each time step while aiming to maximize a concave utility subject to the underlying market conditions.
    The households aim to find the optimal saving strategy that maximizes their discounted cumulative utility given the market condition, while the firms determine the market conditions through maximizing corporate profit based on the household population behavior. 
    The model captures a wide range of applications in macroeconomic studies, and we propose a data-driven reinforcement learning framework that finds the regularized competitive equilibrium of the model. 
    The proposed algorithm enjoys theoretical guarantees in converging to the equilibrium of the market at a sub-linear rate.
\end{abstract}

\section{Introduction}\label{sec:intro}

The behavior of labor markets has always been one of the key subjects of study in macroeconomics, and it is crucial to understand the underlying mechanisms that give rise to the aggregate macroeconomic indicators, such as price level and unemployment rate. 
In addition, detailed economic statistics such as wealth distribution have also become critical factors in analyzing the societal impact of economic policies.
There is therefore a major need for a deeper understanding of the interaction between economic variables in large and complex markets, a challenging endeavor for economists that machine learning and ubiquitous data collection promise to catalyze.

A variety of macroeconomic models have been proposed to characterize high-dimensional economic dependencies~\citep{aiyagari1994uninsured,bewley1986stationary,huggett1993risk}. 
Under these models, the competition in the market among different types of market participants, such as households and firms, is often regarded as a sequential game, where the participants interact and eventually arrive at certain equilibrium strategies \citep{kuhn2013recursive}. 
Being able to find these equilibrium outcomes is critical in unraveling many important aspects of economic growth and providing insightful guidance to policymakers and corporations.

In this paper, we focus on a labor market model that is composed of two groups of participants: households and firms, each with an infinite number of heterogeneous agents. 
One of the classic examples of such heterogeneous agent model is Aiyagari model \citep{aiyagari1994uninsured}.
At every time step, each household retains certain capital holdings, \eg, cash and investments, and earns a certain amount of income, \eg, salary and dividends. 
The income of each household is subject to independent and heterogeneous exogenous shocks over time, which reflects the impact of market conditions on individual household employment. 
Facing the capital holdings and income at each step, the household saves part of its total wealth for the next step and spends the rest to enjoy some utility.

The competition between the households and firms lies at the essence of 
the model:
The goal of the household is to come up with a saving strategy that maximizes its discounted cumulative utility. 
The utility function is concave with respect to savings, a critical assumption that reflects the economic principle of diminishing returns \citep{shephard1974law}. 
The firms make decisions that maximize corporate profit given the population behavior of the households subject to the saving strategy, and these decisions then determine the market conditions. 
Such competition between the two parties is characteristic of modern labor markets, which eventually arrive at an equilibrium determined by the inherent properties of the markets.
Note that the model comprises infinitely many households interacting through a population variable, and it can be cast into a \MFG \citep{lasry2007mean}, such that the representative agent of the \MFG stands for a typical household and the mean-field term represents the market condition that incorporates both the population behavior of the households and the optimal decisions of a typical firm~\citep{light2022mean}.

Existing numerical methods in economic literature suffer from a number of drawbacks in solving for such equilibria, and these drawbacks impose significant limits on the applications of the model to real markets \citep{kuhn2013recursive,achdou2014heterogeneous,achdou2022income,achdou2014partial}.
Specifically:
1) The underlying stochastic model for income shocks is unknown in general,
while existing methods assume prior knowledge of income shock transitions, which 
is unrealistic in practice.
2) These methods also require a discretization of the state-action space in 
computing the utilities despite the model featuring a continuous variable space. 
It results in massive computational inefficiency due to the curse of dimensionality~\citep{bellman2015applied}.
3) There are few theoretical results that guarantee the convergence of these 
existing methods to the desired equilibrium.

In this work we present a \RL based alternative to the existing methods.
Machine learning techniques
have been adopted for solving economic models recently, thanks to the emergence of massive computational power and the increasingly available microeconomic data \citep{achdou2022income,curry2022finding,min2022learn}. 
Despite some attempts to apply data-driven \RL methods for macroeconomic models
\citep{curry2022finding}, theoretical understanding of such algorithms is still limited. 
In particular, the analysis of \RL algorithms for heterogeneous agent models face several unique challenges:
1) The concavity assumption on the utility function of the households features the economic nature of the problem, and it induces a concave shape constraint on the value functions of the agents for any feasible policy.
Such shape constraint is absent in the existing literature on MFG, and it requires special treatments to achieve better sample efficiency. 
2) The model induces a continuous domain of the value functions, whereas 
existing literature on MFG only considers a discrete state-action space. 
Under such a continuous setup of state-action space, the optimal policy of an 
agent may not be unique, and this may compromise the uniqueness of the 
equilibrium as well as the stability of the learning process.

\textbf{Our contributions.}
We summarize the main contributions of this paper as follows:

\begin{itemize}
    \item 
    We propose a \RL framework for a class of heterogeneous agent models 
    in macroeconomics, by formulating it as an \MFG with a shape constraint on 
    the value function.
    Our formulation generalizes the well-known Aiyagari model, extending the model into high-dimensional state-action space.
    
    \item To guarantee a provable convergence to an equilibrium, we propose a regularized policy iteration algorithm that we refer to 
    as \texttt{ConcaveHAM}, facilitated by the combination of fitted $Q$-iteration
    \citep{riedmiller2005neural} and convex regression.
    Our algorithm provides a data-driven approach that estimates the income shock transitions, and with the introduction of convex regression, it avoids the curse of dimensionality in discretization and provides a continuous solution on value functions that better characterize the problem.
    It is noteworthy that our algorithm can readily incorporate other forms of 
    shape constraints under large state spaces. 
    
    \item We prove that \texttt{ConcaveHAM} converges to the desired equilibrium at a sublinear rate under standard assumptions. 
    Especially, it learns a \QRMFE of the \MFG, an equivalence of the regularized recursive competitive equilibrium of the macroeconomic model \citep{prescott2005recursive}, which incorporates the concept of bounded rationality \citep{selten1990bounded} and enforces the uniqueness of optimal policy through entropy regularization.
    To the best of our knowledge, this is the first data-driven framework for heterogeneous agent models that captures economic intuitions and guarantees theoretical tractability simultaneously.
    
    \item In the MFG that we consider, the feasible action set is not independent of its current state, which is a natural consequence of the household budget constraint. This is also a unique characteristic that has not been considered in the existing literature to the best of our knowledge and might be of independent interest.
\end{itemize}

\noindent\textbf{Notation.}
We denote the set of all positive integers by $\ZZ_+$ and the set of all non-negative real numbers by $\RR_+$.
For functions $f$ and $g$, we denote $f(n)\lesssim g(n)$ if $f(n)\leq Cg(n)$ for every $n\in\ZZ_+$ with some universal constant $C>0$ (similarly, $f(n)\gtrsim g(n)$); further, we write $f(n)\asymp g(n)$ if $f(n)\lesssim g(n)$ and $f(n)\gtrsim g(n)$. 
We write $\diam(\calX) \defeq \max_{x,y\in\calX}\|x-y\|_\infty$ for any set $\calX$.
For any measurable set $A$, we define $\Delta_A$ as the set of all density functions supported on $A$.
We denote the set of policies $\pi$ with $\Pi$.
For any measurable functions $f$ and $g$ supported on domain $A$, we denote $\inner{f}{g}_A = \int_A f(x)g(x)\diff x$.

\section{Preliminary}\label{sec:preliminaries}

In this paper, we focus on a heterogeneous agent model as a generalization of the well-known Aiyagari model of labor markets with idiosyncratic income shocks \citep{aiyagari1994uninsured}.
Notably, we formulate the model as an \MFG, \ie, an asymptotic approximation of a multi-agent Markov game with a large number of agents. 
In the rest of this section, we elaborate on how to cast a generalized Aiyagari model in macroeconomics into the form of an \MFG. \cref{tab:comparison} shows a detailed correspondence between the two interpretations.

\subsection{Heterogeneous agent model}
\label{sec:equivalence}

\begin{table*}[t]
\caption{A comparison of notation between the generalized Aiyagari model and its \MFG formulation.}
\label{tab:comparison}
\centering
\begin{tabular}{llll}
    \toprule
    \multicolumn{2}{c}{Mean-field game} & \multicolumn{2}{c}{Heterogeneous agent model} \\
    \multicolumn{2}{c}{(Representative agent)} & \multicolumn{2}{c}{(Representative household)} \\
    \cmidrule(r{4pt}){1-2} \cmidrule(l){3-4}
    Notation & Interpretation & Notation & Interpretation \\
    \midrule
    $s\in\calS\subseteq\RR^{d_s}$ & State & $(b,w)\in\calB\times\calW \subseteq \RR^{d_s}$ & Capital holdings and income \\
    $a\in\calA\subseteq\RR^{d_w}$ & Action & $a\in\calA\subseteq\RR^{d_w}$ & Capital savings \\
    $\pi:\calS\to\Delta_\calA$ & Policy & $\pi:\calS\to\Delta_\calA$ & Saving strategy \\
    $z\in\calZ\subseteq\RR^{d_z}$ & Mean-field term & $z\in\calZ\subseteq\RR^{d_z}$ & Market condition (indicators) \\
    $r_z$ & Reward function & $r_z$ & Utility function \\
    $V_z^\pi$ & Value function & $V_z^\pi$ & Cumulative utility \\
    $\Xi_z^\pi$ & Aggregate indicator & $(b_z^\pi,w_z^\pi)$ & Population capital and labor \\
    $(z^\star,\pi^\star)$ & \QRMFE & $(z^\star,\pi^\star)$ & Market equilibrium \\
    \bottomrule
\end{tabular}
\end{table*}

Consider $d_b,d_w,d_z\in\ZZ_+$.
Let $\calB\subseteq [0,1]^{d_b}$ be the set of all capital holdings, 
$\calW\subseteq [0,1]^{d_w}$ be the set of all possible incomes,
and $\calA \subseteq [0,1]^{d_w}$ be the set of all feasible capital savings.
Further let $\calZ\subseteq\RR^{d_z}$ be the set of all aggregate indicators 
that represent possible market conditions.

\paragraph{Households.}
Within a heterogeneous agent model, each household on the market is characterized by its total assets $b\in\calB$ and current income $w\in\calW$ at each time step.
Under any given market condition represented by some aggregate indicators $z\in\calZ$, the income $w$ of each household is subject to independent heterogeneous exogenous shocks, \ie, the income of each household forms a Markov chain as time evolves. 
The household adopts a saving strategy $\pi:\calB\times\calW\to\calA$, and at each time step, after making saving $a=\pi(b,w)$, the household gains utility $r_z(b,w,a)$, where $r_z:\calB\times\calW\times\calA\to\RR$ is the utility function under market condition $z$.
The savings $a$ made at the current time step then become the capital holdings $b$ at the next time step.

The goal of each household is to come up with a saving strategy $\pi_z^\star$ 
that maximizes its expected cumulative utility $V_z^{\pi_z^\star}$ (as a $\gamma$-discounted sum of all future rewards) subject to the underlying market condition $z$. 
The optimal saving strategy of the households, together with the aggregate indicator, gives rise to a population average of household capital retention $b_z^{\pi^\star}$ and income level $w_z^{\pi^\star}$.

\paragraph{Firms.}
Based on $b_z^{\pi^\star}$ and $w_z^{\pi^\star}$, the firms desire to make decisions that maximize their profits, which in turn gives a new market condition indicator $z'$ following some aggregate function $\Psi$ and production mapping $\Phi$. 

More specifically, for $d_\Xi\in\ZZ_+$ and the set $\Pi$ of all feasible policies, we assume the representative firm has access to an aggregate function $\Psi:\calZ\times\Pi\to\RR^{d_\Xi}$ that maps any mean-field term $z$ and policy $\pi$ to a set of aggregate indicators $\Xi_z^\pi = (b_z^\pi,w_z^\pi)\in\RR^{d_\Xi}$ (which represents the population behaviors of all households in the market). 
Here $b_z^\pi$ denotes the aggregated indicators on capital retention and $w_z^\pi$ denotes those on labor supply.
The representative firm then makes corporate decisions based on the indicators $\Xi_z^\pi$.
In particular, the firm takes $\Xi_z^\pi$ as given and picks the corporate decisions that maximize the production function of the firm, which give rise to a new market condition represented by a new mean-field term $z' = \Phi(\Xi_z^\pi)\in\calZ$.

\paragraph{Competition between households and firms.}
Under the new market condition $z'$, the households then need to update their saving policy, and the competition between the households and the firms continues iteratively.

Notice that the households and firms interact only through the aggregate terms $b_z^\pi$, $w_z^\pi$, and $z$; therefore, the competition can be regarded as between one representative household and one representative firm.
Such repetitive competitions reach a competitive equilibrium $(z^\star,\pi_{z^\star}^\star)$ under mild conditions. Our goal is to learn an approximate equilibrium of $(z^\star,\pi_{z^\star}^\star)$ from observational data, where the behavior policy may not be optimal with respect to the underlying market condition $z$.

\subsection{Mean-field game on households}

We cast the above heterogeneous agent model into a framework of \MFG, where infinitely many identical agents interact through a mean-field term \citep{lasry2007mean}. 
In particular, we focus on a representative agent that stands for a representative household and consider its interaction with a mean-field term $z$, which forms a counterpart of the aggregate indicators $z$ of the market.
From now on, we describe the components of the MFG in the language of RL, as listed in \cref{tab:comparison}.

\paragraph{Shape-constrained MDP for representative agent.}
Fix any mean-field term $z$.
The interaction between the agent and economic
environment forms a discounted infinite-horizon MDP with a shape constraint, 
denoted by a tuple $(\calS, \calA, \calG, \Gamma, \PP_z, r_z, \gamma)$, 
where $\gamma\in(0,1)$ is the discount factor.
Here $\calS \defeq \calB\times\calW$ is the state space of total assets and income, 
and $\calA$ is the action space of savings.
In contrast to standard definitions of MDPs, here at each state $s\in\calS$, 
the agent can only take action in a feasible actions set 
$\Gamma(s)\subseteq\calA$ that reflects the household budget constraint.
We define the feasible state-action set as $\calG\defeq\{(s,a)\in\calS\times\calA 
\mid a \in \Gamma(s)\}$. 
In particular, we assume that $\calG\subseteq\RR^{d_s+d_w}$ is a convex set,
corresponding to a convex household budget constraint, and that it is 
independent of the choice of mean-field term $z$.
We also assume the reward function $r_z:\calG\to[0,1]$ is a concave function 
that depends on the mean-field term $z$.

At each time step $h$ given state $s_h=(b_h,w_h)$, the agent takes some action 
$a_h \in \Gamma(s_h)$ and receives a reward $r_z(s_h,a_h)$. 
Then the agent transitions to the next state $s_{h+1}=(b_{h+1},w_{h+1}) 
\sim \PP_z(\cdot\mid s_h,a_h)$, 
where the transition probability $\PP_z$ represents the idiosyncratic income shock that implicitly depends on the mean-field term $z$.
More specifically, for any $a_h\in\Gamma(b_h,w_h)$.
\begin{align}\label{eq:transition}
    \PP_z(b_{h+1},w_{h+1}|b_h,w_h,a_h) = \begin{cases}
    \prob_z[w_{h+1}|w_h] & b_{h+1}=a_h\\
    0 & b_{h+1}\neq a_h
    \end{cases}
\end{align}
where $\PP_z[\cdot\given\cdot]$ denotes the income shock.

\paragraph{Learning goal.}
The representative agent aims to maximize the $\gamma$-discounted cumulative 
reward over an infinite time horizon.
We encode the agent's strategy in a map $\pi:\calS\to\Delta_\calA$, which is 
called a \emph{policy}.
In particular, $\pi(s)$ is only supported on $\Gamma(s)$ for any $s\in\calS$.
Then given any mean-field term $z$ and policy $\pi$, we define the Q-function (\ie, action-value function) $Q_z^\pi:\calG\to \RR$ as the expected discounted cumulative reward 
under the policy $\pi$, \ie, 
\begin{align*}
    Q_z^\pi(s,a) \defeq \expect\bigg[\sum_{h=1}^\infty \gamma^h r_z(s_h,a_h)\ \bigg|\ s_0=s,a_0=a\bigg],
\end{align*}
where the expectation $\expect$ is taken with respect to both $\PP_z$ and $\pi$
along the trajectory.
Similarly, we define the value function $V_z^\pi(s):\calS\to\RR$ of $\pi$ as 
\begin{align*}
    V_z^\pi(s) \defeq \expect\bigg[\sum_{h=0}^\infty \gamma^h r_z(s_h,a_h)\ \bigg|\ s_0=s\bigg],  
\end{align*}
which we also write as $V_z^\pi(s) = \inner{Q_z^\pi(s,\cdot)}{\pi(\cdot \given s)}_{\Gamma(s)}$. 
The goal of the representative agent is then to 
find an optimal policy $\pi_z^\star$ that maximizes its value functions 
$Q_z^\star(s,a) \defeq \max_\pi Q_z^\pi(s,a)$ and 
$V_z^\star(s) \defeq \max_\pi V_z^\pi(s)$.
Note that such optimal policy always exists for infinite horizon \MDPs \citep{puterman2014markov}, and it holds that $V_z^*(s) = \inner{Q_z^*(s,\cdot)}{\pi_z^*(\cdot \given s)}_{\Gamma(s)}$.

Note that the convexity of the reward functions further induces the 
convexity of the value functions for any policy, and thus the policies are 
economically meaningful for the heterogeneous agent model.

\paragraph{Mean-field term: the representative firm.}
In our formulation of the \MFG, we treat the representative firm as a function 
that outputs the mean-field term given any policy $\pi$ from the representative agent.

Formally, we assume a mapping $\Phi\circ\Psi:\calZ\times\Pi\to\calZ$ that sends any policy $\pi$ under a current mean-field term $z$ to its corresponding updated mean-field term $z'$.
We assume full knowledge of both functions, as the aggregation function $\Psi$ can be estimated with a simulator and the production mapping $\Phi$ is generally a deterministic function given the production function.

\subsection{Quantal response mean-field equilibrium}

The \MFG described above admits at least one equilibrium under mild conditions, and any equilibrium $(z^\star,\pi^\star)$ of the \MFG then corresponds to an equilibrium of the heterogeneous agent model, where the equilibrium reveals itself as the pair of market condition $z^\star$ and the optimal household strategy $\pi_{z^\star}^\star$ under such market condition.

\textbf{Issues with RCE.}
One commonly considered equilibrium for \MFG is \RCE, which is represented as a 
tuple $(z^\star,Q^\star,\pi^\star,\Xi^\star)$ where $\pi^\star$ is the optimal 
policy with respect to the MDP under transition kernel $\PP_{z^\star}$, the 
optimal Q-function $Q^\star = Q_{z^\star}^\star$, and it holds that
$\Phi(\Xi^\star) = z^\star$ for the aggregate indicators 
$\Xi^\star = \Psi(z^\star,\pi^\star)$ \citep{light2022mean}. 
However, such equilibrium can be non-unique and unstable with respect to $Q^\star$ and its estimation. 
Furthermore, policy estimation is not robust under the definition of \RCE.
The lack of robustness limits the capability of value-based \RL algorithms from converging to the equilibria.
Further discussions on the lack of uniqueness and robustness of \RCE is delayed to \cref{sec:issues-rce}.

To overcome this challenge, we propose a regularized competitive equilibrium named Quantal Response Mean-Field Equilibrium (QR-MFE).

\begin{definition}[Quantal Response Mean-Field Equilibrium]\label{def:ne}
    A representative agent policy and a mean-field term $(\pi^\star,z^\star)$ reach quantal response mean-field equilibrium with respect to a strongly convex entropy function $\calH$ if $z^\star = \Phi(\Psi(z^\star,\pi^\star))$ and for any policy $\pi$, state $s\in\calS$, and the optimal Q-function $Q_{z^\star}^\star$ under mean-field term $z^\star$
        \begin{align*}
            & \int_{\Gamma(s)} \pi^\star(a\given s)Q_{z^\star}^\star(s,a) \diff a - \calH(\pi^\star) \geq \int_{\Gamma(s)} \pi(a\given s)Q_{z^\star}^\star(s,a) \diff a - \calH(\pi).
        \end{align*}
\end{definition}

The quantal response equilibrium introduces the notion of bounded rationality that characterizes human decision-making, where the agent makes stochastic ``quantal'' decisions according to a smooth probability distribution around the best response \citep{goeree2020stochastic}.

Under this new notion of equilibrium, the equilibrium policy $\pi^\star$ is unique with respect to $Q_{z^\star}^\star$ thanks to the strongly convex entropy regularization $\calH$. 
It is noteworthy that the pair $(\pi^\star,z^\star)$ of policy and mean-field term alone is enough to characterize the equilibrium; the optimal Q-function $Q_{z^\star}^\star$ is deterministic given $z^\star$ and the aggregation indicators $\Xi^\star = \Psi(z^\star,\pi^\star)$ are also fixed given $\pi^\star$ and $z^\star$. Especially, the optimal policy can be estimated from data through \RL algorithms when the mean-field term is fixed, and therefore we only consider the approximation error to the equilibrium mean-field term $z^\star$ in the rest of this paper.

\section{A Reinforcement Learning Algorithm}

In this section, we propose a RL-based algorithm called \texttt{ConcaveHAM} for 
learning the \QRMFE of the heterogeneous agent model from offline data.
The main algorithm is displayed in \cref{algo:iteration}, which, at a high level, 
consists of two major procedures: 1) a mean-field term generation from regularized 
optimal policy, and 2) an optimal value function estimation under concave shape constraint.

\cref{algo:iteration} executes as follows:
We initialize \cref{algo:iteration} with an arbitrary mean-field term $z^0\in\calZ$, zero Q-function $Q^0$, and uniform policy $\widehat\pi^0$, which is indeed the regularized optimal policy of constant Q-function $Q^0$.
In each iteration $t\in[T]$, given the previous mean-field term $z^{t-1}$ and the representative agent's policy $\widehat\pi^{t-1}$, the representative firm determines a new mean-field term $z^t$ (\cref{algline:z}).
Then under $z^t$, we estimate the optimal Q-function $\widehat Q^t$ for the representative agent with the concave \FQI~(\cref{algo:q-iteration}), using the offline dataset~(\cref{algline:Q}).
The regularized optimal policy is further solved according to \eqref{eqn:regularized-optimal} with respect to $\widehat Q^t$ (\cref{algline:pi}).
Finally, \cref{algo:iteration} returns an approximation $z^T$ of the equilibrium mean-field term $z^\star$ after $T$ iterations.

As we will show in Theorem~\ref{thm:contraction}, the mean-field term estimation of \cref{algo:iteration} converges to the equilibrium quantity $z^\star$ under proper conditions.
In the rest of this section, we explain the details for each component of \cref{algo:iteration}.

\begin{algorithm}[tb]
\centering
\caption{Concave Heterogeneous Agent Model (\texttt{ConcaveHAM})}\label{algo:iteration}
    \begin{algorithmic}[1]
        \Require Mean-field term $z^0\in\calZ$, number of iterations $T$, number of iterations $\tau$ within each \texttt{CFQI}
        \State {\bf initialize} $Q^0(s,a)\gets 0$ for all $(s,a)\in\calG$, $\widehat\pi^0\gets \mathsf{Unif}(\Gamma(s))$ for all $s\in\calS$
        \For{$t = 1,\ldots,T$}
            \State $z^t \gets \Phi(\Psi(z^{t-1},\widehat\pi^{t-1}))$ \label{algline:z}
            \State $\widehat Q^t \gets \texttt{CFQI}(\calD_t;\tau)$ where $\calD_t$ is the offline dataset under $z^t$ \label{algline:Q}
            \State $\widehat\pi^t\gets \pi_{\widehat Q^t}$ following \cref{eqn:regularized-optimal} \label{algline:pi} 
        \EndFor
        \State Return $z^T$ \label{algline:z}
    \end{algorithmic}    
\end{algorithm}

\paragraph{Regularized optimal policy.}
For any Q-function $Q$ and any strongly convex regularizer $\calH$, we define the regularized optimal policy $\pi_Q$ with respect to $Q$ as
\begin{align}\label{eqn:regularized-optimal}
    \pi_Q(\cdot \given s) \defeq \argmax_{u \in \Delta_{\Gamma(s)}} \bigg\{\int_{\Gamma(s)} u(a)Q(s,a) \diff a - \calH(u)\bigg\}.
\end{align} 
In particular, we assume $\calH$ is $\zeta$-strongly convex in $\|\cdot\|_1$. 
A classic example is the negative entropy, i.e., $-\calH(u) = \zeta \int u(a)\log u(a) \diff a$. 
We further assume $\pi_Q$ is known for any Q-function $Q$ and regularizer $\calH$ for simplicity, and the policy optimization with entropy regularization can be found in \citet{mei2020global,chen2021primal}.

\textbf{Q-function estimation.}
For any fixed mean-field term $z$, we propose a Concave \FQI algorithm in \cref{algo:q-iteration}, abbreviated as \texttt{CFQI}, for estimating the optimal Q-function $Q_z^\star$ of the underlying \MDP from data. 

Under each mean-field term $z$, we define the Bellman optimality operator $\calT_z$ as follows:
\begin{align*}
    (\calT_z Q)(s,a) \defeq r_z(s,a) + \gamma \expect_{s'\sim\PP_z(\cdot\mid s,a)}[(\frakJ Q)(s')],
\end{align*}
where $(\frakJ Q)(s) \defeq \max_{a\in \Gamma(s)} Q(s,a)$.
For each mean-field term $z\in\calZ$, we assume access to an offline dataset $\calD_z \defeq \{(s_{z,m},a_{z,m},r_{z,m},s_{z,m}')\}_{m=1}^M$ that contains $M$ \iid samples, where $(s_{z,m},a_{z,m})\sim\widetilde\mu\in\Delta_\calG$, the reward $r_{z,m} = r_z(s_{z,m},a_{z,m})$, and the next state $s_{z,m}'\sim \PP_z(\cdot\given s_{z,m},a_{z,m})$.
Then given the dataset $\calD_z$,
\FQI \citep{antos2007fitted} applies an approximate value iteration on the Q-functions, by iteratively computing an estimator $\widetilde Q^\ell$ of $\calT_z \widetilde Q^{\ell-1}$ for $\ell=1,\ldots,\tau$.

In particular, \texttt{CFQI} enforces a concave shape constraint on the estimated Q-function through a convex regression solver termed \LSE.
Given $\widetilde Q^{\ell-1}$ from the previous iteration and a set of concave functions denoted by $\scrF$, LSE finds $\hat f\in\cal F$ that minimizes the empirical risk given by the Bellman error between $\hat f$ and $\widetilde Q^{\ell-1}$ evaluated on the dataset $\calD_z$.
Specifically, for any $Q$ and $\widehat f\in\scrF$, we say $\widehat f$ is an $\epsilon$-approximate \LSE of $\calT_z Q$ on dataset $\calD_z$ if
\begin{align*}
    & \sum_{(s,a,r,s')\in\calD_z} (\widehat f(s,a) - r - \gamma (\frakJ Q)(s'))^2 - \min_{f\in\scrF} \sum_{(s,a,r,s')\in\calD_z} (f(s,a) - r - \gamma (\frakJ Q)(s'))^2 \leq \epsilon.
\end{align*}
We denote the 
regression solver used to acquire such $\epsilon$-approximate \LSE as $\texttt{LSE}(\scrF,\calD_z,\calT_z Q;\epsilon)$.
Then in each iteration $\ell$, we solve for $\widetilde Q^\ell = \texttt{LSE}(\scrF,\calD_z,\calT_z\widetilde Q^{\ell-1};\epsilon)$, and \cref{algo:q-iteration} terminates after $\tau$ iterations and returns an approximate optimal Q-function $\widetilde Q^\tau$.

\begin{algorithm}[tb]
\centering
\caption{Concave \FQI (\texttt{CFQI})}\label{algo:q-iteration}
    \begin{algorithmic}[1]
        \Require Data $\calD_z = \{s_{z,m},a_{z,m},r_{z,m},s'_{z,m}\}_{m=1}^M$, number of iterations $\tau$
        \State {\bf initialize} $\widetilde Q^0(s,a) \gets 0$ for all $(s,a)\in\calG$
        \For{$\ell = 1,\ldots,\tau$}
            \State $\widetilde Q^\ell \gets \texttt{LSE}(\scrF,\calD_z,\calT_z \widetilde Q^{\ell-1};\epsilon)$ 
        \EndFor
        \State Return $\widetilde Q^\tau$
    \end{algorithmic}    
\end{algorithm}

In general LSE is hard to obtain without structures on $\scrF$, but fortunately here we have the convex shape constraint.
Recall that the reward function $r_z$ is concave and takes values in $[0,1]$, and it follows that 
$Q_z^\pi$ and $V_z^\pi$ are in $[0,1/(1-\gamma)]$.
Let us denote $B \defeq 1/(1-\gamma)$ and $L \defeq L_r/(1-\gamma)$, then the Q-function $Q_z^\pi$ is $B$-bounded and $L$-Lipschitz for any $z\in\calZ$ and policy $\pi$.

Hence, as \cref{lem:concavity-general} shows, the estimation target $\calT_z Q$ is a concave function for any concave $Q$, and this allows effective estimation of $\calT_z Q$ with a simple concave function set. 
We also remark that if the model has finite number of discrete income levels for variable $w$, the regression target $\calT_z Q(\cdot,w,\cdot)$ is concave for any concave $Q(\cdot,w,\cdot)$ and all $w\in\calW$ without the requirement of stochastic concavity, \cf \cref{lem:concavity}.

\subsection{Convex regression}
We now introduce a few regression solvers available to \cref{algo:q-iteration}.
Without loss of generality, 
for any regression on a concave $Q$ bounded from above by $B$, we consider an equivalent regression problem on the non-negative convex function $-Q + B$. 
With a slight abuse of notation, we write $Q$ to denote this transformed convex function from now on to simplify discussions.

\paragraph{Bounded Lipschitz convex functions.}
Let $\scrC_{B,L}$ denote the set of all non-negative $B$-bounded and $L$-Lipschitz convex functions on $\calG$, \ie,
\begin{align*}
    & \scrC_{B,L} \defeq\{f: \calG \to [0,B] \mid f \text { is convex}; \partial f(x)\neq\emptyset, \forall x\in\calG; \|\partial f(x)\| \leq L, \forall x\in\calG\}, 
\end{align*}
where $\partial f(x)$ denotes the subdifferential of $f$ at $x \in \calG$ and $\|\partial f(x)\|$ denotes $\sup_{g\in \partial f(x)}\|g\|_\infty$.
Recall that $\calT_z Q\in\scrC_{B,L}$ for any $Q\in\scrC_{B,L}$ and mean-field term $z$ following \cref{lem:concavity-general}.
A convex regression solver for $\calT_z Q$ is written as $\texttt{LSE}(\scrC_{B,L},\calD_z,\calT_z Q;\epsilon)$. 

\paragraph{Lipschitz max-affine functions.}

Further, we let $\scrA_{B,L}^{K,+}$ denote the set of all non-negative $L$-Lipschitz $K$-max-affine functions, \ie,
\begin{align*}
    & \scrA_{B,L}^{K,+} \defeq \{h:\calG\to[0,B] \given h(x) = \max_{k\in[K]} \alpha_k\transpose x + c_k, \|\alpha_k\|_\infty \leq L, \forall k\in[K]\}.
\end{align*}
Note that max-affine functions are convex by definition, and therefore any estimator in $\scrA_{B,L}^{K,+}$ also lies in $\scrC_{B,L}$.
We write a $K$-max-affine regression solver for $\calT_z Q$ as $\texttt{LSE}(\scrA_{B,L}^{K,+},\calD_z,\calT_z Q;\epsilon)$.
It is worth noting that an approximate \LSE to the convex regression on both $\scrC_{B,L}$ and $\scrA_{B,L}^K$ can be obtained by solving the corresponding convex optimization problems up to arbitrary accuracy \citep{mazumder2019computational,balazs2015near}, and we therefore assume the approximate \LSEs returned by the solvers admit at most $\epsilon$ excess empirical risk compared to the \LSE.

\paragraph{Input convex neural networks.}
Beyond the standard convex regression algorithms, the shape-constrained estimation of $\calT_z Q$ can be achieved by minimizing the empirical risk over a set of \ICNN, where the network output is always convex with respect to the network input when the parameter space of the network is constrained \citep{amos2017input}. 

More specifically, for any $K$-layer \ICNN $f_K(x):\Rd\to\RR$, each of its \ith layer $y_i$ can be expressed as
\begin{align*}
    y_{i+1} = \sigma(W_i^{(y)} y_i + W_i^{(x)} x - \beta_i), \quad i = 0,1,\ldots,K-1,
\end{align*}
where $y_i,\beta_i\in\RR^{\ell_i}$, $W_i^{(y)}\in\RR^{\ell_{i+1}\times\ell_i}$, $W_i^{(x)}\in\RR^{\ell_{i+1}\times d}$, and $\sigma$ denotes ReLU function. 
Especially, $y_0 = x\in\Rd$ represents the network input, $y_K = f_K(x)\in\RR$ represents the network output, and $\ell_i$ denotes the number of neurons on the \ith layer (especially, $\ell_0 = d$). 
To guarantee the convexity of $f_K(x)$ with respect to $x$, we further restrict $W_i^{(y)}\in\RR_+^{\ell_{i+1}\times\ell_i}$, so that the composition of convex and convex non-decreasing function is also convex \citep{amos2017input}. 

We further show that under a convex parameter set of the network specified by $\calG$, we are able to construct a set $\scrN_{B,L}^K$ of \ICNNs such that it is equivalent to the set of all non-negative $B$-bounded and $L$-Lipschitz $K$-max-affine functions, \ie, $\scrA_{B,L}^{K,+}$. 
Therefore, minimizing the empirical risk on $\scrN_{B,L}^K$ also serves as an estimation of the convex function $\calT_z Q$, assuming that the solver trains to the global minimum.
This is summarized in the following lemma, and see \cref{sec:exact-representation-proof} for a proof.

\begin{lemma}\label{lem:exact-representation}
    There exists a set $\scrN_{B,L}^K$ of $K$-layer \ICNNs, such that for any $L$-Lipschitz function $f_K:\calX\to [0,B]$ defined on a convex domain $\calX\subseteq\Rd$ that is the maximum of $K$ affine functions, $\scrN_{B,L}^K$ is able to represent $f_K$ exactly under a convex constraint set specified by $W_i^{(x)} \geq 0$, $\|\sum_{j=i-1}^{K-1} w_j\|_\infty \leq L$, and $\max_{x\in\calX} \sum_{j=i-1}^{K-1} w_j\transpose x + \sum_{j=i-1}^{K-1} b_j \leq B$
    for all $i\in[K]$, where $w_i \defeq w_{i,1} - w_{i,2}$ and $W_i^{(x)} = [w_{i,1}\transpose,w_{i,2}\transpose]\transpose$.
    In addition, we set $W_0^{(y)} = 0$ and $W_i^{(y)} = 1$ fixed for all $i\in[K-1]$.
    Especially, $\scrN_{B,L}^K$ is equivalent to the set of all non-negative $K$-max-affine functions that are $B$-bounded and $L$-Lipschitz.
\end{lemma}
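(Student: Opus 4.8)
The plan is to recognize the fixed-skeleton ICNN in the statement as the standard ReLU realization of an iterated maximum, read off the $K$-max-affine function it computes, and then translate the three families of parameter constraints into ``nonnegative, $B$-bounded, $L$-Lipschitz''. Concretely, $\scrN_{B,L}^K$ is taken to be the family of $K$-layer ICNNs with the prescribed skeleton ($W_0^{(y)}=0$, $W_i^{(y)}=1$ for $i\in[K-1]$, hidden widths chosen to carry the sign-split input encoding) whose free parameters $\{W_i^{(x)},\beta_i\}_{i=0}^{K-1}$ obey the stated inequalities.

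First I would unroll the recursion. Write $w_i\defeq w_{i,1}-w_{i,2}$, so the input term at layer $i$ contributes the affine functional $p_i(x)\defeq w_i\transpose x-\beta_i$ (the split into $w_{i,1},w_{i,2}\ge 0$ is exactly what lets an arbitrary functional be realized while keeping $W_i^{(x)}\ge 0$). Since $\sigma$ is the ReLU, $W_0^{(y)}=0$ gives $y_1=\max\{0,p_0(x)\}$ and $W_i^{(y)}=1$ gives $y_{i+1}=\max\{0,\,p_i(x)+y_i\}$. Using the elementary identity $\max\{0,\,c+\max(\{0\}\cup S)\}=\max(\{0,c\}\cup(c+S))$ for any finite set $S$, an induction on $i$ yields
\begin{align*}
    y_i \;=\; \max\Bigl\{\,0,\ \max_{1\le k\le i}\ \textstyle\sum_{j=k-1}^{i-1} p_j(x)\,\Bigr\},\qquad i=1,\dots,K,
\end{align*}
so that $f_K(x)=y_K=\max\bigl\{0,\ \max_{i\in[K]}\bigl(A_i\transpose x+B_i\bigr)\bigr\}$ with $A_i\defeq\sum_{j=i-1}^{K-1}w_j$ and $B_i\defeq-\sum_{j=i-1}^{K-1}\beta_j$ --- exactly the affine pieces appearing in the stated constraints. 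Since $\sigma$ is convex nondecreasing and $W_i^{(y)}\in\{0,1\}\subseteq\RR_+$, the map $x\mapsto f_K(x)$ is convex, so this is a genuine ICNN.

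Next I would observe that the parameters-to-pieces map is a bijection, so the skeleton can realize an arbitrary prescribed set of $K$ affine pieces. The linear map $(w_0,\dots,w_{K-1})\mapsto(A_1,\dots,A_K)$ --- namely $A_K=w_{K-1}$ and $w_{i-1}=A_i-A_{i+1}$ for $1\le i\le K-1$ --- is triangular with unit diagonal, hence invertible; likewise for $(\beta_j)_j\mapsto(B_i)_i$. Under this identification the three constraint families read: $W_i^{(x)}\ge 0$ (the ICNN sign discipline used above); $\|A_i\|_\infty=\bigl\|\sum_{j=i-1}^{K-1}w_j\bigr\|_\infty\le L$ for all $i$, equivalent to the $i$-th affine piece being $L$-Lipschitz in the bounded-subgradient sense used in the paper; and $\max_{x\in\calX}(A_i\transpose x+B_i)\le B$ for all $i$, equivalent to that piece being at most $B$ on $\calX$. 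Each condition is convex in the parameters (linear inequalities and a supremum of affine functions), so $\scrN_{B,L}^K$ is cut out by a convex set, as asserted.

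Finally I would relate $\scrN_{B,L}^K$ to $\scrA_{B,L}^{K,+}$, which is where the only genuine bookkeeping lies: the output ReLU introduces the spurious affine piece $0$. For feasible parameters, $g(x)\defeq\max_{i\in[K]}(A_i\transpose x+B_i)$ is $L$-Lipschitz with $g\le B$ on $\calX$ and $f_K=\max\{0,g\}\in[0,B]$. Given a target $f_K\in\scrA_{B,L}^{K,+}$, say $f_K=\max_{k\in[K]}(\alpha_k\transpose x+c_k)$ with $\|\alpha_k\|_\infty\le L$ and $0\le f_K\le B$ on $\calX$, the bijection lets us choose parameters with $(A_i,B_i)=(\alpha_i,c_i)$; then $g=f_K\ge 0$ on $\calX$, the $0$-piece is redundant there, and the network reproduces $f_K$ exactly as a $K$-max-affine function on $\calX$. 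This proves exact representability of every element of $\scrA_{B,L}^{K,+}$, and, reading each ICNN through its restriction to $\calX$, gives the claimed equivalence --- up to the harmless remark that for parameter choices on which $g$ dips below $0$ somewhere in $\calX$ the network realizes a $(K+1)$-max-affine rather than a $K$-max-affine function, an enlargement of the class that does not affect representability. I expect this last point to be the only delicate step: redundancy of the $0$-piece over $\calX$ hinges on the target being nonnegative on the \emph{whole} domain, which is precisely why the construction is tailored to $[0,B]$-valued functions. A secondary subtlety, already reflected in the statement, is that the Lipschitz cap must be placed on the aggregates $A_i=\sum_{j=i-1}^{K-1}w_j$ rather than on the individual nonnegative blocks $w_{i,1},w_{i,2}$; the unrolling induction, the triangular inversion, and the convexity checks are all routine.
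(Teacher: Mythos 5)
Your proof is correct and follows essentially the same route as the paper's: the sign-split input encoding $[x^\top,-x^\top]^\top$ with nonnegative $W_i^{(x)}$, the telescoping identification $w_{i-1}=\alpha_i-\alpha_{i+1}$ inverted by a unit-triangular system, and the translation of the Lipschitz and boundedness constraints into convex conditions on the aggregates $\sum_{j\ge i-1}w_j$. Your explicit unrolling of the intermediate ReLUs and your remark that the output ReLU's spurious $0$-piece is redundant precisely because the target is nonnegative on all of $\calX$ make the "truncated max-affine" point slightly more careful than the paper's, but the argument is the same.
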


\section{Theoretical Results}

Next, we show the existence and uniqueness of the \QRMFE, and provide theoretical guarantees to the convergence of \cref{algo:iteration} to the unique equilibrium.
Before diving into the main results, we make a few moderate assumptions on the \MFG, under which our theoretical results hold.

We first define formally the shape constrain assumption on the feasible action sets and reward functions, as well as some technical conditions required for the following theorems.
It is noteworthy that we consider $b$ and $w$ as continuous variables.

\begin{assumption}\label{assump:convex}
The set of all feasible mean-field terms $\calZ\subseteq\RR^{d_z}$ has bounded $\ell_1$-radius, \ie, $\max_{z,z'\in\calZ}\|z-z'\|_1 \leq Z$ for some $Z > 0$.
The feasible set $\calG=\{(s,a)\in\RR^{d_s+d_w}\mid a\in\Gamma(s)\}$ is a closed convex set.
The reward function $r_z(s,a)$ is concave in $(s,a)$ and $L_r$-Lipschitz in $\ell_\infty$ norm with respect to $(s,a,z)$, i.e., $|r_z(s,a)-r_{z'}(s',a')| \leq L_r 
\cdot\max\{\|s-s'\|_\infty,\|a-a'\|_\infty,\|z-z'\|_\infty\}$.
\end{assumption}

\begin{assumption}\label{assump:stochastic_concave}
For all $z\in\calZ$, the transition kernel $\PP_z$ of the \MDP is stochastically concave \citep{smith2002structural}, \ie, $q(b,w,a) \defeq \expect_{w'}[Q(b,w',a)\given w]$ is concave on for any concave $Q$.    
\end{assumption}

We assume that the transition kernel \eqref{eq:transition} is Lipschitz with respect to both the state and mean-field term.
Recall that the transition is independent of the current capital holdings $b$ and savings $a$ (see Equation~\ref{eq:transition}).
Such assumption admits smooth transition between between different mean-field environments.
\begin{assumption}\label{assum:lipschitz-transition}
    The transition kernel $\PP$ is Lipschitz in the sense that for any mean-field term $z,z'\in\calZ$ and $s,s'\in\calS$, there exists $L_\PP > 0$ such that $\|\PP_z(\cdot \mid w)-\PP_{z'}(\cdot \mid w')\|_1 \leq L_\PP \cdot(\|s-s'\|+\|z-z'\|_1)$.
\end{assumption}

To guarantee a stable iteration for \cref{algo:iteration}, we also impose a Lipschitz condition on the aggregation function $\Psi$ and the production mapping $\Phi$. 
Here for any two policies $\pi$ and $\pi'$, $\|\pi-\pi'\|_{\infty,\overline\nu}$ is defined as $\int_\calS \|\pi(\cdot\given s) - \pi'(\cdot\given s)\|_\infty \diff\nu(s)$ for any $\nu\in\Delta_\calS$.
\begin{assumption}\label{assum:lipschitz-mappings}
    The aggregate function $\Psi$ is $L_{\Psi}$-Lipschitz for all policy $\pi$ and mean-field term $z\in\calZ$ and the production mapping $\Phi$ is $L_{\Phi}$-Lipschitz, \ie,
    \begin{align*}
        \|\Psi(z,\pi) - \Psi(z',\pi')\|_1 & \leq L_\Psi (\|z-z'\|_1 + \|\pi - \pi'\|_{\infty,\overline\nu}), \\
        \|\Phi(\Xi) - \Phi(\Xi')\|_1 & \leq L_\Phi\cdot \|\Xi-\Xi'\|_1,
    \end{align*}
    for some policy-induced distribution $\overline\nu$.
    Hence, $\Phi(\Psi(\cdot,\cdot))$ is $L_F$-Lipschitz, where $L_F \defeq L_{\Psi} L_{\Phi}$.
\end{assumption}

As mentioned above, we assume that for any mean-field term $z$, we have access to a logged dataset $\calD_z$ comprising of \iid samples from the trajectory of an exploratory behavior policy, where the behavior policy satisfies a coverage assumption.
These are standard for \FQI in the RL literature \citep{antos2007fitted}, and the state-action pairs needed for the training dataset can be sampled from the heterogeneous households in the market.
\begin{assumption}\label{assum:concentrability}
    For any $z\in\calZ$, the offline data $\calD_z = \{(s_{z,m},a_{z,m},r_{z,m},s'_{z,m})\}_{m=1}^M$ follows an exploratory distribution $\widetilde\mu$,
    where \iid samples $(s_{z,m},a_{z,m})\sim \widetilde\mu$. 
    In particular, $\widetilde\mu = \widetilde\nu\times\pi_b$ follows from a sample distribution $\widetilde\nu$ on $\calS$ and an exploratory behavior policy $\pi_b$, such that for any mean-field term $z$, non-stationary policy $\pi$\footnote{A non-stationary policy may change over time, \ie, there may not exists a policy $\pi:\calS\to\Delta_\calA$ such that $\pi_t = \pi$ for all $t$.}, 
    step $h$, and feasible state-action pair $(s,a)$, there exists a constant $D$ such that
    \begin{align*}
        \frac{\mu_{z,h}^\pi(s,a)}{\widetilde{\mu}(s,a)} \leq D,
    \end{align*}
    where $\mu_z^\pi$ denotes the distribution on step $h$ over $\calS \times \calA$ induced by transition kernel $\PP_z$ and policy $\pi$. 
    Note that $\widetilde{\nu}(s) = \int_{\Gamma(s)} \widetilde{\mu}(s,a) \diff a$, and we have $\int_{\Gamma(s)} \mu_{z,h}^\pi(s, a) \diff a/ \widetilde\nu(s) \leq D$ for any $s \in \calS$.
\end{assumption}

Under these assumptions, we show the existence and uniqueness of the \QRMFE for the \MFG as well as the convergence guarantees of \cref{algo:iteration} with a range of external solvers for convex regression.

To ease the notation, let us introduce some problem-dependent quantities: 
Let $B \defeq 1/(1-\gamma)$, $L \defeq L_r/(1-\gamma)$, $d \defeq d_s+d_w$, and
\begin{align*}
    \kappa \defeq \frac{J L_F D^2}{\zeta} + L_F. 
\end{align*}
We further write 
\begin{align*}
C_d = B^\frac{d+8}{d+2} (d+1)^\frac{2}{d+2} (dL)^\frac{d}{d+2}
\end{align*}
and $C_d' = (d+1)B^4 + (B+dL)dL$.

First, we prove that the \MFG has a unique \QRMFE.
The essential idea is that if $\Phi(\Psi(\pi_{Q_z^*},z))$ a contraction and $\calZ$ is complete, then the equilibrium is unique.
See proof in \cref{sec:contraction-proof}.

\begin{theorem}[Unique QR-MFE]\label{thm:contraction}
    Under Assumptions \ref{assump:convex}, \ref{assump:stochastic_concave}, \ref{assum:lipschitz-mappings} and \ref{assum:concentrability}, further assume that $\kappa<1$.
    Then, there exists a unique quantal response mean-field equilibrium to the \gls*{MFG} with regularized policies.
\end{theorem}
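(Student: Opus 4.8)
The plan is to recast the existence and uniqueness of a \QRMFE as a Banach fixed-point problem for a suitable self-map of $\calZ$. First I would observe that, by \cref{def:ne} together with the definition \eqref{eqn:regularized-optimal} of the regularized optimal policy, a pair $(\pi^\star,z^\star)$ is a \QRMFE if and only if $\pi^\star=\pi_{Q_{z^\star}^\star}$ --- the inequality in \cref{def:ne} says precisely that $\pi^\star(\cdot\given s)$ maximizes $u\mapsto\inner{u}{Q_{z^\star}^\star(s,\cdot)}_{\Gamma(s)}-\calH(u)$ over $\Delta_{\Gamma(s)}$, whose maximizer is unique because $\calH$ is $\zeta$-strongly convex --- and $z^\star$ is a fixed point of
\begin{align*}
    F(z)\defeq\Phi\bigl(\Psi(z,\pi_{Q_z^\star})\bigr).
\end{align*}
By \cref{assum:lipschitz-mappings} the composition $\Phi\circ\Psi$ sends $\calZ\times\Pi$ into $\calZ$, so $F:\calZ\to\calZ$; and $(\calZ,\|\cdot\|_1)$ is complete, being a closed bounded subset of $\RR^{d_z}$. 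Hence it suffices to show $F$ is a $\kappa$-contraction, since then $\kappa<1$ and the Banach fixed-point theorem give a unique $z^\star$, and $(\pi_{Q_{z^\star}^\star},z^\star)$ is the unique \QRMFE.

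For the contraction estimate, fix $z,z'\in\calZ$. Applying the $L_\Phi$-Lipschitzness of $\Phi$ and then the $L_\Psi$-Lipschitzness of $\Psi$ from \cref{assum:lipschitz-mappings},
\begin{align*}
    \|F(z)-F(z')\|_1
    &\leq L_\Phi\bigl\|\Psi(z,\pi_{Q_z^\star})-\Psi(z',\pi_{Q_{z'}^\star})\bigr\|_1\\
    &\leq L_F\|z-z'\|_1+L_F\bigl\|\pi_{Q_z^\star}-\pi_{Q_{z'}^\star}\bigr\|_{\infty,\overline\nu},
\end{align*}
so everything reduces to bounding the policy-perturbation term by a multiple of $\|z-z'\|_1$. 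I would do this in two steps. \emph{(i) Stability of the regularized argmax.} Since $\calH$ is $\zeta$-strongly convex in $\|\cdot\|_1$, the objective in \eqref{eqn:regularized-optimal} is $\zeta$-strongly concave, and a standard first-order-optimality argument for strongly concave maximization under a linear perturbation (the norm dual to $\|\cdot\|_1$ being $\|\cdot\|_\infty$) yields, for every $s\in\calS$, $\|\pi_{Q}(\cdot\given s)-\pi_{Q'}(\cdot\given s)\|_\infty\leq\|\pi_{Q}(\cdot\given s)-\pi_{Q'}(\cdot\given s)\|_1\leq\zeta^{-1}\|Q(s,\cdot)-Q'(s,\cdot)\|_\infty$. \emph{(ii) Stability of the optimal $Q$-function in the mean-field term.} Because $Q_z^\star$ is the unique fixed point of the $\gamma$-contraction $\calT_z$,
\begin{align*}
    \|Q_z^\star-Q_{z'}^\star\|_\infty\leq\tfrac{1}{1-\gamma}\bigl\|(\calT_z-\calT_{z'})Q_{z'}^\star\bigr\|_\infty,
\end{align*}
and the right-hand side is controlled by the Lipschitz dependence on $z$ of the reward $r_z$ and the income-shock kernel $\PP_z$, using $\|\frakJ Q_{z'}^\star\|_\infty\leq B$. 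Integrating the pointwise bound from (i) against the policy-induced distribution $\overline\nu$ and relating the occupancy measures that appear to the reference distribution $\widetilde\nu$ through the concentrability constant $D$ of \cref{assum:concentrability} --- which is what introduces the factor $D^2$ --- gives $\bigl\|\pi_{Q_z^\star}-\pi_{Q_{z'}^\star}\bigr\|_{\infty,\overline\nu}\leq\zeta^{-1}JD^2\|z-z'\|_1$, with $J$ the problem-dependent constant entering $\kappa$. Combining with the previous display yields $\|F(z)-F(z')\|_1\leq\bigl(\tfrac{JL_FD^2}{\zeta}+L_F\bigr)\|z-z'\|_1=\kappa\|z-z'\|_1$.

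I expect step (ii) to be the main obstacle: one must propagate the $z$-perturbation through the \emph{optimal} $Q$-function rather than through a fixed policy's value (so the simple simulation-lemma bound has to be combined with the optimality map $\frakJ$), and one must simultaneously keep under control the changes of measure between the various occupancy measures $\mu_{z,h}^\pi$, the policy-induced distribution $\overline\nu$, and the reference $\widetilde\nu$ --- it is precisely this change-of-measure bookkeeping that produces the $D^2$ in $\kappa$, and getting the resulting constant $J$ sharp enough that $\kappa<1$ is a meaningful regime is the delicate part. By comparison, the sensitivity estimate (i) and the concluding Banach argument are routine.
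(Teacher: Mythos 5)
Your proposal is correct and follows essentially the same route as the paper: recast the \QRMFE as a fixed point of $z\mapsto\Phi(\Psi(z,\pi_{Q_z^\star}))$, split the contraction estimate into a $z$-perturbation term and a policy-perturbation term via the triangle inequality, control the latter by strong convexity of $\calH$ (the paper's \cref{lem:lipschitz-policy}) composed with a Lipschitz bound on $z\mapsto Q_z^\star$ (the paper's \cref{lem:lipschitz-action-value}, which also needs \cref{assum:lipschitz-transition}), and conclude by Banach's fixed-point theorem. The only cosmetic difference is that you bound $\|Q_z^\star-Q_{z'}^\star\|$ in sup norm via the $\gamma$-contraction of $\calT_z$, whereas the paper works in $\|\cdot\|_{1,\widetilde\mu}$ and obtains the constant $J=(L_r+\gamma BL_\PP)/(1-\gamma D)$ through the concentrability constant.
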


Further, we show that \cref{algo:iteration} converges to the \QRMFE with a range of external solvers that provides approximate \LSE on the set of convex functions, max-affine functions, and \ICNNs. In particular, as we will see in \cref{thm:main-convex,thm:main-affine-nn}, the \RL framework \texttt{ConcaveHAM} enjoys sub-linear convergence rate to the equilibrium mean-field term $z^\star$ for all three convex regression solvers if the estimation error $\epsilon = 0$.
Proofs for \cref{thm:main-convex,thm:main-affine-nn} can be found in \cref{sec:main-convex-proof,sec:main-affine-nn-proof}.

\begin{theorem}[Convergence for $\scrC_{B,L}$]\label{thm:main-convex}
    Under Assumptions \ref{assump:convex}, \ref{assump:stochastic_concave}, \ref{assum:lipschitz-transition}, \ref{assum:lipschitz-mappings}, and \ref{assum:concentrability}, further assume that $\kappa<1$.
    For $\delta\in(0,1)$, suppose that the number of iterations within each \texttt{CFQI} subroutine satisfies 
    \begin{align*}
        \tau \asymp \frac{\log\left( C_d^{-1} D^{-1/2} \frac{M^{2/(d+2)}}{\log M} \right)}{\log\frac{1}{\gamma}},
    \end{align*}
    and that the sample size $M$ satisfies
    \begin{align*}
        \log M & \gtrsim \max\bigg\{ \frac{B^\frac{4d+8}{d+2}}{C_d M^{\frac{d}{d+2}}} \log\frac{\tau T}{\delta}, \bigg(1 + \log\frac{R_d^*B}{C_d}\bigg)^\frac{2}{d}\bigg\},
    \end{align*}
    where $R_d^* \leq \max\{8dL, 2B+4dL\}$.
    Then with probability at least $1-\delta$, \cref{algo:iteration} with $T$ iterations on $\scrC_{B,L}$ gives mean-field term $z^T$ such that
    \begin{align*}
        & \|z^T-z^\star\|_1 \lesssim \frac{C_d L_F D^{5/2}}{(1-\gamma)(1-\kappa)\zeta} M^{-\frac{2}{d+2}}\log M + \frac{L_F D^{5/2}\epsilon}{(1-\gamma)(1-\kappa)\zeta} + \kappa^T Z + \frac{\kappa^{T-1}}{1-\gamma}. 
    \end{align*}
\end{theorem}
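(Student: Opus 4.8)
The plan is to treat \cref{algo:iteration} as a perturbed version of the idealized fixed-point iteration $z\mapsto G(z)\defeq\Phi(\Psi(z,\pi_{Q_z^\star}))$, to quantify the size of the per-iteration perturbation through the statistical error of the \texttt{CFQI} subroutine, and then to unroll the contraction established in \cref{thm:contraction}. \textbf{Step 1 (outer recursion).} By (the proof of) \cref{thm:contraction}, under $\kappa<1$ the map $G$ is a $\kappa$-contraction in $\|\cdot\|_1$ with unique fixed point $z^\star$. Since \cref{algo:iteration} produces $z^t=\Phi(\Psi(z^{t-1},\widehat\pi^{t-1}))$ with $\widehat\pi^{t-1}=\pi_{\widehat Q^{t-1}}$, inserting the idealized update $G(z^{t-1})=\Phi(\Psi(z^{t-1},\pi_{Q_{z^{t-1}}^\star}))$, applying the triangle inequality, \cref{assum:lipschitz-mappings}, and the contraction gives
\begin{align*}
    \|z^t-z^\star\|_1\;\le\;\kappa\,\|z^{t-1}-z^\star\|_1\;+\;L_F\,\|\widehat\pi^{t-1}-\pi_{Q_{z^{t-1}}^\star}\|_{\infty,\overline\nu}.
\end{align*}
It thus suffices to control the policy gap $\|\widehat\pi^{t-1}-\pi_{Q_{z^{t-1}}^\star}\|_{\infty,\overline\nu}$ for each $t$.

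\textbf{Step 2 (policy gap from $Q$-error).} Both $\widehat\pi^{t-1}=\pi_{\widehat Q^{t-1}}$ and $\pi_{Q_{z^{t-1}}^\star}$ are regularized optimal policies of the form \eqref{eqn:regularized-optimal}; a first-order optimality argument using $\zeta$-strong convexity of $\calH$ shows that $Q\mapsto\pi_Q$ is Lipschitz with modulus of order $1/\zeta$, so that, after integrating over $\overline\nu$ and changing measure to the data marginal $\widetilde\nu$ via the concentrability bound of \cref{assum:concentrability}, $\|\widehat\pi^{t-1}-\pi_{Q_{z^{t-1}}^\star}\|_{\infty,\overline\nu}$ is bounded by $\zeta^{-1}$ times a power of $D$ times an appropriately weighted norm of $\widehat Q^{t-1}-Q_{z^{t-1}}^\star$. \textbf{Step 3 (\texttt{CFQI} error, the technical core).} For any fixed $z$, \cref{lem:concavity-general} (invoking \cref{assump:stochastic_concave} and convexity of $\calG$) guarantees $\calT_z Q\in\scrC_{B,L}$ whenever $Q\in\scrC_{B,L}$, so the least-squares target lies in the hypothesis class and there is no approximation bias. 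The standard \FQI error-propagation bound then controls $\widehat Q^t-Q_{z^t}^\star$ by an iteration term of order $\gamma^\tau B$ plus $(1-\gamma)^{-1}$ times a power of $D$ times $\max_{\ell\in[\tau]}\|\widetilde Q^\ell-\calT_{z^t}\widetilde Q^{\ell-1}\|_{L^2(\widetilde\mu)}$, and the latter is at most the excess risk of the convex least-squares estimator plus the solver slack $\epsilon$. Bounding the excess risk of \LSE over $\scrC_{B,L}$ is where the real work lies: it uses sharp metric-entropy estimates for bounded Lipschitz convex functions on $\calG$ together with a localized empirical-process (peeling/chaining) argument, and after balancing yields, under the stated lower bounds on $M$, a bound of order $C_d M^{-2/(d+2)}\log M+\epsilon$ with probability at least $1-\delta/(\tau T)$ --- the quantities $C_d$, $R_d^\star$, and the $\log M$ factors are precisely the outputs of this calculation. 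A union bound over $t\in[T]$ and $\ell\in[\tau]$ --- legitimate because $z^t$ is measurable with respect to $\calD_{z^1},\dots,\calD_{z^{t-1}}$ while $\calD_{z^t}$ is drawn \iid\ conditionally on $z^t$ --- accounts for the $\log(\tau T/\delta)$ dependence.

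\textbf{Step 4 (choice of $\tau$ and unrolling).} Choosing $\tau$ as prescribed makes $\gamma^\tau B$ no larger than the statistical term, so on the good event the per-iteration perturbation $e_{t-1}\defeq L_F\|\widehat\pi^{t-1}-\pi_{Q_{z^{t-1}}^\star}\|_{\infty,\overline\nu}$ satisfies $e_{t-1}\lesssim \frac{L_F D^{5/2}}{(1-\gamma)\zeta}\bigl(C_d M^{-2/(d+2)}\log M+\epsilon\bigr)$ for $t\ge 2$, while for $t=1$ the crude bound $\|Q^0-Q_{z^0}^\star\|_\infty\le B$ (since $Q^0=0$ and $\widehat\pi^0$ is uniform) together with Step~2 gives $e_0\lesssim 1/(1-\gamma)$. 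Unrolling $\|z^t-z^\star\|_1\le\kappa\|z^{t-1}-z^\star\|_1+e_{t-1}$ over $t=1,\dots,T$ yields $\|z^T-z^\star\|_1\le\kappa^T\|z^0-z^\star\|_1+\kappa^{T-1}e_0+\sum_{t=2}^T\kappa^{T-t}e_{t-1}$, and bounding $\|z^0-z^\star\|_1\le Z$ by \cref{assump:convex}, $\sum_{t\ge 2}\kappa^{T-t}\le(1-\kappa)^{-1}$, and $\kappa^{T-1}e_0\lesssim\kappa^{T-1}/(1-\gamma)$ produces exactly the claimed bound. \textbf{Main obstacle.} The decisive difficulty is Step~3: obtaining the explicit finite-sample excess-risk bound for least squares over the infinite-dimensional class $\scrC_{B,L}$ with all constants tracked, and propagating it through the $\tau$ \texttt{CFQI} iterations with the correct powers of the concentrability coefficient $D$; by contrast the contraction, the strong-convexity-based policy perturbation bound, and the geometric unrolling are comparatively routine given \cref{thm:contraction} and the stated assumptions.
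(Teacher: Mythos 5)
Your proposal follows essentially the same route as the paper: the outer $\kappa$-contraction recursion with the per-iteration perturbation $L_F\|\widehat\pi^{t-1}-\pi_{Q_{z^{t-1}}^\star}\|_{\infty,\overline\nu}$, the strong-convexity/concentrability bound turning that into $\frac{L_F D^2}{\zeta}\|\widehat Q^{t-1}-Q_{z^{t-1}}^\star\|_{\widetilde\mu}$ (the paper's Lemma~\ref{lem:lipschitz-policy}), the \texttt{CFQI} error propagation plus the covering-entropy-based excess-risk bound for \LSE over $\scrC_{B,L}$ (Lemmas~\ref{lem:fqi-error} and~\ref{lem:approx-error}) combined via a union bound over $t$ and $\ell$, and the same geometric unrolling with the crude $1/(1-\gamma)$ bound on the initial term. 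The decomposition, the key lemmas, and the balancing of $\tau$ against the statistical rate all match the paper's argument.
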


Theorem~\ref{thm:main-convex} suggests that
when we take convex regression on $\scrF = \scrC_{B,L}$ for the estimation of optimal value function in \cref{algo:q-iteration}, the estimation $z^T$ of \texttt{ConcaveHAM} converges to $z^\star$ in the rate of $\widetilde O(M^{-\frac{2}{d+2}})$ when the sample size $M$ is large enough. 
Note that $\kappa^T Z + \frac{\kappa^{T-1}}{1-\gamma}$ converges to zero exponentially fast, and the number of required iterations $T$ is in logarithmic order.
It is noteworthy that the number of iterations $\tau$ in \cref{algo:q-iteration} is kept in the order of $\log(M^\frac{2}{d+2}/\log M)$ to guarantee the convergence of \texttt{ConcaveHAM}. 

Due to the large covering entropy of $\scrC_{B,L}$ in the order of $d^d$, the sample complexity with the term $C_d$ in \cref{thm:main-convex} suffers from an almost linear dependence on $d$. 
This issue can be avoided by taking the regression solver $\texttt{LSE}(\scrA_{B,L}^{K,+},\calD_z,\calT_z Q;\epsilon)$ on the max-affine function set $\scrA_{B,L}^{K,+}$ with a specially designed $K$.

\begin{theorem}[Convergence for $\scrN_{B,L}^K$]\label{thm:main-affine-nn}
    Under Assumptions \ref{assump:convex}, \ref{assump:stochastic_concave}, \ref{assum:lipschitz-transition}, \ref{assum:lipschitz-mappings}, and \ref{assum:concentrability}, further assume that $\kappa<1$.
    For $\delta\in(0,1)$, suppose that the number of iterations within each \texttt{CFQI} subroutine satisfies 
    \begin{align*}
        \tau \asymp \frac{\log \frac{M^{4/(d+4)}}{d^2L^2\sqrt{D}}}{\log\frac{1}{\gamma}},
    \end{align*}
    and that the sample size $M$ satisfies
    \begin{align*}
        M \gtrsim \max\bigg\{&\frac{\log \frac{M^{4/(d+4)}}{d^2L^2\sqrt{D}}}{\delta\log\frac{1}{\gamma}}, \frac{\tau T}{\exp(2(d+1)M^{d/(d+4)})\delta},\bigg(\frac{B^3(B+dL)}{d\log M}\bigg)^{\frac{d+4}{2d+4}},
        \bigg(\frac{B^2}{d^2L^2}\log \frac{\tau T}{\delta}\bigg)^{\frac{d+4}{d}}\bigg\}.
    \end{align*}
    then with probability at least $1-\delta$, \cref{algo:iteration} with $T$ iterations on $\scrA_{B,L}^{K,+}$ or $\scrN_{B,L}^K$ gives mean-field term $z^T$ such that
    \begin{align*}
        & \|z^T-z^\star\|_1 \lesssim \frac{C_d' L_F D^{5/2}}{(1-\gamma)(1-\kappa)\zeta} M^{-\frac{4}{d+4}}\log M + \frac{L_F D^{5/2}\epsilon}{(1-\gamma)(1-\kappa)\zeta} + \kappa^T Z + \frac{\kappa^{T-1}}{1-\gamma}.
    \end{align*}
\end{theorem}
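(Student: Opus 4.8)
The plan is to follow the same three-level structure as the proof of \cref{thm:main-convex}, changing only the analysis of the inner \texttt{CFQI} subroutine. At the outer level, set $F(z)\defeq\Phi(\Psi(z,\pi_{Q_z^\star}))$; by the argument behind \cref{thm:contraction}, the assumption $\kappa<1$ makes $F$ a contraction on the complete metric space $(\calZ,\|\cdot\|_1)$ with unique fixed point $z^\star$. Since \cref{algo:iteration} runs the inexact iteration $z^t=\Phi(\Psi(z^{t-1},\widehat\pi^{t-1}))$ with $\widehat\pi^{t-1}=\pi_{\widehat Q^{t-1}}$, I would first control the per-iteration defect $e_t\defeq\|z^t-F(z^{t-1})\|_1$ by bounding the effect of replacing $\pi_{Q_{z^t}^\star}$ by $\pi_{\widehat Q^{t-1}}$: $\Psi$ is $L_\Psi$-Lipschitz and $\Phi$ is $L_\Phi$-Lipschitz (\cref{assum:lipschitz-mappings}), and the regularized-policy map $Q\mapsto\pi_Q$ of \eqref{eqn:regularized-optimal} is $1/\zeta$-Lipschitz from an appropriate $L_2$-type norm on $Q$ into $\|\cdot\|_{\infty,\overline\nu}$ because $\calH$ is $\zeta$-strongly convex. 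Unrolling the contraction then yields $\|z^T-z^\star\|_1\le\kappa^T\|z^0-z^\star\|_1+\frac{1}{1-\kappa}\max_t e_t$, with $\|z^0-z^\star\|_1\le Z$; the extra $\kappa^{T-1}/(1-\gamma)$ term in the statement absorbs the first iteration, which uses the trivial initialization $Q^0=0$, $\widehat\pi^0=\mathsf{Unif}$.

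The core of the argument is a high-probability single-step guarantee for $\texttt{LSE}(\scrA_{B,L}^{K,+},\calD_z,\calT_z Q;\epsilon)$. Fix $z=z^t$ and a convex $B$-bounded $L$-Lipschitz $Q$; by \cref{lem:concavity-general} together with \cref{assump:stochastic_concave}, the target $\calT_z Q$ lies in $\scrC_{B,L}$, and $\frakJ$ preserves boundedness, convexity and the Lipschitz constant on $\calG$, so the regression target at every step of \texttt{CFQI} is convex, $B$-bounded, and $L$-Lipschitz on the $d$-dimensional convex set $\calG$. I would then split the single-step excess risk $\|\widetilde Q^\ell-\calT_z\widetilde Q^{\ell-1}\|_{2,\widetilde\mu}^2$ into (i) an approximation term --- the best $L_\infty$ (hence $L_2(\widetilde\mu)$) error of fitting a convex $L$-Lipschitz function on $\calG$ by a $K$-max-affine function, which decays polynomially in $K$ with constants in $B$ and $dL$ --- and (ii) an estimation term from a localized empirical-process bound over $\scrA_{B,L}^{K,+}$, governed by the metric entropy of $K$-max-affine functions (pseudo-dimension $\lesssim Kd$), of order $\sqrt{Kd\log M/M}$, plus the $\epsilon$ optimization slack. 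Choosing $K\asymp M^{d/(d+4)}$ to equalize (i) and (ii) --- which is exactly what pins down the stated conditions on $\tau$ and $M$, including the $\exp(2(d+1)M^{d/(d+4)})$ factor coming from union-bounding over a covering of the $K(d+1)$-parameter family --- produces, after a union bound over all $\tau T$ calls to \texttt{LSE}, a single-step bound of order $C_d' M^{-4/(d+4)}\log M+\epsilon$ with probability $1-\delta$. The \ICNN\ case is then immediate: by \cref{lem:exact-representation}, $\scrN_{B,L}^K$ equals $\scrA_{B,L}^{K,+}$ as a function class, so the identical bound applies verbatim.

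Given the single-step bound, I would propagate errors through the $\tau$ value iterations of \texttt{CFQI} in the standard \FQI\ fashion: telescoping the $\gamma$-contraction of $\calT_z$ gives $\|\widehat Q^t-Q_{z^t}^\star\|\lesssim\gamma^\tau B+\frac{\sqrt D}{1-\gamma}\cdot(\text{single-step error})$, where \cref{assum:concentrability} supplies the concentrability constant $D$ that converts from the sampling distribution $\widetilde\mu$ to the occupancy measures of the (possibly non-stationary) greedy policies and of $\pi_{z^t}^\star$; the prescribed $\tau$ makes $\gamma^\tau B$ of the same order as the statistical term. Converting this $Q$-error into a policy error through $\zeta$-strong convexity of $\calH$, and then into the mean-field defect $e_t$ through $L_\PP$ (\cref{assum:lipschitz-transition}) and $L_\Psi,L_\Phi$ (\cref{assum:lipschitz-mappings}), gives $e_t\lesssim\frac{L_F D^{5/2}}{(1-\gamma)\zeta}\cdot(C_d' M^{-4/(d+4)}\log M+\epsilon)$ after collecting the residual powers of $D$ from the occupancy-measure changes; substituting into the unrolled contraction recursion and summing the geometric series with ratio $\kappa$ yields the claimed bound together with the transient $\kappa^T Z+\kappa^{T-1}/(1-\gamma)$.

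The main obstacle is item (ii): establishing the estimation-error bound over $\scrA_{B,L}^{K,+}$ with the correct joint dependence on $d$ and $K$ --- precisely the feature that replaces the $d^d$-type covering entropy of $\scrC_{B,L}$ used in \cref{thm:main-convex} and improves the rate from $M^{-2/(d+2)}$ to $M^{-4/(d+4)}$ --- and doing so uniformly over the random, iteration-dependent target $\calT_z\widetilde Q^{\ell-1}$ (itself the previous \texttt{LSE} output composed with $\frakJ$), via chaining and localization. A secondary, bookkeeping-heavy step is tracking how the leftover factors of $B$, $dL$, and $D$ aggregate into the constant $C_d'=(d+1)B^4+(B+dL)dL$ and the $D^{5/2}$ prefactor.
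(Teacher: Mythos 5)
Your proposal follows essentially the same route as the paper: the outer contraction recursion from \cref{thm:contraction} unrolled to $\kappa^T Z+\kappa^{T-1}/(1-\gamma)$ plus a geometric sum of per-iteration defects, the inner \FQI{} error propagation with the concentrability factor $\sqrt{D}$, a single-step max-affine regression bound obtained by balancing the $(d+1)K$-type covering entropy of $\scrA_{B,L}^{K,+}$ against the $K^{-2/d}$ approximation error at $K\asymp M^{d/(d+4)}$, and the reduction of the \ICNN{} case to the max-affine case via \cref{lem:exact-representation}. This matches the paper's decomposition into \cref{lem:fqi-error}, \cref{lem:approx-error-affine}, and \cref{lem:fqi-approx-affine}, so the strategy is correct and not a distinct alternative.
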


With the regression solver on max-affine function set $\scrA_{B,L}^K$, \texttt{ConcaveHAM} converges in the rate of $\widetilde O(M^{-\frac{4}{d+4}})$ when the sample size $M$ is large enough. 
Compared to \cref{thm:main-convex}, it not only enjoys a better rate but also enjoys a better constant in $d$, compared to the one in \cref{thm:main-convex} for $\scrC_{B,L}$.
\texttt{ConcaveHAM} also enjoys the same rate of $\widetilde O(M^{-\frac{4}{d+4}})$ for large $M$ when paired with regression on the set $\scrN_{B,L}^K$ of \ICNNs, and this is a natural statement following \cref{lem:exact-representation} and assuming the learning algorithms $\texttt{LSE}(\scrN_{B,L}^K,\calD_z,\calT_z Q;\epsilon)$ approach the global optimum on minimizing the empirical risk.

Therefore, the \texttt{ConvHAM} algorithm is provably convergent to the \QRMFE of the mean-field game at a sublinear rate, which provides a data-driven approach to a large class of heterogeneous agent models that does not require prior knowledge of the income shock transitions.
By taking advantage of convex regression, it is capable of capturing the economic intuitions, and at the same time, providing a continuous solution on value functions and avoiding the curse of dimensionality in discretization.

\section{Numerical Experiments}

\begin{figure*}[ht]
    \centering
    \includegraphics[width=0.45\linewidth]{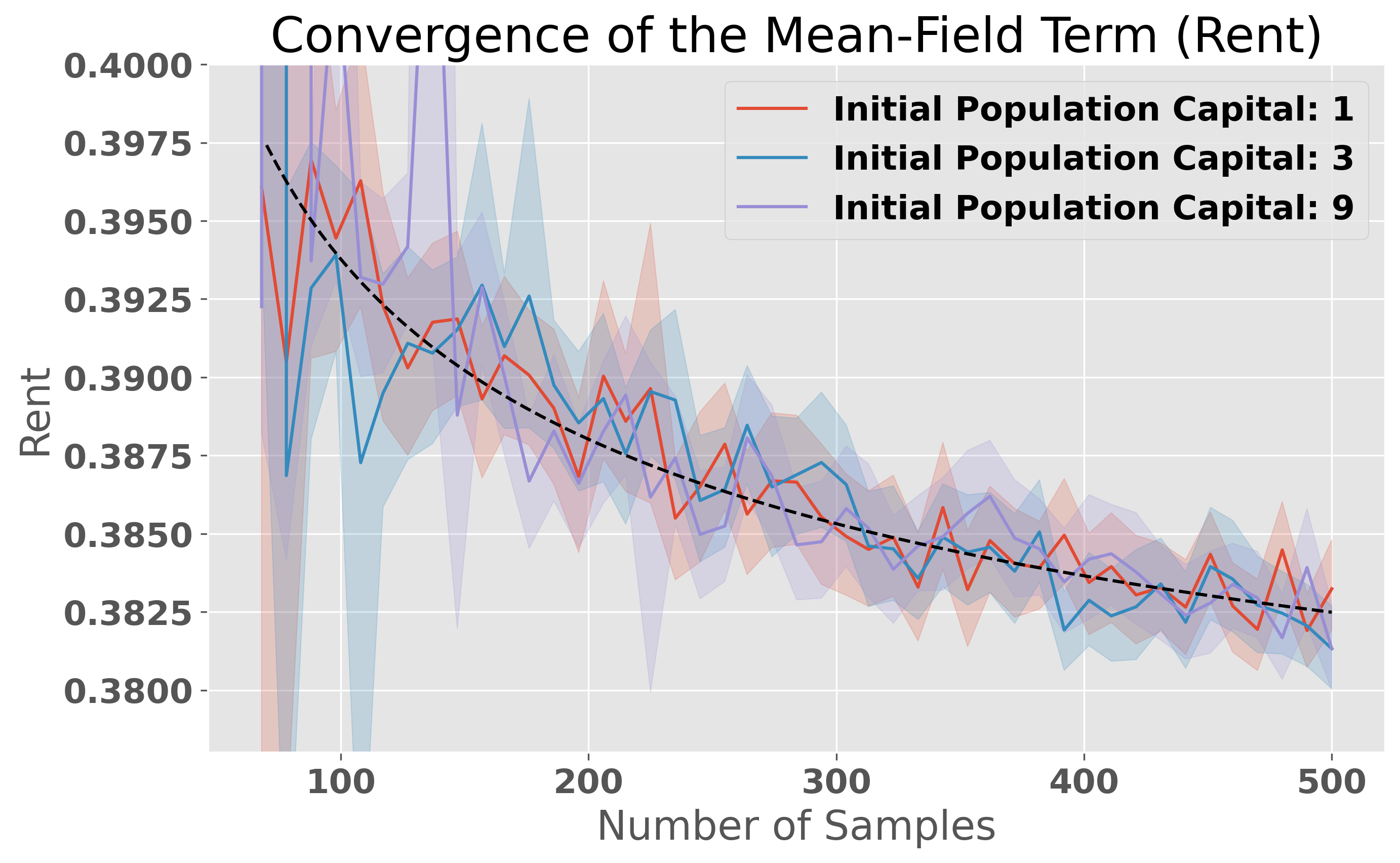}
    \includegraphics[width=0.45\linewidth]{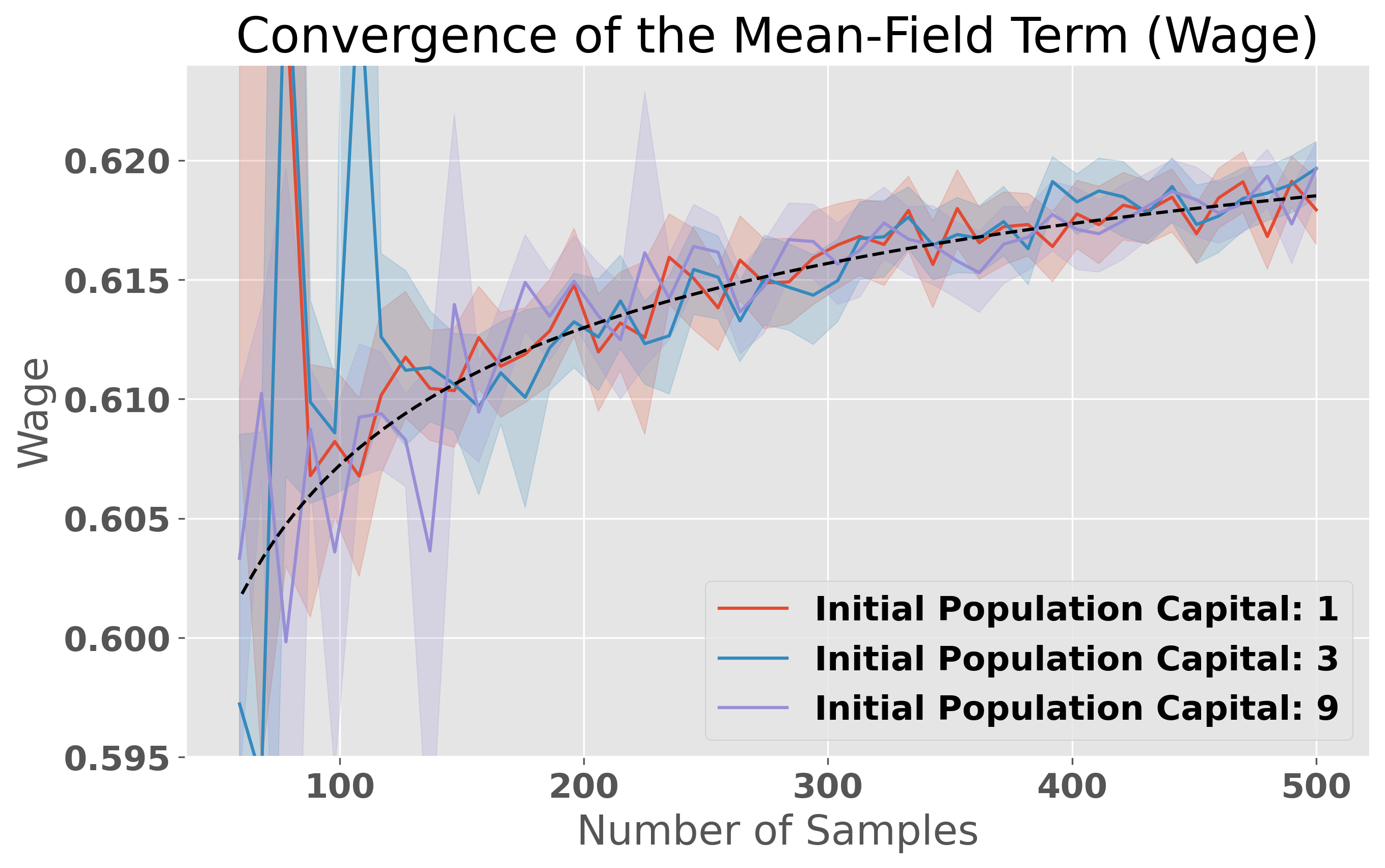}
    \caption{Convergence of the mean-field terms: rent $\varsigma$ and wage $\omega$. The red, blue, and purple lines show the average outputs of the mean-field terms by \texttt{ConvHAM} after 15 rounds of iterations with different number of samples $M$ provided at each iteration. The corresponding confidence bands are plotted over 50 random trials. The black dotted line indicates the convergence rate of order $\calO(M^{-2/5})$.}
    \label{fig}
\end{figure*}

To corroborate our theoretical results, we apply the proposed \texttt{ConvHAM} algorithm to some of the existing heterogeneous agent models in macroeconomics and show the convergence rate on mean-field terms that matches our theoretical guarantees.

\paragraph{Experiment setup.}
Our simulation features a canonical setup of the Aiyagari model where market incompleteness and idiosyncratic income shocks give rise to agent heterogeneity. 

Under this setup, the household has two levels of labor opportunities $n\in \{0, 1\}$, \ie, employment if $n=1$ and unemployment if $n=0$.
The utility function is logarithmic with respect to the amount of spending $\chi$ that is restricted by the household budget constraint $\chi \leq (1+\varsigma-\delta) b + \omega n - a$, where the mean-field term $z = (\omega, \varsigma)$ represents the wage and the rent on the market while $\delta$ denotes a depreciation constant. We restrict feasible capital holding $0\leq b \leq 20$ and the representative firm takes the Cobb-Douglas production function $F(b, n) = b^\alpha n^{1-\alpha}$ for any aggregated population capital $b$ and population labor $n$. The training data are sampled from a uniform distribution on feasible states $s=(b,n)$ following a uniform policy $\pi_0$.
 
\paragraph{Simulation results.}
Our method is able to achieve stable convergence in a few dozens of rounds of \texttt{ConvHAM} with moderate regularization, and the convergence rate matches our theoretical guarantees as illustrated in Figure~\ref{fig}.

More specifically, we aggregate the numerical results  over $50$ random trials for three initialization of population capital, which is determined by the initial mean-field terms, with a range of sample sizes $M$ for the concave value function estimation. 
We plot the mean-field terms (wage and rent) against different sample sizes for 15 rounds of \texttt{ConcaveHAM}, and compare them with the theoretically predicted convergence rate of $M^{-2/5}$ in terms of the sample size.
The numerical results verify that 
(i)~our algorithm achieves stable convergence, and
(ii) the convergence rate matches our theoretical guarantees.

\section{Related Work}\label{sec:related}

Heterogeneous agent models are a fundamental class of macroeconomic models that aim to understand the dynamics of individual participants of the market and how their population behaviors impact the overall performance of the economy \citep{hommes2006heterogeneous}. These models often consider random economic shocks to individual participants, such as aggregate and exogenous risk \citep{fernandez2019financial}, and focus on understanding the \RCE between households and firms \citep{kuhn2013recursive,light2020uniqueness}.
Machine learning has gained momentum in analyzing a range of economic models \citep{lepetyuk2020us,fernandez2020solving}.
Especially, deep \RL has claimed success in solving heterogeneous agent models with aggregate shocks and discrete-continuous choice dynamic models with heterogeneous agents \citep{han2021deepham,maliar2022deep}; it is also proven effective in finding micro-founded general equilibria for macroeconomic models with many agents \citep{curry2022finding}.

\MFG characterizes the decision-making dynamics of infinitely many players through a mean-field term, and it provides a natural extension to multi-agent games that admits simpler asymptotic analysis \citep{lasry2007mean,gueant2011mean}. 
Convergence of \RL algorithms to equilibria of \MFGs have been analyzed in recent works \citep{guo2019learning,subramanian2019reinforcement}. 
Particularly, \citet{guo2022entropy} studied \MFGs with entropy regularization on discrete action space and finite time horizon. 

Shape-constrained non-parametric modeling is a powerful tool that enables reliable estimation of special function classes, and it is especially useful for economic models that often demand shape constraints on utility functions \citep{blundell2003nonparametric}. 
In particular, theoretical guarantees have been laid out for convex and max-affine shape constraint regression in a thread of literature \citep{lim2012consistency,han2016multivariate,mazumder2019computational,seijo2011nonparametric,balazs2015near}.
An architecture of \ICNN has also been considered for the representation of convex functions with neural networks~\citep{amos2017input,chen2018optimal}.

\section{Conclusion}

We study a generalized Aiyagari model as an \MFG with concave shape constraint imposed on the agent's value function.
We propose a data-driven \RL framework \texttt{ConcaveHAM} for solving the \MFG that dramatically improves the computational efficiency with the help of a variety of convex regression procedures. The algorithm converges to the regularized recursive competitive equilibrium of the macroeconomic model at a sub-linear rate and can readily incorporate other forms of shape constraints.
To the best of our knowledge, this is the first data-driven algorithm for heterogeneous agent models with theoretical guarantees.

\subsubsection*{Acknowledgements}

Zhaoran Wang acknowledges National Science Foundation (Awards 2048075, 2008827, 2015568, 1934931), Simons Institute (Theory of Reinforcement Learning), Amazon, J.P. Morgan, and Two Sigma for their support.
Zhuoran Yang acknowledges Simons Institute (Theory of Reinforcement Learning) for the support.

\bibliographystyle{ims}
\bibliography{references}

\begin{thebibliography}{77}
\expandafter\ifx\csname natexlab\endcsname\relax\def\natexlab#1{#1}\fi
\expandafter\ifx\csname url\endcsname\relax
  \def\url#1{\texttt{#1}}\fi
\expandafter\ifx\csname urlprefix\endcsname\relax\def\urlprefix{URL }\fi

\bibitem[{Achdou et~al.(2014{\natexlab{a}})Achdou, Buera, Lasry, Lions and
  Moll}]{achdou2014partial}
\textsc{Achdou, Y.}, \textsc{Buera, F.~J.}, \textsc{Lasry, J.-M.},
  \textsc{Lions, P.-L.} and \textsc{Moll, B.} (2014{\natexlab{a}}).
\newblock Partial differential equation models in macroeconomics.
\newblock \textit{Philosophical Transactions of the Royal Society A:
  Mathematical, Physical and Engineering Sciences} \textbf{372} 20130397.

\bibitem[{Achdou et~al.(2014{\natexlab{b}})Achdou, Han, Lasry, Lions and
  Moll}]{achdou2014heterogeneous}
\textsc{Achdou, Y.}, \textsc{Han, J.}, \textsc{Lasry, J.-M.}, \textsc{Lions,
  P.-L.} and \textsc{Moll, B.} (2014{\natexlab{b}}).
\newblock Heterogeneous agent models in continuous time.
\newblock \textit{Preprint} \textbf{14}.

\bibitem[{Achdou et~al.(2022)Achdou, Han, Lasry, Lions and
  Moll}]{achdou2022income}
\textsc{Achdou, Y.}, \textsc{Han, J.}, \textsc{Lasry, J.-M.}, \textsc{Lions,
  P.-L.} and \textsc{Moll, B.} (2022).
\newblock Income and wealth distribution in macroeconomics: A continuous-time
  approach.
\newblock \textit{The review of economic studies} \textbf{89} 45--86.

\bibitem[{Aiyagari(1994)}]{aiyagari1994uninsured}
\textsc{Aiyagari, S.~R.} (1994).
\newblock Uninsured idiosyncratic risk and aggregate saving.
\newblock \textit{The Quarterly Journal of Economics} \textbf{109} 659--684.

\bibitem[{Amos et~al.(2017)Amos, Xu and Kolter}]{amos2017input}
\textsc{Amos, B.}, \textsc{Xu, L.} and \textsc{Kolter, J.~Z.} (2017).
\newblock Input convex neural networks.
\newblock In \textit{International Conference on Machine Learning}. PMLR.

\bibitem[{Antos et~al.(2007)Antos, Munos and Szepesv{\'a}ri}]{antos2007fitted}
\textsc{Antos, A.}, \textsc{Munos, R.} and \textsc{Szepesv{\'a}ri, C.} (2007).
\newblock Fitted q-iteration in continuous action-space mdps .

\bibitem[{Arora et~al.(2019)Arora, Du, Hu, Li and Wang}]{arora2019fine}
\textsc{Arora, S.}, \textsc{Du, S.}, \textsc{Hu, W.}, \textsc{Li, Z.} and
  \textsc{Wang, R.} (2019).
\newblock Fine-grained analysis of optimization and generalization for
  overparameterized two-layer neural networks.
\newblock In \textit{International Conference on Machine Learning}. PMLR.

\bibitem[{Ayoub et~al.(2020)Ayoub, Jia, Szepesvari, Wang and
  Yang}]{ayoub2020model}
\textsc{Ayoub, A.}, \textsc{Jia, Z.}, \textsc{Szepesvari, C.}, \textsc{Wang,
  M.} and \textsc{Yang, L.} (2020).
\newblock Model-based reinforcement learning with value-targeted regression.
\newblock In \textit{International Conference on Machine Learning}. PMLR.

\bibitem[{Bal{\'a}zs et~al.(2015)Bal{\'a}zs, Gy{\"o}rgy and
  Szepesv{\'a}ri}]{balazs2015near}
\textsc{Bal{\'a}zs, G.}, \textsc{Gy{\"o}rgy, A.} and \textsc{Szepesv{\'a}ri,
  C.} (2015).
\newblock Near-optimal max-affine estimators for convex regression.
\newblock In \textit{Artificial Intelligence and Statistics}. PMLR.

\bibitem[{Bellman and Dreyfus(2015)}]{bellman2015applied}
\textsc{Bellman, R.~E.} and \textsc{Dreyfus, S.~E.} (2015).
\newblock \textit{Applied dynamic programming}, vol. 2050.
\newblock Princeton university press.

\bibitem[{Bertsekas(2012)}]{bertsekas2012dynamic}
\textsc{Bertsekas, D.} (2012).
\newblock \textit{Dynamic programming and optimal control: Volume I}, vol.~1.
\newblock Athena scientific.

\bibitem[{Bewley(1986)}]{bewley1986stationary}
\textsc{Bewley, T.} (1986).
\newblock Stationary monetary equilibrium with a continuum of independently
  fluctuating consumers.
\newblock \textit{Contributions to mathematical economics in honor of
  G{\'e}rard Debreu} \textbf{79}.

\bibitem[{Blanchet et~al.(2019)Blanchet, Glynn, Yan and
  Zhou}]{blanchet2019multivariate}
\textsc{Blanchet, J.}, \textsc{Glynn, P.~W.}, \textsc{Yan, J.} and
  \textsc{Zhou, Z.} (2019).
\newblock Multivariate distributionally robust convex regression under absolute
  error loss.
\newblock \textit{Advances in Neural Information Processing Systems}
  \textbf{32}.

\bibitem[{Blundell et~al.(2003)Blundell, Browning and
  Crawford}]{blundell2003nonparametric}
\textsc{Blundell, R.~W.}, \textsc{Browning, M.} and \textsc{Crawford, I.~A.}
  (2003).
\newblock Nonparametric engel curves and revealed preference.
\newblock \textit{Econometrica} \textbf{71} 205--240.

\bibitem[{Cai et~al.(2020)Cai, Yang, Jin and Wang}]{cai2020provably}
\textsc{Cai, Q.}, \textsc{Yang, Z.}, \textsc{Jin, C.} and \textsc{Wang, Z.}
  (2020).
\newblock Provably efficient exploration in policy optimization.
\newblock In \textit{International Conference on Machine Learning}. PMLR.

\bibitem[{Chen and Jiang(2019)}]{chen2019information}
\textsc{Chen, J.} and \textsc{Jiang, N.} (2019).
\newblock Information-theoretic considerations in batch reinforcement learning.
\newblock In \textit{International Conference on Machine Learning}. PMLR.

\bibitem[{Chen et~al.(2021{\natexlab{a}})Chen, Min, Belkin and
  Karbasi}]{chen2021multiple}
\textsc{Chen, L.}, \textsc{Min, Y.}, \textsc{Belkin, M.} and \textsc{Karbasi,
  A.} (2021{\natexlab{a}}).
\newblock Multiple descent: Design your own generalization curve.
\newblock \textit{Advances in Neural Information Processing Systems}
  \textbf{34} 8898--8912.

\bibitem[{Chen et~al.(2020)Chen, Min, Zhang and Karbasi}]{chen2020more}
\textsc{Chen, L.}, \textsc{Min, Y.}, \textsc{Zhang, M.} and \textsc{Karbasi,
  A.} (2020).
\newblock More data can expand the generalization gap between adversarially
  robust and standard models.
\newblock In \textit{International Conference on Machine Learning}. PMLR.

\bibitem[{Chen et~al.(2021{\natexlab{b}})Chen, Dong and Wang}]{chen2021primal}
\textsc{Chen, Y.}, \textsc{Dong, J.} and \textsc{Wang, Z.}
  (2021{\natexlab{b}}).
\newblock A primal-dual approach to constrained markov decision processes.
\newblock \textit{arXiv preprint arXiv:2101.10895} .

\bibitem[{Chen et~al.(2018)Chen, Shi and Zhang}]{chen2018optimal}
\textsc{Chen, Y.}, \textsc{Shi, Y.} and \textsc{Zhang, B.} (2018).
\newblock Optimal control via neural networks: A convex approach.
\newblock \textit{arXiv preprint arXiv:1805.11835} .

\bibitem[{Christmann and Steinwart(2007)}]{christmann2007consistency}
\textsc{Christmann, A.} and \textsc{Steinwart, I.} (2007).
\newblock Consistency and robustness of kernel-based regression in convex risk
  minimization .

\bibitem[{Curry et~al.(2022)Curry, Trott, Phade, Bai and
  Zheng}]{curry2022finding}
\textsc{Curry, M.}, \textsc{Trott, A.}, \textsc{Phade, S.}, \textsc{Bai, Y.}
  and \textsc{Zheng, S.} (2022).
\newblock Finding general equilibria in many-agent economic simulations using
  deep reinforcement learning.
\newblock \textit{CoRR} \textbf{abs/2201.01163}.
\newline\urlprefix\url{https://arxiv.org/abs/2201.01163}

\bibitem[{Du et~al.(2021)Du, Kakade, Lee, Lovett, Mahajan, Sun and
  Wang}]{du2021bilinear}
\textsc{Du, S.}, \textsc{Kakade, S.}, \textsc{Lee, J.}, \textsc{Lovett, S.},
  \textsc{Mahajan, G.}, \textsc{Sun, W.} and \textsc{Wang, R.} (2021).
\newblock Bilinear classes: A structural framework for provable generalization
  in rl.
\newblock In \textit{International Conference on Machine Learning}. PMLR.

\bibitem[{Du et~al.(2019)Du, Hou, Salakhutdinov, Poczos, Wang and
  Xu}]{du2019graph}
\textsc{Du, S.~S.}, \textsc{Hou, K.}, \textsc{Salakhutdinov, R.~R.},
  \textsc{Poczos, B.}, \textsc{Wang, R.} and \textsc{Xu, K.} (2019).
\newblock Graph neural tangent kernel: Fusing graph neural networks with graph
  kernels.
\newblock \textit{Advances in neural information processing systems}
  \textbf{32}.

\bibitem[{Dubey and Pentland(2021)}]{dubey2021provably}
\textsc{Dubey, A.} and \textsc{Pentland, A.} (2021).
\newblock Provably efficient cooperative multi-agent reinforcement learning
  with function approximation.
\newblock \textit{arXiv preprint arXiv:2103.04972} .

\bibitem[{Fei and Xu(2022{\natexlab{a}})}]{feicascaded}
\textsc{Fei, Y.} and \textsc{Xu, R.} (2022{\natexlab{a}}).
\newblock Cascaded gaps: Towards gap-dependent regret for risk-sensitive
  reinforcement learning.
\newblock \textit{arXiv preprint arXiv:2203.03110} .

\bibitem[{Fei and Xu(2022{\natexlab{b}})}]{fei2022cascaded}
\textsc{Fei, Y.} and \textsc{Xu, R.} (2022{\natexlab{b}}).
\newblock Cascaded gaps: Towards logarithmic regret for risk-sensitive
  reinforcement learning.
\newblock In \textit{International Conference on Machine Learning}. PMLR.

\bibitem[{Fern{\'a}ndez-Villaverde et~al.(2019)Fern{\'a}ndez-Villaverde,
  Hurtado and Nuno}]{fernandez2019financial}
\textsc{Fern{\'a}ndez-Villaverde, J.}, \textsc{Hurtado, S.} and \textsc{Nuno,
  G.} (2019).
\newblock Financial frictions and the wealth distribution.
\newblock Tech. rep., National Bureau of Economic Research.

\bibitem[{Fernandez-Villaverde et~al.(2020)Fernandez-Villaverde, Nuno,
  Sorg-Langhans and Vogler}]{fernandez2020solving}
\textsc{Fernandez-Villaverde, J.}, \textsc{Nuno, G.}, \textsc{Sorg-Langhans,
  G.} and \textsc{Vogler, M.} (2020).
\newblock Solving high-dimensional dynamic programming problems using deep
  learning.
\newblock \textit{Unpublished working paper} .

\bibitem[{Fonteneau et~al.(2013)Fonteneau, Murphy, Wehenkel and
  Ernst}]{fonteneau2013batch}
\textsc{Fonteneau, R.}, \textsc{Murphy, S.~A.}, \textsc{Wehenkel, L.} and
  \textsc{Ernst, D.} (2013).
\newblock Batch mode reinforcement learning based on the synthesis of
  artificial trajectories.
\newblock \textit{Annals of operations research} \textbf{208} 383--416.

\bibitem[{Goeree et~al.(2020)Goeree, Holt and Palfrey}]{goeree2020stochastic}
\textsc{Goeree, J.~K.}, \textsc{Holt, C.~A.} and \textsc{Palfrey, T.~R.}
  (2020).
\newblock Stochastic game theory for social science: A primer on quantal
  response equilibrium.
\newblock In \textit{Handbook of Experimental Game Theory}. Edward Elgar
  Publishing.

\bibitem[{Grunewalder et~al.(2012)Grunewalder, Lever, Baldassarre, Pontil and
  Gretton}]{grunewalder2012modelling}
\textsc{Grunewalder, S.}, \textsc{Lever, G.}, \textsc{Baldassarre, L.},
  \textsc{Pontil, M.} and \textsc{Gretton, A.} (2012).
\newblock Modelling transition dynamics in mdps with rkhs embeddings.
\newblock \textit{arXiv preprint arXiv:1206.4655} .

\bibitem[{Gu{\'e}ant et~al.(2011)Gu{\'e}ant, Lasry and Lions}]{gueant2011mean}
\textsc{Gu{\'e}ant, O.}, \textsc{Lasry, J.-M.} and \textsc{Lions, P.-L.}
  (2011).
\newblock Mean field games and applications.
\newblock In \textit{Paris-Princeton lectures on mathematical finance 2010}.
  Springer, 205--266.

\bibitem[{Guo et~al.(2019)Guo, Hu, Xu and Zhang}]{guo2019learning}
\textsc{Guo, X.}, \textsc{Hu, A.}, \textsc{Xu, R.} and \textsc{Zhang, J.}
  (2019).
\newblock Learning mean-field games.
\newblock \textit{Advances in Neural Information Processing Systems}
  \textbf{32}.

\bibitem[{Guo et~al.(2020)Guo, Hu, Xu and Zhang}]{guo2020general}
\textsc{Guo, X.}, \textsc{Hu, A.}, \textsc{Xu, R.} and \textsc{Zhang, J.}
  (2020).
\newblock A general framework for learning mean-field games.
\newblock \textit{arXiv preprint arXiv:2003.06069} .

\bibitem[{Guo et~al.(2022)Guo, Xu and Zariphopoulou}]{guo2022entropy}
\textsc{Guo, X.}, \textsc{Xu, R.} and \textsc{Zariphopoulou, T.} (2022).
\newblock Entropy regularization for mean field games with learning.
\newblock \textit{Mathematics of Operations Research} .

\bibitem[{Gy{\"o}rfi et~al.(2002)Gy{\"o}rfi, Kohler, Krzy{\.z}ak and
  Walk}]{gyorfi2002distribution}
\textsc{Gy{\"o}rfi, L.}, \textsc{Kohler, M.}, \textsc{Krzy{\.z}ak, A.} and
  \textsc{Walk, H.} (2002).
\newblock \textit{A distribution-free theory of nonparametric regression},
  vol.~1.
\newblock Springer.

\bibitem[{Han et~al.(2021)Han, Yang et~al.}]{han2021deepham}
\textsc{Han, J.}, \textsc{Yang, Y.} \textsc{et~al.} (2021).
\newblock Deepham: A global solution method for heterogeneous agent models with
  aggregate shocks .

\bibitem[{Han and Wellner(2016)}]{han2016multivariate}
\textsc{Han, Q.} and \textsc{Wellner, J.~A.} (2016).
\newblock Multivariate convex regression: global risk bounds and adaptation.
\newblock \textit{arXiv preprint arXiv:1601.06844} .

\bibitem[{He et~al.(2022)He, Wang, Min and Gu}]{he2022simple}
\textsc{He, J.}, \textsc{Wang, T.}, \textsc{Min, Y.} and \textsc{Gu, Q.}
  (2022).
\newblock A simple and provably efficient algorithm for asynchronous federated
  contextual linear bandits.
\newblock \textit{Advances in neural information processing systems} .

\bibitem[{Hommes(2006)}]{hommes2006heterogeneous}
\textsc{Hommes, C.~H.} (2006).
\newblock Heterogeneous agent models in economics and finance.
\newblock \textit{Handbook of computational economics} \textbf{2} 1109--1186.

\bibitem[{Huggett(1993)}]{huggett1993risk}
\textsc{Huggett, M.} (1993).
\newblock The risk-free rate in heterogeneous-agent incomplete-insurance
  economies.
\newblock \textit{Journal of economic Dynamics and Control} \textbf{17}
  953--969.

\bibitem[{Jin et~al.(2020)Jin, Yang, Wang and Jordan}]{jin2020provably}
\textsc{Jin, C.}, \textsc{Yang, Z.}, \textsc{Wang, Z.} and \textsc{Jordan,
  M.~I.} (2020).
\newblock Provably efficient reinforcement learning with linear function
  approximation.
\newblock In \textit{Conference on Learning Theory}. PMLR.

\bibitem[{Kuhn(2013)}]{kuhn2013recursive}
\textsc{Kuhn, M.} (2013).
\newblock Recursive equilibria in an aiyagari-style economy with permanent
  income shocks.
\newblock \textit{International Economic Review} \textbf{54} 807--835.

\bibitem[{Kumar et~al.(2019)Kumar, Fu, Soh, Tucker and
  Levine}]{kumar2019stabilizing}
\textsc{Kumar, A.}, \textsc{Fu, J.}, \textsc{Soh, M.}, \textsc{Tucker, G.} and
  \textsc{Levine, S.} (2019).
\newblock Stabilizing off-policy q-learning via bootstrapping error reduction.
\newblock \textit{Advances in Neural Information Processing Systems}
  \textbf{32}.

\bibitem[{Lasry and Lions(2007)}]{lasry2007mean}
\textsc{Lasry, J.-M.} and \textsc{Lions, P.-L.} (2007).
\newblock Mean field games.
\newblock \textit{Japanese journal of mathematics} \textbf{2} 229--260.

\bibitem[{Lepetyuk et~al.(2020)Lepetyuk, Maliar and Maliar}]{lepetyuk2020us}
\textsc{Lepetyuk, V.}, \textsc{Maliar, L.} and \textsc{Maliar, S.} (2020).
\newblock When the us catches a cold, canada sneezes: a lower-bound tale told
  by deep learning.
\newblock \textit{Journal of Economic Dynamics and Control} \textbf{117}
  103926.

\bibitem[{Light(2020)}]{light2020uniqueness}
\textsc{Light, B.} (2020).
\newblock Uniqueness of equilibrium in a bewley--aiyagari model.
\newblock \textit{Economic Theory} \textbf{69} 435--450.

\bibitem[{Light and Weintraub(2022)}]{light2022mean}
\textsc{Light, B.} and \textsc{Weintraub, G.~Y.} (2022).
\newblock Mean field equilibrium: uniqueness, existence, and comparative
  statics.
\newblock \textit{Operations Research} \textbf{70} 585--605.

\bibitem[{Lim and Glynn(2012)}]{lim2012consistency}
\textsc{Lim, E.} and \textsc{Glynn, P.~W.} (2012).
\newblock Consistency of multidimensional convex regression.
\newblock \textit{Operations Research} \textbf{60} 196--208.

\bibitem[{Ling et~al.(2019)Ling, Xu and Bandeira}]{ling2019landscape}
\textsc{Ling, S.}, \textsc{Xu, R.} and \textsc{Bandeira, A.~S.} (2019).
\newblock On the landscape of synchronization networks: A perspective from
  nonconvex optimization.
\newblock \textit{SIAM Journal on Optimization} \textbf{29} 1879--1907.

\bibitem[{Lu et~al.(2023)Lu, Min, Wang and Yang}]{lu2023pessimism}
\textsc{Lu, M.}, \textsc{Min, Y.}, \textsc{Wang, Z.} and \textsc{Yang, Z.}
  (2023).
\newblock Pessimism in the face of confounders: Provably efficient offline
  reinforcement learning in partially observable markov decision processes.
\newblock In \textit{International Conference on Learning Representation}.

\bibitem[{Maliar and Maliar(2022)}]{maliar2022deep}
\textsc{Maliar, L.} and \textsc{Maliar, S.} (2022).
\newblock Deep learning classification: Modeling discrete labor choice.
\newblock \textit{Journal of Economic Dynamics and Control} \textbf{135}
  104295.

\bibitem[{Mazumder et~al.(2019)Mazumder, Choudhury, Iyengar and
  Sen}]{mazumder2019computational}
\textsc{Mazumder, R.}, \textsc{Choudhury, A.}, \textsc{Iyengar, G.} and
  \textsc{Sen, B.} (2019).
\newblock A computational framework for multivariate convex regression and its
  variants.
\newblock \textit{Journal of the American Statistical Association} \textbf{114}
  318--331.

\bibitem[{Mei et~al.(2020)Mei, Xiao, Szepesvari and Schuurmans}]{mei2020global}
\textsc{Mei, J.}, \textsc{Xiao, C.}, \textsc{Szepesvari, C.} and
  \textsc{Schuurmans, D.} (2020).
\newblock On the global convergence rates of softmax policy gradient methods.
\newblock In \textit{International Conference on Machine Learning}. PMLR.

\bibitem[{Min et~al.(2021{\natexlab{a}})Min, Chen and Karbasi}]{min2021curious}
\textsc{Min, Y.}, \textsc{Chen, L.} and \textsc{Karbasi, A.}
  (2021{\natexlab{a}}).
\newblock The curious case of adversarially robust models: More data can help,
  double descend, or hurt generalization.
\newblock In \textit{Uncertainty in Artificial Intelligence}. PMLR.

\bibitem[{Min et~al.(2022{\natexlab{a}})Min, He, Wang and Gu}]{min2022learning}
\textsc{Min, Y.}, \textsc{He, J.}, \textsc{Wang, T.} and \textsc{Gu, Q.}
  (2022{\natexlab{a}}).
\newblock Learning stochastic shortest path with linear function approximation.
\newblock In \textit{International Conference on Machine Learning}. PMLR.

\bibitem[{Min et~al.(2022{\natexlab{b}})Min, Wang, Xu, Wang, Jordan and
  Yang}]{min2022learn}
\textsc{Min, Y.}, \textsc{Wang, T.}, \textsc{Xu, R.}, \textsc{Wang, Z.},
  \textsc{Jordan, M.} and \textsc{Yang, Z.} (2022{\natexlab{b}}).
\newblock Learn to match with no regret: Reinforcement learning in markov
  matching markets.
\newblock In \textit{Advances in Neural Information Processing Systems}.

\bibitem[{Min et~al.(2021{\natexlab{b}})Min, Wang, Zhou and
  Gu}]{min2021variance}
\textsc{Min, Y.}, \textsc{Wang, T.}, \textsc{Zhou, D.} and \textsc{Gu, Q.}
  (2021{\natexlab{b}}).
\newblock Variance-aware off-policy evaluation with linear function
  approximation.
\newblock \textit{Advances in neural information processing systems}
  \textbf{34} 7598--7610.

\bibitem[{Pfrommer et~al.(2023)Pfrommer, Anderson, Piet and
  Sojoudi}]{pfrommer2023asymmetric}
\textsc{Pfrommer, S.}, \textsc{Anderson, B.~G.}, \textsc{Piet, J.} and
  \textsc{Sojoudi, S.} (2023).
\newblock Asymmetric certified robustness via feature-convex neural networks.
\newblock \textit{arXiv preprint arXiv:2302.01961} .

\bibitem[{Prescott and Mehra(2005)}]{prescott2005recursive}
\textsc{Prescott, E.~C.} and \textsc{Mehra, R.} (2005).
\newblock Recursive competitive equilibrium: The case of homogeneous
  households.
\newblock In \textit{Theory Of Valuation}. World Scientific, 357--371.

\bibitem[{Puterman(2014)}]{puterman2014markov}
\textsc{Puterman, M.~L.} (2014).
\newblock \textit{Markov decision processes: discrete stochastic dynamic
  programming}.
\newblock John Wiley \& Sons.

\bibitem[{Riedmiller(2005)}]{riedmiller2005neural}
\textsc{Riedmiller, M.} (2005).
\newblock Neural fitted q iteration--first experiences with a data efficient
  neural reinforcement learning method.
\newblock In \textit{Machine Learning: ECML 2005: 16th European Conference on
  Machine Learning, Porto, Portugal, October 3-7, 2005. Proceedings 16}.
  Springer.

\bibitem[{Schmidt et~al.(2011)Schmidt, Roux and Bach}]{schmidt2011convergence}
\textsc{Schmidt, M.}, \textsc{Roux, N.} and \textsc{Bach, F.} (2011).
\newblock Convergence rates of inexact proximal-gradient methods for convex
  optimization.
\newblock \textit{Advances in neural information processing systems}
  \textbf{24}.

\bibitem[{Seijo and Sen(2011)}]{seijo2011nonparametric}
\textsc{Seijo, E.} and \textsc{Sen, B.} (2011).
\newblock Nonparametric least squares estimation of a multivariate convex
  regression function.
\newblock \textit{The Annals of Statistics} \textbf{39} 1633--1657.

\bibitem[{Selten(1990)}]{selten1990bounded}
\textsc{Selten, R.} (1990).
\newblock Bounded rationality.
\newblock \textit{Journal of Institutional and Theoretical Economics
  (JITE)/Zeitschrift f{\"u}r die gesamte Staatswissenschaft} \textbf{146}
  649--658.

\bibitem[{Shalev-Shwartz and Singer(2007)}]{shalev2007online}
\textsc{Shalev-Shwartz, S.} and \textsc{Singer, Y.} (2007).
\newblock Online learning: Theory, algorithms, and applications .

\bibitem[{Shephard and F{\"a}re(1974)}]{shephard1974law}
\textsc{Shephard, R.~W.} and \textsc{F{\"a}re, R.} (1974).
\newblock The law of diminishing returns.
\newblock In \textit{Production Theory: Proceedings of an International Seminar
  Held at the University at Karlsruhe May--July 1973}. Springer.

\bibitem[{Smith and McCardle(2002)}]{smith2002structural}
\textsc{Smith, J.~E.} and \textsc{McCardle, K.~F.} (2002).
\newblock Structural properties of stochastic dynamic programs.
\newblock \textit{Operations Research} \textbf{50} 796--809.

\bibitem[{Song et~al.(2021{\natexlab{a}})Song, Xu and
  Lafferty}]{song2021convergence}
\textsc{Song, G.}, \textsc{Xu, R.} and \textsc{Lafferty, J.}
  (2021{\natexlab{a}}).
\newblock Convergence and alignment of gradient descent with random
  backpropagation weights.
\newblock \textit{Advances in Neural Information Processing Systems}
  \textbf{34} 19888--19898.

\bibitem[{Song et~al.(2021{\natexlab{b}})Song, Mei and Bai}]{song2021can}
\textsc{Song, Z.}, \textsc{Mei, S.} and \textsc{Bai, Y.} (2021{\natexlab{b}}).
\newblock When can we learn general-sum markov games with a large number of
  players sample-efficiently?
\newblock \textit{arXiv preprint arXiv:2110.04184} .

\bibitem[{Subramanian and Mahajan(2019)}]{subramanian2019reinforcement}
\textsc{Subramanian, J.} and \textsc{Mahajan, A.} (2019).
\newblock Reinforcement learning in stationary mean-field games.
\newblock In \textit{Proceedings of the 18th International Conference on
  Autonomous Agents and MultiAgent Systems}.

\bibitem[{Sutton and Barto(2018)}]{sutton2018reinforcement}
\textsc{Sutton, R.~S.} and \textsc{Barto, A.~G.} (2018).
\newblock \textit{Reinforcement learning: An introduction}.
\newblock MIT press.

\bibitem[{Tan et~al.(2022)Tan, Mukherjee, Tang and Sch{\"o}nlieb}]{tan2022data}
\textsc{Tan, H.~Y.}, \textsc{Mukherjee, S.}, \textsc{Tang, J.} and
  \textsc{Sch{\"o}nlieb, C.-B.} (2022).
\newblock Data-driven mirror descent with input-convex neural networks.
\newblock \textit{arXiv preprint arXiv:2206.06733} .

\bibitem[{Wang et~al.(2023)Wang, Liu, Bai and Jin}]{wang2023breaking}
\textsc{Wang, Y.}, \textsc{Liu, Q.}, \textsc{Bai, Y.} and \textsc{Jin, C.}
  (2023).
\newblock Breaking the curse of multiagency: Provably efficient decentralized
  multi-agent rl with function approximation.
\newblock \textit{arXiv preprint arXiv:2302.06606} .

\bibitem[{Xu et~al.(2021)Xu, Chen and Karbasi}]{xu2021meta}
\textsc{Xu, R.}, \textsc{Chen, L.} and \textsc{Karbasi, A.} (2021).
\newblock Meta learning in the continuous time limit.
\newblock In \textit{International Conference on Artificial Intelligence and
  Statistics}. PMLR.

\bibitem[{Zhu(2020)}]{zhu2020convex}
\textsc{Zhu, Y.} (2020).
\newblock A convex optimization formulation for multivariate regression.
\newblock \textit{Advances in Neural Information Processing Systems}
  \textbf{33} 17652--17661.

\end{thebibliography}

\clearpage
\appendix

\onecolumn

\section{Clarification of Notation}\label{sec:notation}

We make a few clarifications on the problem setup of our heterogeneous agent model as well as the related notation that we intentionally omitted in the main text due to space constraints. 

\subsection{Bellman Operators}

We define Bellman operator $\calT_z^\pi$ for any policy $\pi$ and mean-field term $z$ that maps from any Q-function $Q$ to
\begin{align*}
     (\calT_z^\pi Q)(s,a) & \defeq r_z(s,a) + \gamma \expect[\inner{Q(s',\cdot)}{\pi(\cdot \given s')}_{\Gamma(s')}\given s,a] \\
     & = r_z(s,a) + \gamma (\calP_z V_z^\pi)(s,a),
\end{align*}
where $(\calP_z v)(s,a) \defeq \expect[v(s')\given s,a]$ and $s'\sim\PP_z(\cdot\given s,a)$ be the next state for any state-value function $v$. 
Further, we also denote $\calT_z$ to be the Bellman optimality operator, such that for any value function $Q$
\begin{align*}
    (\calT_z Q)(s,a) \defeq r_z(s,a) + \gamma \expect\Big[\max_{a'\in\Gamma(s')} Q(s',a')\ \Big|\ s,a\Big]
\end{align*}
under a greedy policy with respect to $Q$.
Note that the value function $Q_z^\pi$ corresponding to any policy $\pi$ is the fixed point of $\calT_z^\pi$, \ie, $Q_z^\pi(s,a) = (\calT_z^\pi Q_z^\pi)(s,a)$, and for any mean-field term $z$, the optimal Q-function $Q_z^\star(s)$ is the fixed point of the Bellman optimality operator $\calT_z$, \ie,
\begin{align*}
    Q_z^\star(s,a) = (\calT_z Q_z^\star)(s,a) \defeq r_z(s,a) + \gamma \expect\Big[\max_{a'\in\Gamma(s')} Q_z^\star(s',a')\ \Big|\ s,a\Big]
\end{align*} 
and the optimal state-value function follows from $V_z^\star(s) = \max_{a\in\Gamma(s)} Q_z^\star(s,a)$.

\subsection{Issues With \RCE}\label{sec:issues-rce}

For any concave value function $Q$, there may exist multiple actions maximizing $Q(s,\cdot)$ for any $s\in\calS$, and this non-uniqueness in optimal policy leads to the non-uniqueness of the aggregate indicator $\Xi^\star$ and therefore the non-uniqueness of \RCE. 

The non-robustness of an equilibrium implies that a small estimation error on the optimal value function can lead to a large estimation error on the corresponding optimal policy.

\subsection{Regularized \RCE}

Through adding an entropy regularization, we have a unique regularized optimal policy given any mean-field term, and for the convenience of notation, we denote the regularized optimal policy for any mean-field term $z$ as
\begin{align*}
    \Lambda_1(y) \defeq \bigg\{ \pi^\dagger \st \pi^\dagger = \argmax_{\pi} \Big\{\int_{\Gamma(s)} \pi(a\given s)Q_z^\star(s,a) \diff a - \calH(\pi(\cdot\given s))\Big\} \text{ for all } s\in\calS \bigg\}.
\end{align*}
For any policy $\pi$, we also denote the next-iteration mean-field term under any current mean-field term $z$ with a shorthand
\begin{align*}
    \Lambda_2(\pi,z) \defeq \Phi(\Psi(z,\pi)).
\end{align*}
To conclude the existence and uniqueness of the \QRMFE, we need to show the convergence of the above iteration process starting with any mean-field term $z\in\calZ$, and it suffices if $\Lambda_2(\Lambda_1(\cdot))$ is a contractive mapping.

\section{Proof of Main Results}

\subsection{Proof of \cref{thm:contraction}}\label{sec:contraction-proof}

\begin{proof}
    Recall that the optimal regularized policy $\pi_z^\star$ of the representative agent under any mean-field term $z$ is defined as
    \begin{align*}
        \pi_z^\star = \Lambda_1(y),
    \end{align*}
    and the unique mean-field term induced by the best response of the environment to $\pi_z^\star$ under any mean-field term $z$ is given by
    \begin{align*}
        \Lambda_2(\pi_z^\star,z) \defeq \Phi(\Psi(\pi_z^\star,\pi)).
    \end{align*}
    Let us further denote $\Lambda(y) \defeq \Lambda_2(\Lambda_1(y),z)$ to be the operator that maps any mean-field term $z$ to its corresponding next-iteration mean-field term through the regularized policy $\Lambda_1(y)$. A policy-environment pair $(\pi^\star,z^\star)$ is a \QRMFE if and only if $z^\star = \Lambda(z^\star)$ and $\pi^\star = \Lambda_1(z^\star)$. 
    Similar to the argument in \citet{guo2020general}, note that for any mean-field terms $z$ and $z'$, the distance between their corresponding next-iteration mean-field terms is controlled by
    \begin{align*}
        \|\Lambda(y) - \Lambda(z')\|_1 & = \|\Lambda_2(\Lambda_1(y),z) - \Lambda_2(\Lambda_1(z'),z')\|_1 \\
        & \leq \|\Lambda_2(\Lambda_1(y),z) - \Lambda_2(\Lambda_1(z'),z)\|_1 + \|\Lambda_2(\Lambda_1(z'),z) - \Lambda_2(\Lambda_1(z'),z')\|_1 \\
        & \leq \|\Phi(\Psi(\pi_z^\star,z)) - \Phi(\Psi(\pi_{z'}^\star,z))\|_1 + \|\Phi(\Psi(\pi_{z'}^\star,z)) - \Phi(\Psi(\pi_{z'}^\star,z'))\|_1,
        \end{align*}
        where the first equality follows from the definition of operator $\Lambda(\cdot)$, the first inequality follows from triangular inequality, and the last inequality follows from the definition of $\Lambda_1(\cdot)$ and $\Lambda_2(\cdot,\cdot)$. 
        Following \cref{assum:lipschitz-mappings}, the Lipschitzness of the mappings $\Psi$ and $\Phi$ yields
        \begin{align*}
            \|\Phi(\Psi(\pi_{z'}^\star,z)) - \Phi(\Psi(\pi_{z'}^\star,z'))\|_1 \leq L_F \|z-z'\|_1
        \end{align*}
        and
        \begin{align*}
            \|\Phi(\Psi(\pi_z^\star,z)) - \Phi(\Psi(\pi_{z'}^\star,z))\|_1 & \leq L_F\cdot \|\pi_z^\star - \pi_{z'}^\star\|_{\infty,\overline\nu}.
        \end{align*}
        Further, it also holds for the optimal Q-functions $Q_z^\star$ and $Q_{z'}^\star$ that the distance between their corresponding optimal policies under entropy regularization $\calH$ is bounded through the distance between the mean-field terms, \ie,
        \begin{align*}
            \|\pi_z^\star - \pi_{z'}^\star\|_{\infty,\overline\nu} & \leq D\cdot \|\pi_z^\star - \pi_{z'}^\star\|_{\infty,\widetilde\nu} \\
            & \leq \frac{D^2}{\zeta} \|Q_z^\star-Q_{z'}^\star\|_{1,\widetilde\mu} \\
            & \leq \frac{J D^2}{\zeta} \|z-z'\|_1,
        \end{align*}
    where the first inequality is due to the concentrability of the sampling distribution $\widetilde\nu$ in \cref{assum:concentrability}, the second inequality follows by applying \cref{lem:lipschitz-policy}, and the last inequality is due to \cref{lem:lipschitz-action-value}.
    Hence, we combine the above inequalities together to get 
    \begin{align*}
        \|\Lambda(y) - \Lambda(z')\|_1 & \leq L_F \|z-z'\|_1 + L_F\cdot \|\pi_z^\star - \pi_{z'}^\star\|_{\infty,\overline\nu} \\
        & \leq \Big(\frac{J L_F D^2}{\zeta} + L_F\Big) \|z-z'\|_1,
    \end{align*}
    and the operator $\Lambda(\cdot)$ forms a contraction if $\kappa \defeq \frac{J L_F D^2}{\zeta} + L_F < 1$. The existence and uniqueness of the \QRMFE $(\pi^\star,z^\star)$ follows from Banach fixed-point theorem on compact metric space $\calZ$ and the uniqueness of global optimizer on strongly convex optimization.
\end{proof}

\begin{lemma}\label{lem:lipschitz-action-value}
    Let $z,z'\in\calZ$ be two mean-field terms and $Q_z^\star$ and $Q_{z'}^\star$ to be the optimal value functions under the \MDPs with transition kernels $\PP_z$ and $\PP_{z'}$, respectively. Define $B \defeq 1/(1-\gamma)$. Then under Assumptions \cref{assum:lipschitz-transition}, and \ref{assum:concentrability}, we have
    \begin{align*}
        \|Q_z^\star - Q_{z'}^\star\|_{1,\widetilde\mu} \leq J \cdot \|z - z'\|_1,
    \end{align*}
    where 
    \begin{align*}
        J \defeq \frac{L_r+\gamma B L_\PP}{1-\gamma D}.
    \end{align*}
\end{lemma}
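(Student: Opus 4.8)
The plan is to reduce the bound to a self-referential inequality built from the Bellman fixed-point identities $Q_z^\star=\calT_z Q_z^\star$ and $Q_{z'}^\star=\calT_{z'}Q_{z'}^\star$. Writing $\Delta(s,a)\defeq|Q_z^\star(s,a)-Q_{z'}^\star(s,a)|$, I would subtract the two identities and insert $\pm\,\gamma\,\expect_{s'\sim\PP_z(\cdot\given s,a)}[(\frakJ Q_{z'}^\star)(s')]$ to split $\Delta(s,a)$ into three pieces. The reward mismatch $|r_z(s,a)-r_{z'}(s,a)|$ is at most $L_r\|z-z'\|_1$ by the $\ell_\infty$-Lipschitzness of $r_z$ in $z$ (\cref{assump:convex}). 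The kernel-mismatch term is at most $\gamma\,\|\frakJ Q_{z'}^\star\|_\infty\cdot\|\PP_z(\cdot\given s,a)-\PP_{z'}(\cdot\given s,a)\|_1$, and since $\|\frakJ Q_{z'}^\star\|_\infty\le B$ and the transition structure in \eqref{eq:transition} makes this transition norm equal to $\|\PP_z(\cdot\given w)-\PP_{z'}(\cdot\given w)\|_1$ (the $b'$-marginals coincide), \cref{assum:lipschitz-transition} bounds it by $\gamma B L_\PP\|z-z'\|_1$. The remaining same-kernel term is $\gamma\,\expect_{s'\sim\PP_z(\cdot\given s,a)}[(\frakJ Q_z^\star)(s')-(\frakJ Q_{z'}^\star)(s')]$, which by $|\max f-\max g|\le\max|f-g|$ is at most $\gamma\,\expect_{s'\sim\PP_z(\cdot\given s,a)}[\max_{a'\in\Gamma(s')}\Delta(s',a')]$. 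Setting $C\defeq L_r+\gamma B L_\PP$ (so that the target constant is $J=C/(1-\gamma D)$), this yields the pointwise recursion
\begin{align*}
\Delta(s,a)\ \le\ C\,\|z-z'\|_1\ +\ \gamma\,\expect_{s'\sim\PP_z(\cdot\given s,a)}\Big[\max_{a'\in\Gamma(s')}\Delta(s',a')\Big],\qquad (s,a)\in\calG.
\end{align*}

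The main obstacle is that the inner $\max_{a'}$ does not sit naturally inside the weighted norm $\|\cdot\|_{1,\widetilde\mu}$, which averages the action coordinate against the behavior policy rather than maximizing over it. I would circumvent this by introducing the deterministic policy $\rho(s')\defeq\argmax_{a'\in\Gamma(s')}\Delta(s',a')$ — well defined since $\Gamma(s')$ is compact and the $Q$-functions are continuous — so that $\max_{a'\in\Gamma(s')}\Delta(s',a')=\Delta(s',\rho(s'))$, and then set $b\defeq\int_\calS\Delta(s',\rho(s'))\,\diff\widetilde\nu(s')$, which is finite since $\Delta\le B$. Integrating the recursion against $\widetilde\mu$ and writing $\nu_1$ for the distribution of $s'$ when $(s,a)\sim\widetilde\mu$ and $s'\sim\PP_z(\cdot\given s,a)$ gives $\|Q_z^\star-Q_{z'}^\star\|_{1,\widetilde\mu}\le C\|z-z'\|_1+\gamma\int_\calS\Delta(s',\rho(s'))\,\diff\nu_1(s')$; since $\nu_1$ is the state marginal of a one-step visitation distribution under the behavior policy, \cref{assum:concentrability} gives $\nu_1\le D\widetilde\nu$, so the right-hand side is at most $C\|z-z'\|_1+\gamma D\,b$. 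Applying the same recursion instead to $\Delta(\cdot,\rho(\cdot))$ averaged against $\widetilde\nu$ and letting $\nu_1'$ be the corresponding next-step state distribution — now the state marginal of the one-step visitation under the policy that plays $\rho$ at step $0$, again dominated by $D\widetilde\nu$ by \cref{assum:concentrability} — produces the self-referential inequality $b\le C\|z-z'\|_1+\gamma D\,b$.

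Finally I would solve this: since $b<\infty$ and $\gamma D<1$ (implicit in the finiteness of $J$), $b\le C\|z-z'\|_1/(1-\gamma D)$, and substituting into the first displayed bound for $\|Q_z^\star-Q_{z'}^\star\|_{1,\widetilde\mu}$ gives $\|Q_z^\star-Q_{z'}^\star\|_{1,\widetilde\mu}\le C\|z-z'\|_1+\gamma D\,b\le\frac{C}{1-\gamma D}\|z-z'\|_1=J\|z-z'\|_1$. The only delicate points are the use of the deterministic $\argmax$ policy to trade the supremum over actions for an extra factor $D$ at each of the two recursion steps, and the fact that this forces us to apply \cref{assum:concentrability} to a non-stationary policy (one that plays $\rho$ at step $0$), which is exactly the form in which that assumption is stated; everything else is routine.
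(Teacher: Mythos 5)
Your proof is correct and follows essentially the same route as the paper's: the Bellman fixed-point identity, the same three-term split with the bounds $L_r\|z-z'\|_1$ and $\gamma B L_\PP\|z-z'\|_1$ on the reward and kernel mismatches, the $|\max f-\max g|\le\max|f-g|$ step, and a self-referential inequality solved using $\gamma D<1$. The only difference is bookkeeping in how the recursion is closed: the paper passes through the argmax policy $\pi^\ddagger$ and bounds $\|Q_z^\star-Q_{z'}^\star\|_{1,\PP_z\widetilde\mu\times\pi^\ddagger}$ by $D\,\|Q_z^\star-Q_{z'}^\star\|_{1,\widetilde\mu}$ in a single application of \cref{assum:concentrability}, which tacitly asks the concentrability ratio to hold for a joint distribution whose action component is the deterministic $\pi^\ddagger$ measured against $\widetilde\mu=\widetilde\nu\times\pi_b$; you instead introduce the state-level auxiliary quantity $b=\int_\calS\max_{a\in\Gamma(s)}\Delta(s,a)\,\diff\widetilde\nu(s)$, close a separate self-referential inequality for $b$, and substitute back, so that only the state-marginal form of \cref{assum:concentrability} is ever invoked. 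This is a slightly more careful reading of the assumption when the comparison policy is deterministic, and both versions yield the same constant $J=(L_r+\gamma BL_\PP)/(1-\gamma D)$.
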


\begin{proof}
    Notice that for any optimal Q-functions $Q_z^\star$ and $Q_{z'}^\star$, it follows from the Bellman optimality equation that
    \begin{align*}
        \|Q_z^\star - Q_{z'}^\star\|_{1,\widetilde\mu} & = \|\calT_z Q_z^\star - \calT_{z'} Q_{z'}^\star\|_{1,\widetilde\mu} \\
        & = \Big\| r_z(s,a) + \gamma \int_\calS \PP_z(s'\given s,a) \max_{a\in \Gamma(s')} Q_z^\star(s',a) \diff s' \\
        & \qquad - r_{z'}(s,a) - \gamma \int_\calS \PP_{z'}(s'\given s,a) \max_{a\in \Gamma(s')} Q_{z'}^\star(s',a) \diff s' \Big\|_{1,\widetilde\mu},
    \end{align*}
    where the first equality is due to the fact that optimal Q-functions $Q_z^\star$ and $Q_{z'}^\star$ are the fixed points under the Bellman operators $\calT_z$ and $\calT_{z'}$ respectively, the second equality expands the Bellman operators following the definition of $\calT_z$ and $\calT_{z'}$.
    Further, we decompose the difference into three pieces, \ie,
    \begin{align}\label{eqn:bellman-split}
        \begin{split}
            \|Q_z^\star - Q_{z'}^\star\|_{1,\widetilde\mu} & \leq \gamma \Big\| \int_\calS \PP_z(s'\given s,a) \big[\max_{a\in \Gamma(s')}\{Q_z^\star(s',a)\} - \max_{a\in \Gamma(s')}\{Q_{z'}^\star(s',a)\}\big] \diff s' \Big\|_{1,\widetilde\mu} \\
            & \qquad + \gamma \Big\| \int_\calS \big[\PP_z(s'\given s,a) - \PP_{z'}(s'\given s,a)\big] \max_{a\in \Gamma(s')} Q_{z'}^\star(s',a) \diff s' \Big\|_{1,\widetilde\mu} \\
            & \qquad + \|r_z(s,a) - r_{z'}(s,a)\|_{1,\widetilde\mu},
        \end{split}
    \end{align}
    where the inequality follows from triangular inequality and supplying an intermediate dummy term $\int_\calS \PP_z(s'\given s,a)\max_{a\in \Gamma(s')} Q_{z'}^\star(s',a) \diff s'$. Notice that the last term in \eqref{eqn:bellman-split} can be bounded following the Lipschitzness of the reward function, \ie,
    \begin{align*}
        \|r_z(s,a) - r_{z'}(s,a)\|_{1,\widetilde\mu} \leq L_r \cdot \|z - z'\|_1,
    \end{align*}
    and the second term in \eqref{eqn:bellman-split} is controlled by the distance between transition kernels $\PP_z$ and $\PP_{z'}$, \ie,
    \begin{align*}
        \Big\| \int_\calS \big[\PP_z(s'\given s,a) - \PP_{z'}(s'\given s,a)\big] \max_{a\in \Gamma(s')} Q_{z'}^\star(s',a) \diff s' \Big\|_{1,\widetilde\mu} & \leq L_\PP\cdot \|Q^\star\|_{\infty} \|z - z'\|_1 \\
        & \leq B L_\PP \|z - z'\|_1,
    \end{align*}
    where the first inequality is due to the Lipschitz condition on $\PP_z$ with respect to $z$ in \cref{assum:lipschitz-transition} and $\max_{a\in \Gamma(s')} Q_{z'}^\star(s',a) \leq \|Q^\star\|_\infty \leq B$ for all feasible state-action pair $(s,a)$.
    The second inequality follows from the assumption that the reward $r_h(s,a)\in [0,1]$ for any feasible state-action pair $(s,a)$ and step $h\in\ZZ_+$. 
    Finally, for any optimal Q-functions $Q_z^\star$ and $Q_{z'}^\star$, the first term in \eqref{eqn:bellman-split} is bounded by
    \begin{align*}
        & \Big\| \int_\calS \PP_z(s'\given s,a) \big[\max_{a'\in \Gamma(s')}\{Q_z^\star(s',a')\} - \max_{a'\in \Gamma(s')}\{Q_{z'}^\star(s',a')\}\big] \diff s' \Big\|_{1,\widetilde\mu} \\
        & \qquad \leq \Big\| \int_\calS \PP_z(s'\given s,a) \max_{a'\in \Gamma(s')}\big|Q_z^\star(s',a') - Q_{z'}^\star(s',a')\big| \diff s' \Big\|_{1,\widetilde\mu} \\
        & \qquad = \big\| \max_{a\in \Gamma(s)}\big|Q_z^\star(s,a) - Q_{z'}^\star(s,a)\big| \big\|_{1,\PP_z\widetilde\mu} \\
        & \qquad = \big\| Q_z^\star(s,a) - Q_{z'}^\star(s,a) \big\|_{1,\PP_z\widetilde\mu\times\pi^\ddagger(Q_z^\star,Q_{z'}^\star)}
    \end{align*}
    where the inequality follows from the fact that $|\max_{x\in\calX} f(x) - \max_{y\in\calX} g(y)| \leq \max_{x\in\calX} |f(x) - g(x)|$ for any $x,y$ in the shared domain $\calX$, the first equality is due to the shorthand $s'\sim\PP_z\mu$ to denote $s'\sim \PP_z(\cdot\given s,a)$ and $(s,a)\sim \mu$, and the last equality is due to defining the policy $\pi^\ddagger(Q_z^\star,Q_{z'}^\star)$ that takes the action to maximize the difference between Q-functions $Q_z^\star$ and $Q_{z'}^\star$, \ie, $\pi^\ddagger(Q_z^\star,Q_{z'}^\star)(s) \defeq \argmax_{a\in\calA} |Q_z^\star(s,a) - Q_{z'}^\star(s,a)|$.
    Putting the bounds above together, we have an upper bound
    \begin{align}\label{eqn:Q-diff-bound}
        \|Q_z^\star - Q_{z'}^\star\|_{1,\widetilde\mu} & \leq \gamma D \cdot \|Q_z^\star - Q_{z'}^\star\|_{1,\widetilde\mu} + \gamma B L_\PP \cdot \|z - z'\|_1 + L_r \cdot \|z - z'\|_1,
    \end{align}
    where the inequality follows from \cref{assum:concentrability}. Rearranging the term in \eqref{eqn:Q-diff-bound} and following the definition of $J$ to get
    \begin{align*}
        \|Q_z^\star - Q_{z'}^\star\|_{1,\widetilde\mu} & \leq \frac{L_r+\gamma B L_\PP}{1-\gamma D} \|z - z'\|_1 = J\cdot \|z - z'\|_1.
    \end{align*}
\end{proof}

\begin{lemma}\label{lem:lipschitz-policy}
    Let $Q_z^\star$ and $Q_{z'}^\star$ be two optimal value functions under the \MDPs with transition kernels $\PP_z$ and $\PP_{z'}$, respectively. Then, under \cref{assum:concentrability}, we have
    \begin{align*}
        \|\pi_z^\star(\cdot \given \cdot) - \pi_{z'}^\star(\cdot \given \cdot)\|_{\infty,\widetilde\nu} \leq \frac{D}{\zeta} \cdot \|Q_z^\star-Q_{z'}^\star\|_{1,\widetilde\mu}.
    \end{align*}
\end{lemma}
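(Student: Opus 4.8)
The plan is to reduce the bound to a statewise estimate, use the $\zeta$-strong convexity of $\calH$ to obtain a Lipschitz dependence of the regularized optimal policy on the $Q$-function at each fixed state, and then integrate over states, invoking the concentrability constant $D$ of Assumption~\ref{assum:concentrability} to pass from the state measure $\widetilde\nu$ appearing on the policy side to the state--action measure $\widetilde\mu=\widetilde\nu\times\pi_b$ appearing on the $Q$ side.

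Fix $s\in\calS$. By \eqref{eqn:regularized-optimal}, $\pi_z^\star(\cdot\given s)$ and $\pi_{z'}^\star(\cdot\given s)$ are the unique maximizers over $\Delta_{\Gamma(s)}$ of $u\mapsto \inner{u}{Q_z^\star(s,\cdot)}_{\Gamma(s)}-\calH(u)$ and $u\mapsto \inner{u}{Q_{z'}^\star(s,\cdot)}_{\Gamma(s)}-\calH(u)$ respectively, two objectives that are $\zeta$-strongly concave in $\|\cdot\|_1$ and differ only in their linear terms. Writing the first-order optimality inequality for each maximizer tested against the other, adding the two, and using the strong convexity of $\calH$ gives
\begin{align*}
    \zeta\,\big\|\pi_z^\star(\cdot\given s)-\pi_{z'}^\star(\cdot\given s)\big\|_1^2
    \;\le\; \inner{\pi_z^\star(\cdot\given s)-\pi_{z'}^\star(\cdot\given s)}{Q_z^\star(s,\cdot)-Q_{z'}^\star(s,\cdot)}_{\Gamma(s)}.
\end{align*}
Bounding the right-hand side with Hölder's inequality and cancelling one power of the policy distance yields a statewise Lipschitz estimate of the shape $\|\pi_z^\star(\cdot\given s)-\pi_{z'}^\star(\cdot\given s)\|\le\frac{1}{\zeta}\|Q_z^\star(s,\cdot)-Q_{z'}^\star(s,\cdot)\|$ in the appropriate dual pair of norms on $\Gamma(s)$.

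It remains to integrate over $s$ and reconcile the norms. Integrating the statewise bound against $\widetilde\nu$ produces $\|\pi_z^\star-\pi_{z'}^\star\|_{\infty,\widetilde\nu}$ on the left-hand side; on the right-hand side I would insert the behavior policy $\pi_b$ into the action integral of $|Q_z^\star-Q_{z'}^\star|$ and invoke Assumption~\ref{assum:concentrability}---both the density-ratio bound $\mu_{z,h}^\pi/\widetilde\mu\le D$ and its state-marginal consequence $\int_{\Gamma(s)}\mu_{z,h}^\pi(s,a)\diff a/\widetilde\nu(s)\le D$---to dominate the result by $D$ times $\|Q_z^\star-Q_{z'}^\star\|_{1,\widetilde\mu}$, which produces the constant $D/\zeta$. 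The crux of the proof, and the step that deserves the most care, is exactly this last translation: the strong-convexity inequality most naturally controls a total-variation-type distance between policies in terms of a sup-norm-type distance between $Q$-functions, whereas the statement is phrased with a sup-type norm on the policies and an $L^1(\widetilde\mu)$ norm on the $Q$-functions; reconciling these requires both choosing the Hölder pairing carefully and spending the coverage in Assumption~\ref{assum:concentrability} to convert integration against Lebesgue measure on $\Gamma(s)$ into integration against $\pi_b(\cdot\given s)\widetilde\nu(s)$ at the cost of a single factor $D$.
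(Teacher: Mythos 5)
Your proposal is essentially the paper's argument: the paper phrases the statewise step as $1/\zeta$-smoothness of the Fenchel conjugate $\calH^\star$ (its \cref{lem:dual}, whose proof is precisely your primal computation of testing each maximizer's first-order condition against the other, adding, and applying strong convexity plus H\"older), and then performs the same change of measure via \cref{assum:concentrability} to pass from $\widetilde\nu$ to $\widetilde\mu$ at the cost of a single factor $D$. The norm mismatch you flag at the end---strong convexity in $\|\cdot\|_1$ naturally bounds the $\ell_1$ (total-variation) policy distance by the $\ell_\infty$ distance of the $Q$-functions, whereas the lemma pairs an $\ell_\infty$-type policy norm with an $\ell_1(\widetilde\mu)$ norm on $Q$---is a genuine bookkeeping issue, but the paper's own proof asserts the corresponding inequality without further justification, so your treatment is no less complete than the original.
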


\begin{proof}
    For any mean-field term $z$ and its corresponding optimal value function $Q_z^\star$, the optimal regularized policy $\pi_z^\star$ can be written as
    \begin{align*}
        \pi_z^\star(\cdot \given s) & = \argmax_{u \in \Delta_{\Gamma(s)}} \left\{\int_{\Gamma(s)} u(a)Q_z^\star(s,a) \diff a - \calH(u)\right\} \\
        & = \argmax_{u \in \Delta_{\Gamma(s)}}\{\langle u, Q_z^\star(s,\cdot)\rangle-\calH(u)\} \\
        & = \nabla \calH^\star(Q(s,\cdot)),
    \end{align*}
    where the first equality follows from the definition of the optimal regularized policy $\pi_z^\star$, and the last equality follows from \cref{lem:dual} with $\nabla \calH^\star(Q(s,\cdot))$ being the functional derivative of the dual functional $\calH^*$ evaluated at $Q(s,\cdot)$ for $s\in\calS$. Similarly, we write the optimal regularized policy $\pi_{z'}^\star(\cdot \given s) = \nabla \calH^\star(Q_{z'}^\star(s,\cdot))$ for any $s\in\calS$, and it follows that the distance between optimal regularized policies is controlled by that between optimal value functions, \ie,
    \begin{align*}
        \|\pi_z^\star(\cdot \given \cdot) - \pi_{z'}^\star(\cdot \given \cdot)\|_{\infty,\widetilde\nu} & = \|\nabla \calH^\star(Q_z^\star(s,\cdot)) - \nabla \calH^\star(Q_{z'}^\star(s,\cdot))\|_{\infty,\widetilde\nu} \\
        & \leq \frac{1}{\zeta} \|Q_z^\star(s,\cdot) - Q_{z'}^\star(s,\cdot)\|_{1,\widetilde\nu} \\
        & \leq \frac{D}{\zeta} \|Q_z^\star - Q_{z'}^\star\|_{1,\widetilde\mu},
    \end{align*}
    where the first inequality is due to the $\zeta$-strong convexity assumption on $\calH$ and \cref{lem:dual}, and the last inequality is due to \cref{assum:concentrability} and $\|Q_z^\star(s,\cdot) - Q_{z'}^\star(s,\cdot)\|_{1,\widetilde\nu} = \|Q_z^\star - Q_{z'}^\star\|_{1,\widetilde\nu\times \pi_U}$ with $\pi_U$ being a uniform policy.
\end{proof}

From the proof of Lemma~\ref{lem:lipschitz-policy}, we bound the distance between the optimal household policies under different mean-field environments.
It provides an abstraction over multi-agent games where exponentially many interactions between the agents are considered. 
Recall that mean-field setups, viewed as a limit of their multi-agent counterparts~\citep{song2021can,dubey2021provably,he2022simple,xu2021meta,wang2023breaking,ling2019landscape}, drive the number of agents to infinity. 
Compared to the multi-agent setting, in our analysis we bound the distance between the optimal household policies under different mean-field environments.

\subsection{Proof of \cref{thm:main-convex}}\label{sec:main-convex-proof}

\begin{proof}
    For each $t\in[T]$, the mean-field term $z^t$ is determined by the mean-field term $z^{t-1}$ from last iteration of \cref{algo:iteration} and the regularized policy $\pi_{\widehat Q^{t-1}}$ induced by the estimator $\widehat Q^{t-1}$ of the optimal value function $Q_{z^{t-1}}^\star$. Hence, we follow the definition of operator $\Lambda_1$ and $\Lambda_2$, and the distance between $z^t$ and the regularized \RCE $z^\star$ is bounded by
    \begin{align*}
        \|z^t-z^\star\|_1 & = \|\Lambda_2(\pi_{\widehat Q^{t-1}},z^{t-1}) - \Lambda_2(\Lambda_1(z^\star),z^\star)\|_1 \\
        & \leq \|\Lambda_2(\Lambda_1(z^{t-1}),z^{t-1}) - \Lambda_2(\Lambda_1(z^\star),z^\star)\|_1 \\
        & \qquad + \|\Lambda_2(\Lambda_1(z^{t-1}),z^{t-1}) - \Lambda_2(\pi_{\widehat Q^{t-1}},z^{t-1})\|_1 \\
        & \leq \Big(\frac{J L_F D^2}{\zeta} + L_F\Big) \|z^{t-1}-z^\star\|_1 + L_F \|\pi_{Q_{z^{t-1}}^\star} - \pi_{\widehat Q^{t-1}}\|_{\infty,\overline\nu},
    \end{align*}
    where the equality follows from the definition of $\Lambda_1$ and $\Lambda_2$ as well as \cref{def:ne}, the first inequality follows from triangular inequality, and the second inequality is due to \cref{thm:contraction} and \cref{assum:lipschitz-mappings}. We denote $\kappa \defeq J L_F D^2/\zeta + L_F$ for the convenience of notation. Following \cref{lem:lipschitz-policy}, we further have
    \begin{align}
        \|z^t-z^\star\|_1 & \leq \Big(\frac{J L_F D^2}{\zeta} + L_F\Big) \|z^{t-1}-z^\star\|_1 + \frac{L_F D^2}{\zeta} \|\widehat Q^{t-1} - Q_{z^{t-1}}^\star\|_{1,\widetilde\mu} \nonumber \\
        & \leq \kappa\cdot \|z^{t-1}-z^\star\|_1 + \frac{L_F D^2}{\zeta} \|\widehat Q^{t-1} - Q_{z^{t-1}}^\star\|_{\widetilde\mu}. \label{eqn:recursive}
    \end{align}
    After expanding the the first term of \eqref{eqn:recursive} recursively for $t=T$, it holds that for any $T\in\ZZ_+$, the distance between the output mean-field term $z^T$ of \cref{algo:iteration} and $z^\star$ is controlled by
    \begin{align*}
        \|z^T-z^\star\|_1 & \leq \kappa^T \|z^0-z^\star\|_1 + \frac{L_F D^2}{\zeta} \sum_{t=1}^{T-1} \kappa^{T-1-t} \|\widehat Q^t - Q_{z^t}^\star\|_{\widetilde\mu} + \kappa^{T-1} \|\widehat Q^0 - Q_{p^0}^\star\|_{\widetilde\mu} \\
        & \leq \kappa^T \|z^0-z^\star\|_1 + \frac{L_F D^2}{\zeta} \sup_{t\in[T]} \|\widehat Q^t - Q_{z^t}^\star\|_{\widetilde\mu} \sum_{t=1}^{T-1} \kappa^{T-1-t} + \kappa^{T-1} \|\widehat Q^0 - Q_{p^0}^\star\|_{\widetilde\mu} \\
        & \leq \kappa^T \|z^0-z^\star\|_1 + \frac{L_F D^2}{(1-\kappa)\zeta} \sup_{t\in[T]} \|\widehat Q^t - Q_{z^t}^\star\|_{\widetilde\mu} + \frac{\kappa^{T-1}}{1-\gamma},
    \end{align*}
    where the first inequality follows from recursive expansion of \eqref{eqn:recursive}, the second inequality is due to uniform bounding of the estimation error $\|\widehat Q^t - Q_{z^t}^\star\|_{\widetilde\mu} \leq \sup_{t\in[T]} \|\widehat Q^t - Q_{z^t}^\star\|_{\widetilde\mu}$ for all $t\in[T]$, and the last inequality is due to the bounded optimal value function $\|Q_z^*\|_\infty \leq \frac{1}{1-\gamma} = B$ for all $z\in\calZ$.
    Therefore, for regression on the set $\scrC_{B,L}$ of concave functions, we have with probability $1-\delta$ that
    \begin{align*}
        \|z^T-z^\star\|_1 & \lesssim \frac{L_F D^2\sqrt{D}}{(1-\gamma)(1-\kappa)\zeta} \left[B^\frac{d+8}{d+2} (d+1)^\frac{2}{d+2} (dL)^\frac{d}{d+2} M^{-\frac{2}{d+2}}\log M + \epsilon\right] + \kappa^T Z + \frac{\kappa^{T-1}}{1-\gamma}
    \end{align*}
    for sample size $M^\frac{d}{d+2} \gtrsim \log M$, $M^\frac{d}{d+2}\log M \gtrsim B^\frac{3d}{d+2} (d+1)^\frac{-2}{d+2} (dL)^\frac{-d}{d+2} \log\frac{\tau T}{\delta}$, and  
    \begin{align*}
        (\log M)^\frac{d}{2} \gtrsim 1 + \log(R_d^*B^\frac{-6}{d+2} (d+1)^\frac{-2}{d+2} (dL)^\frac{-d}{d+2}),
    \end{align*}
    where $R_d^* \leq \max\{8dL, 2B+4dL\}$ and
    \begin{align*}
        \tau \asymp \log\left( D^{-1/2} B^{-\frac{d+8}{d+2}}(d+1)^{-\frac{2}{d+2}}(dL)^{-\frac{d}{d+2}}M^\frac{2}{d+2}\frac{1}{\log M} \right) / \log\frac{1}{\gamma}.
    \end{align*}
    The inequality follows from applying \cref{lem:fqi-approx} with a union bound. 
\end{proof}

\subsection{Proof of \cref{thm:main-affine-nn}}\label{sec:main-affine-nn-proof}

Combine the statements of \cref{thm:main-affine,thm:main-nn} to get \cref{thm:main-affine-nn}.

\begin{theorem}\label{thm:main-affine}
    Let $B \defeq 1/(1-\gamma)$, $L \defeq L_r/(1-\gamma)$.
    Assume that $\kappa \defeq \frac{J L_F D^2}{\zeta} + L_F<1$, then with probability at least $1-\delta$, \cref{algo:iteration} with $T$ iterations on $\scrA_{B,L}^K$ gives mean-field term $z^T$ such that
    \begin{align*}
        \|z^T-z^\star\|_1 & \lesssim \frac{L_F D^2\sqrt{D}}{(1-\gamma)(1-\kappa)\zeta} \left[\left( (d+1)B^4 + (B+dL)dL \right) M^{-\frac{4}{d+4}}\log M + \epsilon\right] + \kappa^T Z + \frac{\kappa^{T-1}}{1-\gamma}
    \end{align*}
    for sample size $M \gtrsim \frac{1}{\delta\log\frac{1}{\gamma}} \log \frac{M^{4/(d+4)}}{d^2L^2\sqrt{D}}$, $M^\frac{2d+4}{d+4} \log M \gtrsim \frac{B^3(B+4dL)}{d+1}$, $M \gtrsim \frac{\tau T}{\exp(2(d+1)M^{d/(d+4)})\delta}$, and $M^\frac{d}{d+4} \gtrsim \frac{B^2}{d^2L^2}\log \frac{2\tau T}{\delta}$,
    where $\tau \asymp \log \frac{M^{4/(d+4)}}{d^2L^2\sqrt{D}} / \log\frac{1}{\gamma}$.
\end{theorem}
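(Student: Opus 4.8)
The plan is to follow the proof of \cref{thm:main-convex} verbatim up to the point where the per-iteration value-function error enters, replacing only the final bound on that error by its max-affine counterpart. Writing $z^t = \Lambda_2(\pi_{\widehat Q^{t-1}},z^{t-1})$ and using \cref{def:ne}, the triangle inequality, \cref{thm:contraction} (which gives that $\Lambda_2\circ\Lambda_1$ is a contraction with factor $\kappa = JL_FD^2/\zeta + L_F<1$), \cref{assum:lipschitz-mappings}, and \cref{lem:lipschitz-policy}, one obtains the same recursion
\begin{align*}
    \|z^t-z^\star\|_1 \;\le\; \kappa\,\|z^{t-1}-z^\star\|_1 \;+\; \frac{L_F D^2}{\zeta}\,\|\widehat Q^{t-1}-Q_{z^{t-1}}^\star\|_{\widetilde\mu}
\end{align*}
as in that proof; unrolling the geometric series and using $\|Q_z^\star\|_\infty\le B = 1/(1-\gamma)$ gives
\begin{align*}
    \|z^T-z^\star\|_1 \;\le\; \kappa^T Z \;+\; \frac{L_F D^2}{(1-\kappa)\zeta}\,\sup_{t\in[T]}\|\widehat Q^t-Q_{z^t}^\star\|_{\widetilde\mu} \;+\; \frac{\kappa^{T-1}}{1-\gamma}.
\end{align*}
So the whole content of the theorem reduces to bounding $\sup_{t\in[T]}\|\widehat Q^t-Q_{z^t}^\star\|_{\widetilde\mu}$, where each $\widehat Q^t$ is the output of \texttt{CFQI} run with the solver $\texttt{LSE}(\scrA_{B,L}^{K,+},\calD_{z^t},\calT_{z^t}Q;\epsilon)$, followed by a union bound over the $\tau T$ regression calls.

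\textbf{Bounding the max-affine \texttt{CFQI} error.} For a fixed $z$ I would use the same fitted-$Q$-iteration error-propagation argument behind \cref{thm:main-convex} (the analogue of \cref{lem:fqi-approx} with $\scrF=\scrA_{B,L}^{K,+}$): after $\tau$ inner iterations the error $\|\widehat Q - Q_z^\star\|_{\widetilde\mu}$ is controlled, up to constants, by $\gamma^\tau B$ plus $(\sqrt D/(1-\gamma))$ times the one-step error of fitting a Bellman target $\calT_z Q$ over $\scrA_{B,L}^{K,+}$, which is legitimate because by \cref{lem:concavity-general} every such target lies in $\scrC_{B,L}$. The one-step error splits into (i) the \emph{approximation bias} of $\scrA_{B,L}^{K,+}$ against bounded $L$-Lipschitz convex functions on $\calG\subseteq[0,1]^d$, which by the classical max-affine approximation bound decays as a negative power of $K$ with $B,L,d$-dependent prefactor; and (ii) the \emph{estimation error} of empirical risk minimization over $\scrA_{B,L}^{K,+}$, plus the solver slack $\epsilon$ --- here the key gain over \cref{thm:main-convex} is that $\scrA_{B,L}^{K,+}$ has metric entropy only of order $K(d+1)\log(\cdot)$, so a localized empirical-process (offset) argument yields a parametric-type rate. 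Choosing the number of pieces $K$ to balance (i) and (ii) and taking $\tau\asymp\log(M^{4/(d+4)}/(d^2L^2\sqrt D))/\log(1/\gamma)$ so that $\gamma^\tau B$ matches the statistical error, one arrives at
\begin{align*}
    \sup_{t\in[T]}\|\widehat Q^t-Q_{z^t}^\star\|_{\widetilde\mu} \;\lesssim\; \frac{\sqrt D}{1-\gamma}\,\bigl(C_d'\,M^{-4/(d+4)}\log M + \epsilon\bigr),
\end{align*}
with $C_d' = (d+1)B^4 + (B+dL)dL$; plugging this into the previous display produces exactly the claimed bound, the extra $\sqrt D$ combining with $D^2$ into $D^{5/2}$ and the $1/(1-\gamma)$ matching. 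The several lower bounds on $M$ in the statement are precisely what is needed for $K\ge 1$ to be admissible and for the Bernstein/chaining tail bounds, applied with a union bound over the $\tau T$ regression calls, to hold simultaneously with probability at least $1-\delta$.

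\textbf{Main obstacle.} The crux is part (ii): producing a sharp enough (fast-rate, parametric-type) estimation bound for the max-affine least-squares estimator so that the bias--estimation trade-off in $K$ lands on the exponent $4/(d+4)$ rather than something weaker, which needs a careful localized analysis of the offset empirical process over the class $\scrA_{B,L}^{K,+}$, together with the bookkeeping of how the $\epsilon$-approximate-LSE slack and the concentrability constant $D$ propagate additively through the $\tau$ inner fitted-$Q$ iterations and then, via the contraction factor $\kappa$, through the $T$ outer iterations. A secondary point is that the max-affine approximation bound must be applied uniformly over all Bellman targets $\calT_z Q$ encountered along the trajectory of \texttt{CFQI} and over all $z^t$; this is immediate from the uniform membership $\calT_z Q\in\scrC_{B,L}$ guaranteed by \cref{lem:concavity-general}.
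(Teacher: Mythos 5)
Your proposal is correct and follows essentially the same route as the paper: the same contraction recursion reducing $\|z^T-z^\star\|_1$ to $\sup_t\|\widehat Q^t-Q_{z^t}^\star\|_{\widetilde\mu}$, then the same \texttt{CFQI} error propagation (the analogue of \cref{lem:fqi-approx} for $\scrA_{B,L}^{K,+}$, \ie, \cref{lem:fqi-approx-affine} via \cref{lem:fqi-error} and \cref{lem:approx-error-affine}), with the one-step error split into the max-affine approximation bias of \cref{lem:approx-affine} and a fast-rate least-squares estimation bound driven by the $(d+1)K\log(\cdot)$ covering entropy, balanced by $K=\lceil M^{d/(d+4)}\rceil$. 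The ``localized offset empirical process'' you invoke is realized in the paper by the relative-deviation bound of \cref{thm:empirical-bound} together with a Bernstein step for the misspecification term, but this is the same mechanism.
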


\begin{proof}
    Following similar argument as in the proof of \cref{thm:main-convex}, the distance between the output mean-field term $z^T$ of \cref{algo:iteration} and $z^\star$ is controlled by
    \begin{align*}
        \|z^T-z^\star\|_1 & \leq \kappa^T \|z^0-z^\star\|_1 + \frac{L_F D^2}{\zeta} \sum_{t=1}^{T-1} \kappa^{T-1-t} \|\widehat Q^t - Q_{z^t}^\star\|_{\widetilde\mu} + \kappa^{T-1} \|\widehat Q^0 - Q_{p^0}^\star\|_{\widetilde\mu} \\
        & \leq \kappa^T \|z^0-z^\star\|_1 + \frac{L_F D^2}{\zeta} \sup_{t\in[T]} \|\widehat Q^t - Q_{z^t}^\star\|_{\widetilde\mu} \sum_{t=1}^{T-1} \kappa^{T-1-t} + \kappa^{T-1} \|\widehat Q^0 - Q_{p^0}^\star\|_{\widetilde\mu} \\
        & \leq \kappa^T \|z^0-z^\star\|_1 + \frac{L_F D^2}{(1-\kappa)\zeta} \sup_{t\in[T]} \|\widehat Q^t - Q_{z^t}^\star\|_{\widetilde\mu} + \frac{\kappa^{T-1}}{1-\gamma},
    \end{align*}
    where the first inequality follows from recursive expansion of \eqref{eqn:recursive}, the second inequality is due to uniform bounding of the estimation error $\|\widehat Q^t - Q_{z^t}^\star\|_{\widetilde\mu} \leq \sup_{t\in[T]} \|\widehat Q^t - Q_{z^t}^\star\|_{\widetilde\mu}$ for all $t\in[T]$, and the last inequality is due to the bounded optimal value function $\|Q_z^*\|_\infty \leq B$ for all $z\in\calZ$.
    For regression on max-affine functions $\scrA_{B,L}^K$, we have with probability $1-\delta$ that
    \begin{align*}
        \|z^T-z^\star\|_1 & \lesssim \frac{L_F D^2\sqrt{D}}{(1-\gamma)(1-\kappa)\zeta} \left[\left( (d+1)B^4 + (B+dL)dL \right) M^{-\frac{4}{d+4}}\log M + \epsilon\right] + \kappa^T Z + \frac{\kappa^{T-1}}{1-\gamma}
    \end{align*}
    for
    \begin{gather*}
        M \gtrsim \frac{1}{\delta\log\frac{1}{\gamma}} \log \frac{M^{4/(d+4)}}{d^2L^2\sqrt{D}}, \qquad M^\frac{2d+4}{d+4} \log M \gtrsim \frac{B^3(B+4dL)}{d+1}, \\
        M \gtrsim \frac{\tau T}{\exp(2(d+1)M^{d/(d+4)})\delta}, \qquad M^\frac{d}{d+4} \gtrsim \frac{B^2}{d^2L^2}\log \frac{2\tau T}{\delta},
    \end{gather*} 
    where
    \begin{align*}
        \tau \asymp \log \frac{M^{4/(d+4)}}{d^2L^2\sqrt{D}} / \log\frac{1}{\gamma}.
    \end{align*}
    The inequality follows from applying \cref{lem:fqi-approx} with a union bound. 
\end{proof}

\begin{theorem}\label{thm:main-nn}
    Let $B \defeq 1/(1-\gamma)$, $L \defeq L_r/(1-\gamma)$.
    Assume that $\kappa \defeq \frac{J L_F D^2}{\zeta} + L_F<1$, then with probability at least $1-\delta$, \cref{algo:iteration} with $T$ iterations on $\scrN_{B,L}^K$ gives mean-field term $z^T$ such that
    \begin{align*}
        \|z^T-z^\star\|_1 & \lesssim \frac{L_F D^2\sqrt{D}}{(1-\gamma)(1-\kappa)\zeta} \left[\left( (d+1)B^4 + (B+dL)dL \right) M^{-\frac{4}{d+4}}\log M + \epsilon\right] + \kappa^T Z + \frac{\kappa^{T-1}}{1-\gamma}
    \end{align*}
    for sample size $M \gtrsim \frac{1}{\delta\log\frac{1}{\gamma}} \log \frac{M^{4/(d+4)}}{d^2L^2\sqrt{D}}$, $M^\frac{2d+4}{d+4} \log M \gtrsim \frac{B^3(B+4dL)}{d+1}$, $M \gtrsim \frac{\tau T}{\exp(2(d+1)M^{d/(d+4)})\delta}$, and $M^\frac{d}{d+4} \gtrsim \frac{B^2}{d^2L^2}\log \frac{2\tau T}{\delta}$,
    where $\tau \asymp \log \frac{M^{4/(d+4)}}{d^2L^2\sqrt{D}} / \log\frac{1}{\gamma}$.
\end{theorem}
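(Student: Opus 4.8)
The plan is to reduce Theorem~\ref{thm:main-nn} to Theorem~\ref{thm:main-affine} by way of the exact-representation result in Lemma~\ref{lem:exact-representation}. The key observation is that, under the convex parameter constraints specified there ($W_i^{(x)}\ge 0$, the Lipschitz bound $\|\sum_{j=i-1}^{K-1} w_j\|_\infty \leq L$, and the boundedness bound $\max_{x\in\calX}\sum_{j=i-1}^{K-1} w_j^\top x + \sum_{j=i-1}^{K-1} b_j \leq B$, together with the fixed weights $W_0^{(y)}=0$ and $W_i^{(y)}=1$), the ICNN class $\scrN_{B,L}^K$ coincides \emph{as a function class} with $\scrA_{B,L}^{K,+}$. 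Consequently the empirical-risk-minimization problem $\texttt{LSE}(\scrN_{B,L}^K,\calD_z,\calT_z Q;\epsilon)$ has the same global optimal value as $\texttt{LSE}(\scrA_{B,L}^{K,+},\calD_z,\calT_z Q;\epsilon)$, and — granting the standing assumption that the ICNN training routine returns a network whose excess empirical risk over the global minimum is at most $\epsilon$ — the returned $\widehat f\in\scrN_{B,L}^K$ is also an $\epsilon$-approximate LSE of $\calT_z Q$ over $\scrA_{B,L}^{K,+}$. Thus \texttt{CFQI} on $\scrN_{B,L}^K$ produces iterates that are, pointwise in the data, the same as \texttt{CFQI} on $\scrA_{B,L}^{K,+}$.

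Given this identification, I would re-run the argument of Theorem~\ref{thm:main-affine} verbatim. First, invoking the contraction of $\Lambda=\Lambda_2(\Lambda_1(\cdot),\cdot)$ from Theorem~\ref{thm:contraction}, together with \cref{lem:lipschitz-policy} and \cref{assum:lipschitz-mappings}, gives the one-step recursion
\begin{align*}
    \|z^t - z^\star\|_1 \le \kappa\,\|z^{t-1}-z^\star\|_1 + \frac{L_F D^2}{\zeta}\,\|\widehat Q^{t-1} - Q_{z^{t-1}}^\star\|_{\widetilde\mu}.
\end{align*}
Unrolling this for $t=T$, bounding $\|z^0-z^\star\|_1\le Z$ and $\|\widehat Q^0 - Q_{z^0}^\star\|_{\widetilde\mu}\le B=1/(1-\gamma)$, and summing the geometric series in $\kappa$ yields
\begin{align*}
    \|z^T - z^\star\|_1 \le \kappa^T Z + \frac{L_F D^2}{(1-\kappa)\zeta}\sup_{t\in[T]}\|\widehat Q^t - Q_{z^t}^\star\|_{\widetilde\mu} + \frac{\kappa^{T-1}}{1-\gamma},
\end{align*}
which is the same skeleton used in Theorems~\ref{thm:main-convex} and~\ref{thm:main-affine}.

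Second, I would control $\sup_{t\in[T]}\|\widehat Q^t - Q_{z^t}^\star\|_{\widetilde\mu}$ by applying \cref{lem:fqi-approx} to \texttt{CFQI} on $\scrA_{B,L}^{K,+}$ (equivalently $\scrN_{B,L}^K$), with a union bound over the $T$ outer iterations and the $\tau$ inner FQI steps; this is where the stated conditions on $\tau$, $M$, and $K$, the covering-entropy control of max-affine classes, and the approximation rate $\widetilde O(M^{-4/(d+4)})$ enter, producing the $\bigl((d+1)B^4 + (B+dL)dL\bigr)M^{-4/(d+4)}\log M + \epsilon$ term. Substituting into the display above gives the claimed bound. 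The main obstacle is not the reduction itself, which is essentially a restatement once Lemma~\ref{lem:exact-representation} is in hand, but in making sure that the convex parameter constraints truly preserve \emph{both} inclusions of the equality $\scrN_{B,L}^K=\scrA_{B,L}^{K,+}$ — that no expressive power is lost relative to $K$-max-affine functions and, conversely, that no function outside $\scrA_{B,L}^{K,+}$ is representable — since only then does the statistical content (Rademacher/covering control and the attained global optimum) transfer unchanged; granting that, together with the global-optimum assumption on the ICNN solver, Theorem~\ref{thm:main-nn} follows identically to Theorem~\ref{thm:main-affine}.
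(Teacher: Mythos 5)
Your proposal is correct and follows essentially the same route as the paper: the paper's own proof of Theorem~\ref{thm:main-nn} repeats the recursion/unrolling skeleton of Theorem~\ref{thm:main-affine} and then invokes the FQI error bound for $\scrN_{B,L}^K$, whose supporting lemma uses exactly the equivalence $\scrN_{B,L}^K=\scrA_{B,L}^{K,+}$ from Lemma~\ref{lem:exact-representation} (covering entropy via $\scrN_{B,L}^K\subseteq\scrA_{B,L}^{K+1}$, approximation via $\scrA_{B,L}^{K,+}\subseteq\scrN_{B,L}^K$) together with the global-optimum assumption on the ICNN solver. Your reduction to Theorem~\ref{thm:main-affine} as a black box is precisely the argument the paper sketches in the main text before writing out the computation again in the appendix.
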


\begin{proof}
    Following a similar argument as in the proof of \cref{thm:main-convex}, the distance between the output mean-field term $z^T$ of \cref{algo:iteration} and $z^\star$ is controlled by
    \begin{align*}
        \|z^T-z^\star\|_1 & \leq \kappa^T \|z^0-z^\star\|_1 + \frac{L_F D^2}{\zeta} \sum_{t=1}^{T-1} \kappa^{T-1-t} \|\widehat Q^t - Q_{z^t}^\star\|_{\widetilde\mu} + \kappa^{T-1} \|\widehat Q^0 - Q_{p^0}^\star\|_{\widetilde\mu} \\
        & \leq \kappa^T \|z^0-z^\star\|_1 + \frac{L_F D^2}{\zeta} \sup_{t\in[T]} \|\widehat Q^t - Q_{z^t}^\star\|_{\widetilde\mu} \sum_{t=1}^{T-1} \kappa^{T-1-t} + \kappa^{T-1} \|\widehat Q^0 - Q_{p^0}^\star\|_{\widetilde\mu} \\
        & \leq \kappa^T \|z^0-z^\star\|_1 + \frac{L_F D^2}{(1-\kappa)\zeta} \sup_{t\in[T]} \|\widehat Q^t - Q_{z^t}^\star\|_{\widetilde\mu} + \frac{\kappa^{T-1}}{1-\gamma},
    \end{align*}
    where the first inequality follows from recursive expansion of \eqref{eqn:recursive}, the second inequality is due to uniform bounding of the estimation error $\|\widehat Q^t - Q_{z^t}^\star\|_{\widetilde\mu} \leq \sup_{t\in[T]} \|\widehat Q^t - Q_{z^t}^\star\|_{\widetilde\mu}$ for all $t\in[T]$, and the last inequality is due to the bounded optimal value function $\|Q_z^*\|_\infty \leq B$ for all $z\in\calZ$.
    For regression on max-affine functions $\scrN_{B,L}^K$, we have with probability $1-\delta$ that
    \begin{align*}
        \|z^T-z^\star\|_1 & \lesssim \frac{L_F D^2\sqrt{D}}{(1-\gamma)(1-\kappa)\zeta} \left[\left( (d+1)B^4 + (B+dL)dL \right) M^{-\frac{4}{d+4}}\log M + \epsilon\right] + \kappa^T Z + \frac{\kappa^{T-1}}{1-\gamma}
    \end{align*}
    for
    \begin{gather*}
        M \gtrsim \frac{1}{\delta\log\frac{1}{\gamma}} \log \frac{M^{4/(d+4)}}{d^2L^2\sqrt{D}}, \qquad M^\frac{2d+4}{d+4} \log M \gtrsim \frac{B^3(B+4dL)}{d+1}, \\
        M \gtrsim \frac{\tau T}{\exp(2(d+1)M^{d/(d+4)})\delta}, \qquad M^\frac{d}{d+4} \gtrsim \frac{B^2}{d^2L^2}\log \frac{2\tau T}{\delta},
    \end{gather*} 
    where 
    \begin{align*}
        \tau \asymp \log \frac{M^{4/(d+4)}}{d^2L^2\sqrt{D}} / \log\frac{1}{\gamma}.
    \end{align*}
    The inequality follows from applying \cref{lem:fqi-approx} with a union bound. 
\end{proof}

In order to apply concave regression in estimating the optimal value function $Q_z^\star$ under the \MDP with transition kernel $\PP_z$, it is required that for any function $Q\in\scrC_{B,L}$, the Bellman optimality operator $\calT_z$ preserves the concavity, \ie, $\calT_z Q\in\scrC_{B,L}$.
As we show in \cref{lem:concavity-general,lem:concavity}, this holds for the Bellman optimality operator $\calT_z$ and $B$-bounded and $L$-Lipschitz concave function set $\scrC_{B,L}$ for $B = 1/(1-\gamma)$ and $L = L_r/(1-\gamma)$. 

\begin{lemma}\label{lem:concavity-general}
    Let $B \defeq 1/(1-\gamma)$ and $L \defeq L_r/(1-\gamma)$.
    For any mean-field term $z\in\calZ$ and the \MDP with stochastic concave transition kernel $\PP_z$, if any value function $Q\in\scrC_{B,L}$, then $\calT_z Q\in\scrC_{B,L}$.
    Consequently, the optimal value function $Q_z^\star\in\scrC_{B,L}$.
\end{lemma}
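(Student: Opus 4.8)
The plan is to prove the invariance claim \(\calT_z(\scrC_{B,L})\subseteq\scrC_{B,L}\) first, and then obtain \(Q_z^\star\in\scrC_{B,L}\) from it by a Banach fixed-point argument. I work with the concave formulation (equivalently, after the change of variables \(Q\mapsto -Q+B\) that turns the class into the convex class \(\scrC_{B,L}\) used in the paper); thus it suffices to show that if \(g\colon\calG\to[0,B]\) is concave and \(L\)-Lipschitz, then \(\calT_z g\) has all three properties, where \((\calT_z g)(s,a)=r_z(s,a)+\gamma\,(\calP_z \frakJ g)(s,a)\) and \(\frakJ g(s)=\max_{a\in\Gamma(s)} g(s,a)\). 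I would check concavity, the bound, and Lipschitzness in turn, the first being the crux.

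For concavity, I would first show \(\frakJ g\) is concave in \(s\): given \(s_1,s_2\) with maximizers \(a_1\in\Gamma(s_1)\), \(a_2\in\Gamma(s_2)\), the point \((\lambda s_1+(1-\lambda)s_2,\ \lambda a_1+(1-\lambda)a_2)\) lies in \(\calG\) by convexity (\cref{assump:convex}), so \(\lambda a_1+(1-\lambda)a_2\) is feasible at \(\lambda s_1+(1-\lambda)s_2\) and concavity of \(g\) yields \(\frakJ g(\lambda s_1+(1-\lambda)s_2)\ge \lambda\,\frakJ g(s_1)+(1-\lambda)\,\frakJ g(s_2)\); this is the standard fact that partial maximization of a jointly concave function over a convex feasible region is concave. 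Next I exploit the degenerate transition \eqref{eq:transition}: since the next state is \((b',w')=(a,w')\) with \(w'\sim\prob_z[\cdot\given w]\), we have \((\calP_z\frakJ g)(s,a)=\expect_{w'\sim\prob_z[\cdot\given w]}[\frakJ g(a,w')]\). The map \((a,w')\mapsto\frakJ g(a,w')\) is concave (restriction of the concave \(\frakJ g\) on \(\calS=\calB\times\calW\) to the slots \((a,w')\)), so viewing it as a function of \((b,w',a)\) that is constant in \(b\) and applying \cref{assump:stochastic_concave} shows \((b,w,a)\mapsto \expect_{w'}[\frakJ g(a,w')\given w]\) is concave on \(\calG\). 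Finally \(\calT_z g=r_z+\gamma\,\calP_z\frakJ g\) is a sum of concave functions since \(r_z\) is concave by \cref{assump:convex}, hence concave.

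The bound is immediate: \(r_z\) takes values in \([0,1]\) and \(\frakJ g\in[0,B]\), so \(\calT_z g\in[0,\,1+\gamma B]=[0,B]\) by the normalization \(B=1/(1-\gamma)\). For the Lipschitz property I would run an analogous but more routine argument: \(\frakJ g\) preserves the \(L\)-Lipschitz constant on the convex domain \(\calG\) (again via the partial-maximization comparison, now using that convexity of \(\calG\) forces the feasible sets \(\Gamma(s)\) to vary in a controlled way), the averaging operator \(\calP_z\) does not enlarge the Lipschitz constant along the capital and action coordinates and contributes only a transition term controlled by \cref{assum:lipschitz-transition} along the income coordinate, and combining this with the \(L_r\)-Lipschitzness of \(r_z\) together with the identity \(L=L_r+\gamma L\) gives that \(\calT_z g\) is \(L\)-Lipschitz, consistently with the bounds already recorded for \(Q_z^\pi\) in the text.

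For the ``consequently'' claim: \(\calG\) is compact (closed, convex, bounded), so \(\scrC_{B,L}\) is a nonempty closed subset of \(C(\calG)\) under the sup norm (closedness because concavity, the range constraint \([0,B]\), and the \(L\)-Lipschitz bound all pass to uniform limits), and \(\calT_z\) is a \(\gamma\)-contraction in \(\|\cdot\|_\infty\) mapping this closed set into itself by the invariance just proved; the Banach fixed-point theorem then places the unique fixed point, namely \(Q_z^\star\), in \(\scrC_{B,L}\) (equivalently, value iteration from \(Q^{(0)}\equiv 0\in\scrC_{B,L}\) stays in \(\scrC_{B,L}\) and converges uniformly to \(Q_z^\star\)). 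I expect the concavity step to be the main obstacle, and within it the careful invocation of \cref{assump:stochastic_concave}: one must use that the kernel is degenerate in the capital coordinate (\(b'=a\)) so that the one-step continuation value is a concave function of the pair (current action, next income), which is exactly the form to which stochastic concavity applies while still yielding joint concavity in \((s,a)\).
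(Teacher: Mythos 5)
Your proof is correct and follows essentially the same route as the paper's: concavity of the partial maximum $\frakJ Q$ over the convex feasible set $\calG$, the degenerate transition reducing the continuation value to $\expect_{w'}[\frakJ Q(a,w')\given w]$, stochastic concavity (Assumption~\ref{assump:stochastic_concave}) applied to that concave function, concavity of $r_z$ for the reward term, and a closedness/fixed-point argument for $Q_z^\star$ --- with your packaging through $\frakJ Q$ being slightly cleaner than the paper's pointwise-argmax bookkeeping. The one step where you are terse is the $L$-Lipschitz preservation (invoking Assumption~\ref{assum:lipschitz-transition} would actually contribute a $\gamma B L_\PP$ term in the income coordinate rather than the clean recursion $L = L_r + \gamma L$), but the paper's own proof asserts this step with no more detail, so this is a shared gap rather than a flaw of your argument.
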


\begin{proof}
    For any bounded Lipschitz value function $Q\in\scrC_{B,L}$, we fix $b,b' \in \calB$, $w,w'\in\calW$, $a\in\Gamma_z(b,w)$, and $a'\in\Gamma_z(b',w')$. For any $\lambda \in[0,1]$, we define $\overline{b} \defeq (1-\lambda) b+\lambda b'$, $\overline{n} \defeq (1-\lambda) n+\lambda w'$, and $\overline{a} \defeq (1-\lambda) a+\lambda a'$. 
    Note that the state-action pair $(B,\overline w,\overline a)$ lies in the feasible set $\calG$ due to the concavity assumption.
    It then follows that under Bellman optimality operator $\calT_z$ that
    \begin{align*}
        (1-\lambda) (\calT_z Q)(b,w,a) + \lambda (\calT_z Q)(b',w',a') & = (1-\lambda) r_z(b,w,a) + \lambda r_z(b',w',a') \\
        & \qquad + \gamma (1-\lambda) \int_\calW \PP_z(\widetilde w \mid w) Q(a, \widetilde w, a_{1,\widetilde w}) \diff \widetilde w \\
        & \qquad + \gamma \lambda \int_\calW \PP_z(\widetilde w \mid w') Q(a', \widetilde w, a_{2,\widetilde w}) \diff \widetilde w,
    \end{align*}
    where we define $a_{1,\widetilde w} \defeq \argmax_{a^\ddagger\in\Gamma(a,\widetilde w)} Q(a, \widetilde w, a^\ddagger)$ to be the optimal action under state $(a,\widetilde w)$ and $a_{2,\widetilde w} \defeq \argmax_{a^\ddagger\in\Gamma(a',\widetilde w)} Q(a', \widetilde w, a^\ddagger)$ to be the optimal action under state $(a',\widetilde w)$ for any $\widetilde w\in\calW$.
    Given the optimal actions $a_{1,\widetilde w}$ and $a_{2,\widetilde w}$, we further denote $\widehat a \defeq (1-\lambda) a_{1,\widetilde w} + \lambda a_{2,\widetilde w}$.
    Notice that 
    \begin{align*}
        (1-\lambda) r_z(b,w,a) + \lambda r_z(b',w',a') \leq r_z(\overline{b},\overline w,\overline a)
    \end{align*}
    due to the concavity of the reward function $r_z$.
    It also follows from the stochastic concavity of the transition kernel $\PP_z$ that
    \begin{align*}
        & (1-\lambda) \int_\calW \PP_z(\widetilde w \mid w) Q(a, \widetilde w, a_{1,\widetilde w}) \diff \widetilde w + \lambda \int_\calW \PP_z(\widetilde w \mid w') Q(a', \widetilde w, a_{2,\widetilde w}) \diff \widetilde w \\
        & \qquad \leq \int_\calW \PP_z(\widetilde w \mid \overline w) Q(\overline a, \widetilde w, \widehat a) \diff \widetilde w.
    \end{align*}
    Hence, we have the concavity of $\calT_z Q$ following
    \begin{align*}
        (1-\lambda) (\calT_z Q)(b,w,a) + \lambda (\calT_z Q)(b',w',a') & \leq r_z(\overline{b},\overline w,\overline a)+\gamma \int_\calW \PP_z(\widetilde w \mid w) Q(\overline a, \widetilde w, \widehat a) \diff \widetilde w \\
        & \leq r_z(\overline{b},\overline w,\overline a)+\gamma \int_\calW \PP_z(\widetilde w \mid w) Q(\overline a, \widetilde w, a^\dagger) \diff \widetilde w \\
        & = (\calT_z Q)(\overline{b},\overline w,\overline a),
    \end{align*}
    where the second inequality is due to the the greedy action $Q(\overline a, \widetilde w, \widehat a) \leq \max_{a^\dagger} Q(\overline a, \widetilde w, a^\dagger)$, and the last equality is due to $a^\dagger \defeq \max_{a\in\Gamma(\overline a,\widetilde w)} Q(\overline a, \widetilde w, a^\dagger)$. The Lipschitzness of $\calT_z Q$ follows from $L = L_r/(1-\gamma)$.
        
    Recall that $Q_z^\star$ is the fixed point of $\calT_z$, and our desired conclusion follows from the closeness of $\scrC_{B,L}$ and the operation $\calT_z$ that preserves concavity and Lipschitzness of the value function.
\end{proof}

\begin{lemma}\label{lem:concavity}
    Let $B \defeq 1/(1-\gamma)$ and $L \defeq L_r/(1-\gamma)$.
    For any \MDP with transition kernel $\PP_z$ and any fixed income variable $w\in\calW$, if any value function $Q(\cdot,w,\cdot)\in\scrC_{B,L}$, then $\calT_z Q(\cdot,w,\cdot)\in\scrC_{B,L}$.
    Consequently, $Q_z^\star(\cdot,w,\cdot)\in\scrC_{B,L}$ for any $w \in \calW$.
\end{lemma}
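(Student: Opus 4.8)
The plan is to exploit the special structure of the income transition \eqref{eq:transition}: from state $(b,w)$ under action $a$, the next state is $(a,w')$ with $w'$ drawn from the income shock $\PP_z[\cdot\mid w]$, which is a fixed stochastic matrix over the finite set $\calW$ that does not depend on $(b,a)$. Hence the Bellman optimality operator takes the form
\begin{align*}
    (\calT_z Q)(b,w,a) = r_z(b,w,a) + \gamma\sum_{w'\in\calW}\PP_z[w'\mid w]\,(\frakJ Q)(a,w'),
\end{align*}
where $(\frakJ Q)(b',w')=\max_{a'\in\Gamma(b',w')}Q(b',w',a')$. I would show that, for each fixed $w$, the right-hand side is a concave, nonnegative, $B$-bounded, $L$-Lipschitz function of $(b,a)$ whenever $Q(\cdot,w',\cdot)\in\scrC_{B,L}$ for every $w'\in\calW$; since this property is then preserved by $\calT_z$, the statement for $Q_z^\star$ follows from a fixed-point argument.

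The concavity step splits into three elementary observations. First, $r_z(\cdot,w,\cdot)$ is concave in $(b,a)$ as the restriction of the jointly concave $r_z$ from \cref{assump:convex}. Second, for each fixed $w'$, the map $b'\mapsto(\frakJ Q)(b',w')$ is concave: it is the partial maximization $\sup\{Q(b',w',a'):a'\in\Gamma(b',w')\}$ of the jointly concave function $Q(\cdot,w',\cdot)$ over the feasible slice $\{(b',a'):a'\in\Gamma(b',w')\}$, which is convex because $\calG$ is convex by \cref{assump:convex}, and partial maximization of a jointly concave function over a convex set preserves concavity. Third, $a\mapsto\sum_{w'}\PP_z[w'\mid w]\,(\frakJ Q)(a,w')$ is a nonnegative combination of these concave functions, hence concave in $a$ and therefore concave in $(b,a)$; adding the two concave pieces gives concavity of $(\calT_z Q)(\cdot,w,\cdot)$. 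The key point is that, because $\calW$ is finite, the expectation over the next income is a fixed finite convex combination, so --- unlike in \cref{lem:concavity-general} --- no stochastic concavity of $\PP_z$ is needed.

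For the remaining conditions, since $r_z\in[0,1]$ and $0\le(\frakJ Q)\le B$, we have $0\le\calT_z Q\le 1+\gamma B=B$, and the $L$-Lipschitz bound follows from $L_r+\gamma L=L$ exactly as in the proof of \cref{lem:concavity-general}, using that $(\frakJ Q)(\cdot,w')$ inherits the $L$-Lipschitz bound of $Q(\cdot,w',\cdot)$. Thus $\calT_z$ maps the class $\{Q:Q(\cdot,w,\cdot)\in\scrC_{B,L}\ \text{for all }w\in\calW\}$ into itself; starting from $Q^0\equiv 0$, which lies in this class, iterating $\calT_z$ yields a sequence converging in sup-norm to the fixed point $Q_z^\star$ by the standard $\gamma$-contraction, and closedness of the class under sup-norm limits gives $Q_z^\star(\cdot,w,\cdot)\in\scrC_{B,L}$ for every $w\in\calW$.

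The step I expect to require the most care is the partial-maximization argument: one must check that the feasible slice $\{(b',a'):a'\in\Gamma(b',w')\}$ is convex (where convexity of $\calG$ enters) and that the maximum is attained, and, when tracking the Lipschitz constant, that $\Gamma(\cdot,w')$ varies regularly enough near the boundary of $\calB$ that $(\frakJ Q)(\cdot,w')$ stays $L$-Lipschitz there --- the same point handled implicitly in \cref{lem:concavity-general}. The conceptual takeaway is that since income evolves through a fixed finite stochastic matrix independent of the continuous variables $(b,a)$, concavity in the $(b,a)$-slice propagates through the Bellman operator with no joint (stochastic) concavity requirement.
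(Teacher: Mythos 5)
Your proposal is correct and follows essentially the same route as the paper's proof: decompose $\calT_z Q$ into the concave reward plus the discounted expectation over the next income, observe that for fixed $w$ this expectation is a mixture with a fixed weighting measure $\PP_z(\cdot\mid w)$ so that only concavity of $(\frakJ Q)(\cdot,w')$ is needed (no stochastic concavity), and close with boundedness, Lipschitzness, and the fixed-point argument. The only cosmetic difference is that you invoke the standard partial-maximization-preserves-concavity fact for $(\frakJ Q)(\cdot,w')$, whereas the paper proves that step inline by selecting the argmax actions $a_{1,\widetilde w},a_{2,\widetilde w}$, forming their convex combination, and bounding by the max --- the same argument unrolled.
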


\begin{proof}
    For any bounded Lipschitz value function $Q(\cdot,w,\cdot)\in\scrC_{B,L}$ with fixed $w\in\calW$, we pick any $b,b' \in \calB$, $a\in\Gamma_z(b,w)$, and $a'\in\Gamma_z(b',w)$. For any $\lambda \in[0,1]$, we define $\overline{b} \defeq (1-\lambda) b+\lambda b'$, and $\overline{a} \defeq (1-\lambda) a+\lambda a'$. 
    Note that the state-action pair $(\overline b,w,\overline a)$ lies in the feasible set $\calG$ due to the concavity assumption.
    It then follows that under Bellman optimality operator $\calT_z$
    \begin{align*}
        (1-\lambda) (\calT_z Q)(b,w,a) + \lambda (\calT_z Q)(b',w,a') & = (1-\lambda) r_z(b,w,a) + \lambda r_z(b',w,a') \\
        & \qquad + \gamma (1-\lambda) \int_\calW \PP_z(\widetilde w \mid w) Q(a, \widetilde w, a_{1,\widetilde w}) \diff \widetilde w \\
        & \qquad + \gamma \lambda \int_\calW \PP_z(\widetilde w \mid w) Q(a', \widetilde w, a_{2,\widetilde w}) \diff \widetilde w,
    \end{align*}
    where we define $a_{1,\widetilde w} \defeq \argmax_{a^\ddagger\in\Gamma(a,\widetilde w)} Q(a, \widetilde w, a^\ddagger)$ to be the optimal action under state $(a,\widetilde w)$ and $a_{2,\widetilde w} \defeq \argmax_{a^\ddagger\in\Gamma(a',\widetilde w)} Q(a', \widetilde w, a^\ddagger)$ to be the optimal action under state $(a',\widetilde w)$ for any $\widetilde w\in\calW$. Given the optimal actions $a_{1,\widetilde w}$ and $a_{2,\widetilde w}$, we further denote $\widehat a_{\widetilde w} \defeq (1-\lambda) a_{1,\widetilde w} + \lambda a_{2,\widetilde w}$.
    Notice that 
    \begin{align*}
        (1-\lambda) r_z(b,w,a) + \lambda r_z(b',w,a') \leq r_z(\overline{b},w,\overline a)
    \end{align*}
    due to the concavity of the reward function $r_z$.
    It also holds that
    \begin{align*}
        & (1-\lambda) \int_\calW \PP_z(\widetilde w \mid w) Q(a, \widetilde w, a_{1,\widetilde w}) \diff \widetilde w + \lambda \int_\calW \PP_z(\widetilde w \mid w) Q(a', \widetilde w, a_{2,\widetilde w}) \diff \widetilde w \\
        & \qquad = \int_\calW \PP_z(\widetilde w \mid w) [(1-\lambda) Q(a, \widetilde w, a_{1,\widetilde w}) \diff \widetilde w + \lambda Q(a', \widetilde w, a_{2,\widetilde w})] \diff \widetilde w \\
        & \qquad \leq \int_\calW \PP_z(\widetilde w \mid w) Q(\overline a, \widetilde w,\widehat a_{\widetilde w}) \diff \widetilde w,
    \end{align*}
    where the inequality is due to the concavity of the value function $Q$.
    Hence, we have the concavity of $\calT_z Q$ following
    \begin{align*}
        (\calT_z Q)(\overline{b},w,\overline a) & = r_z(\overline{b},w,\overline a)+\gamma \int_\calW \PP_z(\widetilde w \mid w) Q(\overline a, \widetilde w, a^\dagger) \diff \widetilde w \\
        & \geq r_z(\overline{b},w,\overline a)+\gamma \int_\calW \PP_z(\widetilde w \mid w) Q(\overline a, \widetilde w, \widehat a_{\widetilde w}) \diff \widetilde w,
    \end{align*}
    where the equality is due to the the greedy action $a^\dagger \defeq \max_{a\in\Gamma(\overline a,\widetilde w)} Q(\overline a, \widetilde w, a^\dagger)$, and the inequality is due to $Q(\overline a, \widetilde w, \widehat a) \leq  Q(\overline a, \widetilde w, a^\dagger)$. 
    
    Recall that $r_z(b,w,a)\in[0,1]$ for all $(b,w,a)\in\calG$. For any value function $Q$ such that $\|Q\|_\infty \leq \frac{1}{1-\gamma} = B$, we have $\|\calT_z Q\|_\infty \leq 1 + \frac{\gamma}{1-\gamma} = \frac{1}{1-\gamma} = B$ following the Bellman equation.
    The Lipschitzness of $\calT_z Q$ similarly follows from $L = L_r/(1-\gamma)$. Hence, we have $(\calT_z Q)(\cdot,w,\cdot)\in\scrC_{B,L}$ if any value function $Q(\cdot,w,\cdot)\in\scrC_{B,L}$.
    
    Notice that $Q_z^\star$ is the fixed point of $\calT_z$, and the operation $\calT_z$ preserving concavity and Lipschitzness of any value function $Q$ in $\scrC_{B,L}$ and the closeness of $\scrC_{B,L}$ lead to $Q_z^\star\in\scrC_{B,L}$.
\end{proof}

\subsection{Proof of FQI with Concave Regression}

\FQI is a popular offline \RL algorithm, together with its online value iteration counterpart, has been analyzed by a large body of literature under different settings~\citep{bertsekas2012dynamic,grunewalder2012modelling,fonteneau2013batch,sutton2018reinforcement,kumar2019stabilizing,zhu2020convex,min2021variance,fei2022cascaded,min2022learning,lu2023pessimism}. In particular, our method features \FQI with concave regression that integrates the economic insights to the algorithm.
 
We would like to introduce some notation for the proofs of \cref{algo:q-iteration} that follows. 
For any function $f$ and dataset $\calD_f = \{(X_m,Y_m)\}_{m=1}^M$ where data pairs $(X,Y)$ are \iid sampled from a distribution $\scrD$ such that $\expect_\scrD[Y\given X = x] = f(x)$, we define the empirical risk of any estimator $f'$ of the ground truth $f$ as 
\begin{align*}
    \calL_\calD(f',f) & \defeq \frac{1}{M} \sum_{m=1}^M (f'(X_m) - Y_m))^2,
\end{align*}
where the subscript $f$ on $\calD$ is omitted for simplicity.
Further, we define the true risk of the estimator $f'$ as 
\begin{align*}
    \calL(f',f) \defeq \expect[(f'(X) - Y))^2] = \expect_\calD[\calL_\calD(f',f)],
\end{align*}
where the expectation is taken with respect to the underlying distribution $\scrD$.
For any function class $\scrF$ and estimator $f'\in\scrF$, we denote 
\begin{align*}
    \calE_{\calD}(f',f) \defeq \calL_\calD(f',f) - \inf_{g\in\scrF} \calL_\calD(g,f)
\end{align*}
as the sub-optimality of $f'$ compared to the best empirical risk estimator of $f$ in $\scrF$.
We say $f'$ is an $\epsilon$-approximate \LSE if $\calE_{\calD}(f',f) \leq \epsilon$, and we denote the set of all $\epsilon$-approximate \LSEs $f'\in\scrF$ on data $\calD_f$ as $\mathrm{LSE}(\scrF,\calD_f;\epsilon)$.
Similarly, we also define 
\begin{align*}
    \calE(f',f) \defeq \calL(f',f) - \inf_{g\in\scrF} \calL(g,f)
\end{align*}
to be the sub-optimality of $f'$ in terms of true risk over distribution $\scrD$.
In particular, for any offline dataset $\calD_{\calT Q} \defeq \{(s_m,a_m,r_m,s'_m)\}_{m=1}^M$ \iid sampled from $(s,a)\sim\widetilde\mu$ under the transition kernel $s'\sim\PP(\cdot\given s,a)$, we aim to estimate $\calT Q$ with concave regression given any value function $Q\in\scrC_{B,L}$.
It follows that
\begin{align*}
    \calL_\calD(f',\calT Q) & = \frac{1}{M} \sum_{m=1}^M (f'(s_m,a_m) - r_m - \gamma V(s'_m))^2
\end{align*}
and for any function set $\scrF$,
\begin{align*}
    \calE_{\calD}(f',\calT Q) \defeq \calL_\calD(f',\calT Q) - \inf_{g\in\scrF} \calL_\calD(g,\calT Q),
\end{align*}
where $V(s'_m) \defeq \max_{a\in\Gamma(s'_m)} Q(s'_m,a)$ is the the action-value function under the greedy policy with respect to $Q$ evaluated at $s'_m$. Further, we have $\calL(f',\calT Q) \defeq \expect_\calD[\calL_\calD(f',\calT Q)]$.
We write $\calD_{\calT Q}$ as $\calD$ for simplicity when the context is clear.
Before getting into the proof for regression, we first bound the estimation error of \FQI using a similar argument as in \citet{chen2019information}.

\begin{lemma}\label{lem:fqi-error}
    Let $Q_z^\star$ be the optimal value function under the \MDP with transition kernel $\PP_z$ and $\widetilde Q_z^\ell$ be the estimation of $Q_z^\star$ at the $\ell$-th iteration of \cref{algo:q-iteration}. Then for any policy-induced measure $\mu$, it holds that the distance between $Q_z^\star$ and the estimator $\widetilde Q_z^\tau$ of \cref{algo:q-iteration} is bounded by
    \begin{align*}
        \|\widetilde Q_z^\tau - Q_z^\star\|_\mu \leq \sum_{\ell=1}^\tau \gamma^{\tau-\ell} \sqrt{D}\cdot \|\widetilde Q_z^\ell - \calT_z \widetilde Q_z^{\ell-1}\|_{\widetilde\mu} + \frac{\gamma^\tau}{1-\gamma}.
    \end{align*}
\end{lemma}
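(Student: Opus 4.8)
The plan is to mimic the standard error-propagation argument for approximate value iteration (as in \citet{chen2019information}), but carefully tracking the concentrability constant $D$ from \cref{assum:concentrability} since the measure $\mu$ under which we want the final bound is a policy-induced measure, while the regression error is controlled under the sampling measure $\widetilde\mu$. First I would introduce the per-iteration Bellman residual $\varepsilon_\ell \defeq \widetilde Q_z^\ell - \calT_z \widetilde Q_z^{\ell-1}$, so that $\widetilde Q_z^\ell = \calT_z \widetilde Q_z^{\ell-1} + \varepsilon_\ell$ by definition of \cref{algo:q-iteration}. The goal is then to unroll the recursion relating $\widetilde Q_z^\ell - Q_z^\star$ to $\widetilde Q_z^{\ell-1} - Q_z^\star$, using that $Q_z^\star$ is the fixed point of $\calT_z$, i.e.\ $Q_z^\star = \calT_z Q_z^\star$.

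The key step is to control the one-step contraction. Writing $\widetilde Q_z^\ell - Q_z^\star = (\calT_z \widetilde Q_z^{\ell-1} - \calT_z Q_z^\star) + \varepsilon_\ell$, I would show that $\calT_z$ is a $\gamma$-contraction in the appropriate weighted $L^1$ (or $L^2$) sense when we pass from one step to the next: $|(\calT_z Q)(s,a) - (\calT_z Q')(s,a)| \le \gamma\, \expect_{s'\sim\PP_z(\cdot\mid s,a)}[\max_{a'\in\Gamma(s')}|Q(s',a') - Q'(s',a')|]$, which follows from the inequality $|\max_x f(x) - \max_x g(x)| \le \max_x |f(x) - g(x)|$ already used in the proof of \cref{lem:lipschitz-action-value}. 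Integrating this against $\mu$ and applying \cref{assum:concentrability} introduces the factor $\sqrt{D}$ when converting the pushforward measure $\PP_z\mu$ (composed with the greedy/argmax policy) into the fixed sampling measure $\widetilde\mu$: concretely $\|\calT_z Q - \calT_z Q'\|_\mu \le \gamma \|Q - Q'\|_{\PP_z\mu\times\pi^\ddagger} \le \gamma\sqrt{D}\,\|Q - Q'\|_{\widetilde\mu}$ after bounding $\mu_{z}^\pi/\widetilde\mu \le D$. Applying this one-step bound recursively from $\ell=\tau$ down to $\ell=1$, each residual $\varepsilon_\ell$ picks up a factor $\gamma^{\tau-\ell}\sqrt{D}$ (the residual at step $\ell$ is fed through $\tau-\ell$ subsequent contractions), yielding the $\sum_{\ell=1}^\tau \gamma^{\tau-\ell}\sqrt{D}\,\|\varepsilon_\ell\|_{\widetilde\mu}$ term. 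The leftover term comes from the initialization: $\widetilde Q_z^0 = 0$ while $\|Q_z^\star\|_\infty \le B = 1/(1-\gamma)$, and after $\tau$ contractions of factor $\gamma$ each this contributes $\gamma^\tau/(1-\gamma)$ (here no $\sqrt{D}$ is needed since one can bound $\|\widetilde Q_z^0 - Q_z^\star\|_\mu \le \|Q_z^\star\|_\infty \le B$ uniformly over any probability measure $\mu$).

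The main obstacle is the bookkeeping with the two measures: one must be careful that the ``greedy policy'' implicit in $\frakJ$ (the $\max_{a'\in\Gamma(s')}$) is a legitimate (possibly non-stationary) policy so that \cref{assum:concentrability}, which is stated for arbitrary non-stationary policies, actually applies to the composed measure $\PP_z\mu$ pushed forward by the sequence of argmax selections arising at each unrolling step. Concretely, at iteration $\ell$ the relevant comparison policy is the one that takes the action maximizing $|\widetilde Q_z^{\ell-1}(s,\cdot) - Q_z^\star(s,\cdot)|$ or the greedy action w.r.t.\ $\widetilde Q_z^{\ell-1}$, and these differ across $\ell$, so the effective policy is genuinely non-stationary — which is exactly why \cref{assum:concentrability} was stated in that generality. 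Once this is in place, the rest is a routine telescoping of the geometric series $\sum_{\ell} \gamma^{\tau-\ell}$ and the terminal $\gamma^\tau/(1-\gamma)$ term, and the lemma follows.
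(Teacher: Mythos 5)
Your proposal follows essentially the same route as the paper's proof: decompose $\widetilde Q_z^\ell - Q_z^\star$ into the Bellman residual plus $\calT_z\widetilde Q_z^{\ell-1} - \calT_z Q_z^\star$, contract the latter via $|\max_x f(x)-\max_x g(x)|\le\max_x|f(x)-g(x)|$ and Jensen under the pushforward measure, convert each residual to the sampling measure $\widetilde\mu$ with a single $\sqrt{D}$ from the concentrability assumption, and unroll, with the initialization contributing $\gamma^\tau/(1-\gamma)$. One caution on bookkeeping: the $\sqrt{D}$ conversion should be applied only to each residual term while the recursion itself is carried forward under the evolving policy-induced measure $\PP_z\mu\times\pi^\ddagger$ (which is exactly why the assumption must cover non-stationary policies, as you note); if you instead applied your displayed chain $\gamma\|Q-Q'\|_{\PP_z\mu\times\pi^\ddagger}\le\gamma\sqrt{D}\|Q-Q'\|_{\widetilde\mu}$ at every unrolling step, each residual would pick up $(\gamma\sqrt{D})^{\tau-\ell}$ rather than the $\gamma^{\tau-\ell}\sqrt{D}$ you correctly state in your final accounting.
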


\begin{proof}
    Under the \MDP induced by mean-field term $z$ and policy-induced measure $\mu\in\Delta_\calG$, it holds for any $\ell\in[\tau]$ that
    \begin{align}\label{eqn:q-star-estimation}
        \|\widetilde Q_z^\ell - Q_z^\star\|_\mu & \leq \|\widetilde Q_z^\ell - \calT_z \widetilde Q_z^{\ell-1}\|_\mu + \|\calT_z \widetilde Q_z^{\ell-1} - Q_z^\star\|_\mu,
    \end{align}
    where $\calT_z Q$ for any value function $Q$ denotes the Bellman operator induced by the mean-field term $z$ and the greedy policy with respect to $Q$. 
    Notice that the first term of \eqref{eqn:q-star-estimation} represents the estimation error of $\widetilde Q_z^\ell$ to the target function $\calT_z \widetilde Q_z^{\ell-1}$.
    For any measure $\mu$, the second term of \eqref{eqn:q-star-estimation} can be further bounded through
    \begin{align*}
        \|\calT_z \widetilde Q_z^{\ell-1} - Q_z^\star\|_\mu^2 & = \|\calT_z \widetilde Q_z^{\ell-1} - \calT_z Q_z^\star\|_\mu^2 \\
        & = \expect_\mu[(\calT_z \widetilde Q_z^{\ell-1}(s,a) - \calT_z Q_z^\star(s,a))^2] \\
        & = \expect_\mu[(\gamma \PP[\widetilde V_z^{\ell-1}(s') - V_z^\star(s')])^2],
    \end{align*}
    where the first equality is due to the fact that the optimal value function $Q_z^\star$ is the fixed point of $\calT_z$, and the second equality follows from expanding the terms $\calT_z \widetilde Q_z^{\ell-1}(s,a) = r_z(s,a) + \gamma \int_\calS \PP_z(s'\given s,a) \widetilde V_z^{\ell-1}(s') \diff s$ and $\calT_z Q_z^\star(s,a) = r_z(s,a) + \gamma \int_\calS \PP_z(s'\given s,a) V_z^\star(s') \diff s$. Apply Jensen's inequality to the right-hand side of the second inequality, we have  
    \begin{align*}
        \|\calT_z \widetilde Q_z^{\ell-1} - Q_z^\star\|_\mu^2 & \leq \gamma^2 \expect_\mu \PP_z[(\widetilde V_z^{\ell-1}(s') - V_z^\star(s'))^2] \\
        & = \gamma^2 \expect_{\PP_z\mu}[(\widetilde V_z^{\ell-1}(s') - V_z^\star(s'))^2],
    \end{align*}
    where we define a shorthand $s'\sim\PP_z\mu$ that implies $s'\sim \PP_z(\cdot\given s,a)$ and $(s,a)\sim \mu$.
    In particular, we further expand according to the definitions of $\widetilde V_z^{\ell-1}(s')$ and $V_z^\star(s')$ to get
    \begin{align*}
        \expect_{\PP_z\mu}[(\widetilde V_z^{\ell-1}(s') - V_z^\star(s'))^2] & = \expect_{\PP_z\mu}\Big[\big(\max_{a\in\calA} \widetilde Q_z^{\ell-1}(s',a) - \max_{a\in\calA} Q_z^\star(s',a)\big)^2\Big] \\
        & \leq \expect_{\PP_z\mu}\Big[\big(\widetilde Q_z^{\ell-1}(s',\pi^\ddagger(\widetilde Q_z^{\ell-1},Q_z^\star)) - Q_z^\star(s',\pi^\ddagger(\widetilde Q_z^{\ell-1},Q_z^\star))\big)^2\Big] \\
        & = \|\widetilde Q_z^{\ell-1} - Q_z^\star\|_{\PP_z\mu\times\pi^\ddagger(\widetilde Q_z^{\ell-1},Q_z^\star)}^2,
    \end{align*}
    where the inequality follows from the inequality follows from the fact that $|\max_{x\in\calX} f(x) - \max_{y\in\calX} g(y)| \leq \max_{x\in\calX} |f(x) - g(x)|$ for any $x,y$ in the shared domain $\calX$. Especially, the policy $\pi^\ddagger(\widetilde Q_z^{\ell-1},Q_z^\star)$ is defined to take the action to maximize the difference between value functions $\widetilde Q_z^{\ell-1}$ and $Q_z^\star$, \ie, $\pi^\ddagger(\widetilde Q_z^{\ell-1},Q_z^\star)(s) \defeq \argmax_{a\in\calA} |\widetilde Q_z^{\ell-1}(s,a) - Q_z^\star(s,a)|$. Combine the upper bounds above together, for any $\ell\in[\tau]$, we have
    \begin{align}
        \|\widetilde Q_z^\ell - Q_z^\star\|_\mu & \leq \|\widetilde Q_z^\ell - \calT_z \widetilde Q_z^{\ell-1}\|_\mu + \gamma \|\widetilde Q_z^{\ell-1} - Q_z^\star\|_{\PP_z\mu\times\pi^\ddagger(\widetilde Q_z^{\ell-1},Q_z^\star)} \nonumber \\
        & \leq \sqrt{D}\|\widetilde Q_z^\ell - \calT_z \widetilde Q_z^{\ell-1}\|_{\widetilde\mu} + \gamma \|\widetilde Q_z^{\ell-1} - Q_z^\star\|_{\PP_z\mu\times\pi^\ddagger(\widetilde Q_z^{\ell-1},Q_z^\star)}, \label{eqn:q-estimation-recursion}
    \end{align}
    where the inequality follows from \cref{assum:concentrability}.
    Notice that $\PP_z\mu\times\pi^\ddagger(\widetilde Q_z^{\ell-1},Q_z^\star)$ is also a policy-induced measure, and we apply the above argument recursively on $\|\widetilde Q_z^{i-1} - Q_z^\star\|_{\PP_z\mu\times\pi_{\widetilde Q_z^{i-1},Q_z^\star}}$ for all $i\in[\ell]$. Take $\ell = \tau$ and expand the inequality \eqref{eqn:q-estimation-recursion} for $\tau$ times, we obtain
    \begin{align*}
        \|\widetilde Q_z^\tau - Q_z^\star\|_\mu & \leq \sum_{\ell=1}^\tau \gamma^{\tau-\ell} \sqrt{D}\cdot \|\widetilde Q_z^\ell - \calT_z \widetilde Q_z^{\ell-1}\|_{\widetilde\mu} + \frac{\gamma^\tau}{1-\gamma},
    \end{align*}
    where $\|\widetilde Q_z^\ell - \calT_z \widetilde Q_z^{\ell-1}\|_{\widetilde\mu}$ is the estimation error for target function $\calT_z \widetilde Q_z^{\ell-1}$.
\end{proof}

\begin{lemma}\label{lem:approx-error}
    Let $B \defeq 1/(1-\gamma)$ and $L \defeq L_r/(1-\gamma)$, and for any mean-field term $z$, $Q\in\scrC_{B,L}$, and data $\calD_{\calT_z Q} \defeq \{(s_m,a_m,r_m,s'_m)\}_{m=1}^M$, we define $\widehat f\in\mathrm{LSE}(\scrC_{B,L},\calD_{\calT_z Q};\epsilon)$. Then for any measure $\widetilde\mu\in\Delta_\calG$ and any training set size $M \defeq |\calD_{\calT_z Q}|$ such that $M^\frac{d}{d+2} \gtrsim \log M$, $M^\frac{d}{d+2}\log M \gtrsim B^\frac{3d}{d+2} (d+1)^\frac{-2}{d+2} (dL)^\frac{-d}{d+2} \log\frac{1}{\delta}$, and  
    \begin{align*}
        (\log M)^\frac{d}{2} \gtrsim 1 + \log(R_d^*B^\frac{-6}{d+2} (d+1)^\frac{-2}{d+2} (dL)^\frac{-d}{d+2}),
    \end{align*}
    with probability at least $1-\delta$
    \begin{align*}
        \|\widehat f - \calT_z Q\|_{\widetilde\mu}^2 \lesssim B^\frac{d+8}{d+2} (d+1)^\frac{2}{d+2} (dL)^\frac{d}{d+2} M^{-\frac{2}{d+2}}\log M + \epsilon,
    \end{align*}
    where $R_d^* \leq \max\{8dL, 2B+4dL\}$.
\end{lemma}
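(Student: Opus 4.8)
The plan is to recognize \cref{lem:approx-error} as a well-specified nonparametric least-squares bound and to run the standard empirical-process argument for shape-constrained regression, tracking the dependence on $B$, $L$, and $d$ throughout. First I would observe, via \cref{lem:concavity-general}, that the regression target $f^\star\defeq\calT_z Q$ itself lies in $\scrC_{B,L}$; writing $X_m\defeq(s_m,a_m)\sim\widetilde\mu$ and $Y_m\defeq r_m+\gamma(\frakJ Q)(s'_m)$, the definition of $\calT_z$ gives $\expect[Y_m\given X_m]=f^\star(X_m)$, while $Y_m\in[0,B]$ almost surely (reward in $[0,1]$ plus $\gamma(\frakJ Q)\in[0,\gamma B]$), so $\xi_m\defeq Y_m-f^\star(X_m)$ is a bounded, conditionally centered noise. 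The quantity to bound is $\|\widehat f-f^\star\|_{\widetilde\mu}^2$ for an $\epsilon$-approximate LSE $\widehat f$ over $\scrC_{B,L}$.

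Second, I would use the basic inequality: since $f^\star\in\scrC_{B,L}$, the $\epsilon$-approximate optimality of $\widehat f$ gives $\frac1M\sum_m(\widehat f(X_m)-Y_m)^2\le\frac1M\sum_m(f^\star(X_m)-Y_m)^2+\epsilon$, which after substituting $Y_m=f^\star(X_m)+\xi_m$ and rearranging reads $\|\widehat f-f^\star\|_M^2\le\frac2M\sum_m\xi_m(\widehat f-f^\star)(X_m)+\epsilon$, where $\|\cdot\|_M$ denotes the empirical $L_2$ norm. The cross term is then controlled by a localized supremum of $\frac1M\sum_m\xi_m g(X_m)$ over $g$ in the difference class $\scrC_{B,L}-\scrC_{B,L}$ (elements $B$-bounded in $\ell_\infty$ and $2L$-Lipschitz), via a peeling argument together with Dudley's entropy integral. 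The only distributional input is a metric-entropy bound for bounded Lipschitz convex functions of the crude, convexity-agnostic form $\log N(\delta,\scrC_{B,L},\|\cdot\|_\infty)\lesssim(d+1)(dLB/\delta)^d\log(B/\delta)$, whose $d^d$-scale constant is what produces the $(dL)^{d/(d+2)}(d+1)^{2/(d+2)}$ factors in the statement. Solving the resulting critical-radius fixed-point equation, and using boundedness of $\xi_m$ to pass from an in-expectation bound to a high-probability one (Bernstein/Talagrand, costing an additive $\log(1/\delta)$-type term), yields the first and third terms of the claimed bound; the three conditions imposed on $M$ are precisely the thresholds ensuring that (i) the critical-radius equation is solvable at the claimed order, (ii) the Bernstein tail term is of lower order, and (iii) the upper-cutoff contribution of the entropy integral---governed by the $\ell_\infty$-diameter $R_d^\star\le\max\{8dL,2B+4dL\}$ of the localized difference class---is absorbed.

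Third, I would transfer from the empirical to the population norm: a uniform concentration of empirical $L_2$ norms over the bounded Lipschitz class, using the same entropy input and a bounded-differences argument, gives $\|\widehat f-f^\star\|_{\widetilde\mu}^2\lesssim\|\widehat f-f^\star\|_M^2+(\text{terms of lower order under the stated conditions on }M)$; combining with the previous display delivers the bound. (Alternatively one could quote an off-the-shelf convex or Lipschitz least-squares risk bound and add the $\epsilon$ slack from approximate minimization, but the explicit $B,L,d$ constants in the statement seem to require carrying the argument out by hand.)

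The main obstacle is bookkeeping rather than anything conceptual: making every factor of $B$, $L$, and $d$ assemble into the exponents $\tfrac{d+8}{d+2}$, $\tfrac{2}{d+2}$, $\tfrac{d}{d+2}$ with a single $\log M$; identifying $R_d^\star$ correctly as the $\ell_\infty$-diameter of the relevant peeling shell; and checking that each of the three lower bounds on $M$ is exactly the condition needed to discard its corresponding error term. The localization and peeling steps themselves are routine (as in standard treatments of nonparametric least squares, e.g.\ Wainwright, Chapter~14), so the real work is the explicit constant tracking.
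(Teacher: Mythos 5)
Your overall framing (well-specified regression via \cref{lem:concavity-general}, bounded centered noise, $\epsilon$-slack handled by comparing against the empirical minimizer) matches the paper, but there is a genuine gap in the key quantitative input: the metric entropy you propose to use. You explicitly invoke a ``convexity-agnostic'' bound $\log N_\infty(\delta,\scrC_{B,L})\lesssim (d+1)(dLB/\delta)^{d}\log(B/\delta)$, i.e.\ exponent $d$ in $1/\delta$. The paper instead uses \cref{lem:covering-convex} (Bal\'azs et al.), which exploits convexity to get exponent $d/2$: $\log\calN_\infty(\eps,\scrC_{B,L})\leq 2(d+1)(80L_d/\eps)^{d/2}\ln(10R_d^*/\eps)$. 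This difference is not bookkeeping --- it determines the rate. Balancing $\log\calN(\beta)\asymp \alpha M/B^4$ with $\alpha\asymp\beta$ and entropy exponent $d/2$ gives $\beta\asymp\bigl(B^{4+d/2}(d+1)(dL)^{d/2}/M\bigr)^{2/(d+2)}$, which is exactly the claimed $B^{\frac{d+8}{d+2}}(d+1)^{\frac{2}{d+2}}(dL)^{\frac{d}{d+2}}M^{-\frac{2}{d+2}}\log M$. With your exponent-$d$ entropy the same balance yields only $M^{-1/(d+1)}$ (and your localized route would give $M^{-1/d}$ for $d>2$), both strictly worse than $M^{-2/(d+2)}$, so the lemma as stated would not follow. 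The $d/2$ exponent is precisely where the concave shape constraint pays off, and it must be used.

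On the route itself: the paper does not run the basic-inequality/peeling/critical-radius machinery at all. It applies the ratio-type uniform deviation inequality of \citet[Theorem 11.4]{gyorfi2002distribution} (\cref{thm:empirical-bound}) with $\varrho=1/2$, which in one shot bounds the \emph{population} excess risk $\calL(\widehat f,\calT_z Q)-\calL(\calT_z Q,\calT_z Q)=\|\widehat f-\calT_z Q\|_{\widetilde\mu}^2$ by $\alpha+\beta+2\epsilon$ with failure probability $14\,\calN_1(\beta/(40B))\exp(-cM\alpha/B^4)$; no separate empirical-to-population transfer step is needed, and the three conditions on $M$ come from solving for $\alpha,\beta$ and ensuring $\eps\leq 80L_d$ in the covering bound. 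Your localized argument, if carried out with the correct $d/2$ entropy, would in fact give a faster rate ($M^{-4/(d+4)}$ for $d<4$, $M^{-2/d}$ for $d\geq 4$) and hence still imply the lemma, but it proves a different-looking bound and requires the extra norm-transfer step you sketch; the paper's global argument is the one that reproduces the stated exponents and conditions on $M$ verbatim. Also, $R_d^*$ enters through the logarithmic factor $\ln(10R_d^*/\eps)$ of the covering entropy of the full class $\scrC_{B,L}$, not as the diameter of a peeling shell.
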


\begin{proof}
    Before getting into the proof of the lemma, we first provide a few useful definitions. We denote $g_z^\dagger(Q) \defeq \argmin_{f\in\scrC_{B,L}} \calL(f,\calT_z Q)$ to be the \LSE of $\calT_z Q$ that minimizes the true risk under $\widetilde\mu$ for any mean-field term $z$ and value function $Q\in\scrC_{B,L}$. 
    We also define $g_z(Q;\calD) \defeq \argmin_{f\in\scrC_{B,L}} \calL_\calD(f,\calT_z Q)$ to be the \LSE of $\calT_z Q$ that minimizes the empirical risk on $\calD$.
    Note that $\calT_z Q\in\scrC_{B,L}$ following \cref{lem:concavity-general}, and we have $g_z^\dagger(Q) = \calT_z Q$. Further recall that 
    \begin{align*}
        \calL_\calD(\widehat f,\calT Q) & = \frac{1}{M} \sum_{m=1}^M (f'(s_m,a_m) - r_m - \gamma \max_{a\in\Gamma(s'_m)} Q(s'_m,a))^2
    \end{align*}
    and $\calL(\widehat f,\calT Q) = \expect_\calD[\calL_\calD(\widehat f,\calT Q)]$. 
    For any $\widehat f\in\mathrm{LSE}(\scrC_{B,L},\calD;\epsilon)$, it follows by definition that $\calE_{\calD}(\widehat f,\calT Q) = \calL_\calD(\widehat f,\calT Q) - \calL_\calD(g_z(Q;\calD),\calT Q) \leq \epsilon$.
    
    It is noteworthy that $\|\widehat f - \calT_z Q\|_{\widetilde\mu}^2 - \|g_z^\dagger(Q) - \calT_z Q\|_{\widetilde\mu}^2 = \calL(\widehat f,\calT_z Q) - \calL(g_z^\dagger(Q),\calT_z Q)$, which is followed from the fact that for any $f\in\scrC_{B,L}$, 
    \begin{align*}
        \calL(f,\calT_z Q) & = \expect_{\widetilde\mu}[(f(s,a) - r_z(s,a) - \gamma (\frakJ Q)(s'))^2] \\
        & = \expect_{\widetilde\mu}[(f(s,a) - (\calT_z Q)(s,a))^2] + \Var_{\widetilde\mu}[\expect[r(s,a) + \gamma (\frakJ Q)(s')\given s,a]],
    \end{align*}
    where $(s,a)\sim\widetilde\mu$ and $s'\sim\PP_z(\cdot\given s,a)$; the variance terms cancel with each other and yield the equality.
    It follows that
    \begin{align*}
        \|\widehat f - \calT_z Q\|_{\widetilde\mu}^2 & =  \calL(\widehat f,\calT_z Q) - \calL(\calT_z Q,\calT_z Q)
    \end{align*}
    where the equality is due to $\|g_z^\dagger(Q) - \calT_z Q\|_{\widetilde\mu}^2 = 0$ and $g_z^\dagger(Q) = \calT_z Q$.
    Following \cref{thm:empirical-bound}, it holds that for any $\alpha,\beta > 0$ and $0 < \varrho \leq \frac{1}{2}$,
    \begin{align*}
        & \prob\Big\{\exists f \in \scrC_{B,L}: \calL(f,\calT_z Q) - \calL(\calT_z Q,\calT_z Q) - (\calL_\calD(f,\calT_z Q) - \calL_\calD(\calT_z Q,\calT_z Q)) \\
        & \qquad\qquad\qquad \geq \varrho \cdot(\alpha+\beta+\calL(f,\calT_z Q) - \calL(\calT_z Q,\calT_z Q))\Big\}, \\
        & \qquad \leq 14 \sup_\calD \calN_1\left(\frac{\beta \varrho}{20B}, \scrC_{B,L}, \calD\right) \exp \left(-\frac{\varrho^2(1-\varrho) \alpha M}{214(1+\varrho)B^4}\right),
    \end{align*}
    where the inequality follows from $|f| \leq \frac{1}{1-\gamma} = B$ for all $f\in\scrC_{B,L}$, and $\calN_1\left(\eps, \scrF, \calD\right)$ denotes the $\eps$-covering number of $\scrF$ on $\calD$ with respect to $\ell_1$ metric. 
    The supremum is taken with respect to all random dataset $\calD$.
    This provides a high probability upper bound on $\|\widehat f - \calT_z Q\|_{\widetilde\mu}^2$ as 
    \begin{align*}
        & \prob\Big\{(1-\varrho)[\calL(\widehat f,\calT_z Q) - \calL(\calT_z Q,\calT_z Q)] \geq \varrho \cdot(\alpha+\beta) + [\calL_\calD(\widehat f,\calT_z Q) - \calL_\calD(\calT_z Q,\calT_z Q)]\Big\} \\
        & \qquad \leq \prob\Big\{\exists f \in \scrC_{B,L}: \calL(f,\calT_z Q) - \calL(g^\dagger,\calT_z Q) - (\calL_\calD(f,\calT_z Q) - \calL_\calD(g^\dagger,\calT_z Q)) \\
        & \qquad\qquad\qquad \geq \varrho \cdot(\alpha+\beta+\calL(f,\calT_z Q) - \calL(g^\dagger,\calT_z Q))\Big\}.
    \end{align*}
    Further, for the set of all $B$-bounded, $L$-Lipschitz concave functions, its covering entropy is bounded by 
    \begin{align*}
        \sup_\calD\log \calN_1\left(\frac{\beta \varrho}{20B}, \scrC_{B,L}, \calD\right) & \leq \log \calN_\infty\left(\frac{\beta \varrho}{20B}, \scrC_{B,L}\right) \\
        & \leq 2(d+1)\left(\frac{1600 BdL}{\beta \varrho}\right)^{d / 2} \ln \left(\frac{200B R_d^*}{\beta \varrho}\right),
    \end{align*}
    where the first inequality follows from upper bounding $\ell_1$ covering number with $\ell_\infty$ covering number, and the second inequality follows from \cref{lem:covering-convex}. More specifically, for Lipschitz constant $L$ and uniform $\ell_\infty$ bound $B$, the covering entropy of $\scrC_{B,L}$ is bounded through
    \begin{align*}
        \log\calN_\infty(\eps, \scrC_{B,L}) \leq 2(d+1)\left(\frac{80 L_d}{\eps}\right)^{d / 2} \ln \left(\frac{10 R_d^*}{\eps}\right)
    \end{align*}
    for any $\eps\in(0,80L_d]$ where $L_d \defeq dL\diam(\calG) \leq dL$ and $R_d^* \leq \max\{8dL, 2B+4dL\}$.
    Thus, for any $\beta \leq \frac{1600}{\varrho}BdL$, we have
    \begin{align*}
        & \prob\Big\{(1-\varrho)[\calL(\widehat f,\calT_z Q) - \calL(\calT_z Q,\calT_z Q)] \geq \varrho \cdot(\alpha+\beta) + [\calL_\calD(\widehat f,\calT_z Q) - \calL_\calD(\calT_z Q,\calT_z Q)]\Big\} \\
        & \qquad \leq 14 \exp\left(2(d+1)\left(\frac{1600 BdL}{\beta \varrho}\right)^{d / 2} \ln \left(\frac{200B R_d^*}{\beta \varrho}\right) - \frac{\varrho^2(1-\varrho) \alpha M}{214(1+\varrho)B^4}\right).
    \end{align*}
    Take $\varrho = \frac{1}{2}$ and for the estimator $\widehat f\in\scrC_{B,L}$ that with probability at least 
    \begin{align*}
        1 - 14 \exp\left(2(d+1)\left(\frac{3200 BdL}{\beta}\right)^{d / 2} \ln \left(\frac{400B R_d^*}{\beta}\right) - \frac{\alpha M}{2568B^4}\right),
    \end{align*}
    it holds that the estimation error is bounded by
    \begin{align*}
        \|\widehat f - \calT_z Q\|_{\widetilde\mu}^2 & = \calL(\widehat f,\calT_z Q) - \calL(\calT_z Q,\calT_z Q) \\
        & \leq \alpha+\beta + 2 (\calL_\calD(\widehat f,\calT_z Q) - \calL_\calD(\calT_z Q,\calT_z Q)) \\
        & = \alpha+\beta + 2 (\calE_{\calD}(\widehat f,\calT Q) + \calL_\calD(g_z(Q;\calD),\calT Q) - \calL_\calD(\calT_z Q,\calT_z Q)) \\
        & \leq \alpha + \beta + 2\epsilon
    \end{align*}
    where the second equality is due to $\calE_{\calD}(\widehat f,\calT Q) = \calL_\calD(\widehat f,\calT Q) - \calL_\calD(g_z(Q;\calD),\calT Q)$, and the last inequality is due to the fact that $g_z(Q;\calD)$ is defined to be the \LSE that minimizes the empirical risk, \ie, $\calL_\calD(g_z(Q;\calD),\calT_z Q) \leq \calL_\calD(\calT_z Q,\calT_z Q)$.
    Consequently, solve for proper $\alpha$ and $\beta$ to get our conclusion: if $M^\frac{d}{d+2} \gtrsim \log M$, $M^\frac{d}{d+2}\log M \gtrsim B^\frac{3d}{d+2} (d+1)^\frac{-2}{d+2} (dL)^\frac{-d}{d+2} \log\frac{1}{\delta}$, and  
    \begin{align*}
        (\log M)^\frac{d}{2} \gtrsim 1 + \log(R_d^*B^\frac{-6}{d+2} (d+1)^\frac{-2}{d+2} (dL)^\frac{-d}{d+2}),
    \end{align*}
    then for any $d\geq 2$, with probability at least $1-\delta$ 
    \begin{align*}
        \|\widehat f - \calT_z Q\|_{\widetilde\mu}^2 & \lesssim B^\frac{d+8}{d+2} (d+1)^\frac{2}{d+2} (dL)^\frac{d}{d+2} M^{-\frac{2}{d+2}}\log M + \epsilon.
    \end{align*}
\end{proof}

\begin{lemma}\label{lem:fqi-approx}
    For the \MDP with transition kernel $\PP_z$, if the regression of \cref{algo:q-iteration} finds $\epsilon$-approximate \LSE in $\scrC_{B,L}$ at all $\tau$ iterations, then the total approximation error of $\widetilde Q_z^\tau$ with respect to the optimal value function $Q_z^\star$ under any measure $\widetilde\mu$ is upper bounded by
    \begin{align*}
        \|\widetilde Q_z^\tau - Q_z^\star\|_{\widetilde\mu} & \lesssim \frac{\sqrt{D}}{1-\gamma} B^\frac{d+8}{d+2} (d+1)^\frac{2}{d+2} (dL)^\frac{d}{d+2} M^{-\frac{2}{d+2}}\log M + \frac{\sqrt{D}\epsilon}{1-\gamma}
    \end{align*}
    with probability at least $1-\delta$ for the number of iterations 
    \begin{align*}
        \tau \asymp \log\left( D^{-1/2} B^{-\frac{d+8}{d+2}}(d+1)^{-\frac{2}{d+2}}(dL)^{-\frac{d}{d+2}}M^\frac{2}{d+2}\frac{1}{\log M} \right) / \log\frac{1}{\gamma}
    \end{align*}
    for any sample size $M^\frac{d}{d+2} \gtrsim \log M$, $M^\frac{d}{d+2}\log M \gtrsim B^\frac{3d}{d+2} (d+1)^\frac{-2}{d+2} (dL)^\frac{-d}{d+2} \log\frac{\tau}{\delta}$, and  
    \begin{align*}
        (\log M)^\frac{d}{2} \gtrsim 1 + \log(R_d^*B^\frac{-6}{d+2} (d+1)^\frac{-2}{d+2} (dL)^\frac{-d}{d+2}),
    \end{align*}
    where $R_d^* \leq \max\{8dL, 2B+4dL\}$.
\end{lemma}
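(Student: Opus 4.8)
The plan is to glue together the two ingredients already established for \cref{algo:q-iteration}: the error-propagation bound of \cref{lem:fqi-error}, which converts the final error $\|\widetilde Q_z^\tau - Q_z^\star\|_{\widetilde\mu}$ into a $\gamma$-discounted sum of one-step fitting errors $\|\widetilde Q_z^\ell - \calT_z\widetilde Q_z^{\ell-1}\|_{\widetilde\mu}$ plus a geometric bias $\gamma^\tau/(1-\gamma)$, and the single-step concave-regression guarantee of \cref{lem:approx-error}, which bounds each such fitting error. The one preparatory observation is that every iterate stays inside the constraint class: by an easy induction on $\ell$, $\widetilde Q_z^0 = 0 \in \scrC_{B,L}$, and for $\ell \geq 1$ the iterate $\widetilde Q_z^\ell = \texttt{LSE}(\scrC_{B,L},\calD_z,\calT_z\widetilde Q_z^{\ell-1};\epsilon)$ is by construction an element of $\scrC_{B,L}$. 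This is exactly the hypothesis under which \cref{lem:approx-error} (through \cref{lem:concavity-general}) applies with input $Q = \widetilde Q_z^{\ell-1}$, since it then guarantees $\calT_z\widetilde Q_z^{\ell-1}\in\scrC_{B,L}$ and hence that the population risk minimizer coincides with $\calT_z\widetilde Q_z^{\ell-1}$.

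Carrying this out, I would first invoke \cref{lem:fqi-error} with $\mu=\widetilde\mu$, which gives
\begin{align*}
    \|\widetilde Q_z^\tau - Q_z^\star\|_{\widetilde\mu} \;\leq\; \sqrt{D}\,\sum_{\ell=1}^\tau \gamma^{\tau-\ell}\,\big\|\widetilde Q_z^\ell - \calT_z\widetilde Q_z^{\ell-1}\big\|_{\widetilde\mu} \;+\; \frac{\gamma^\tau}{1-\gamma}.
\end{align*}
Then, for each $\ell\in[\tau]$, I would apply \cref{lem:approx-error} with input $Q=\widetilde Q_z^{\ell-1}$ and confidence parameter $\delta/\tau$, controlling each one-step fitting error by $C_d\,M^{-2/(d+2)}\log M+\epsilon$ with probability at least $1-\delta/\tau$; a union bound over $\ell=1,\dots,\tau$ makes all of them hold simultaneously with probability at least $1-\delta$, the only effect being that $\log(1/\delta)$ gets replaced by $\log(\tau/\delta)$ in the three sample-size conditions inherited from \cref{lem:approx-error} (which are precisely those listed in the statement). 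Bounding the geometric sum by $\sum_{\ell=1}^\tau\gamma^{\tau-\ell}\le 1/(1-\gamma)$ and collecting constants then yields $\|\widetilde Q_z^\tau - Q_z^\star\|_{\widetilde\mu}\lesssim \frac{\sqrt D}{1-\gamma}\big(C_d M^{-2/(d+2)}\log M+\epsilon\big)+\frac{\gamma^\tau}{1-\gamma}$.

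It remains to dispose of the bias term: choosing $\tau$ so that $\gamma^\tau\lesssim \sqrt{D}\,C_d\,M^{-2/(d+2)}\log M$, equivalently $\tau\asymp \log\!\big(D^{-1/2}C_d^{-1}M^{2/(d+2)}/\log M\big)\big/\log(1/\gamma)$ (which, unpacking $C_d$, is exactly the stated choice), makes $\gamma^\tau/(1-\gamma)$ no larger than the statistical term and leaves the claimed bound. I do not expect a substantive obstacle here; the only place that needs genuine care is the interplay between the union bound and the fact that \cref{algo:q-iteration} reuses the single dataset $\calD_z$ across all $\tau$ inner iterations — since the regression target $\calT_z\widetilde Q_z^{\ell-1}$ is then itself data-dependent, \cref{lem:approx-error} must be invoked through its underlying uniform-in-$f$ concentration over $\scrC_{B,L}$ (i.e.\ \cref{thm:empirical-bound}) rather than as a statement about a target independent of the sample. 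The rest is routine arithmetic, checking that the asserted lower bounds on $M$ (with the extra $\log\tau$ factor) remain consistent with the choice of $\tau$.
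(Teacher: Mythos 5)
Your proposal is correct and follows essentially the same route as the paper: invoke \cref{lem:fqi-error} to reduce the final error to a $\gamma$-discounted sum of one-step fitting errors plus $\gamma^\tau/(1-\gamma)$, bound each fitting error via \cref{lem:approx-error} with confidence $\delta/\tau$ and a union bound, sum the geometric series, and choose $\tau$ so the bias term is dominated by the statistical rate. Your additional remarks on the induction keeping each iterate in $\scrC_{B,L}$ and on handling the data-dependent regression targets through the uniform-in-$f$ concentration of \cref{thm:empirical-bound} are sound refinements of the same argument.
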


\begin{proof}
    Combining \cref{lem:fqi-error} and \cref{lem:approx-error}, we have with probability at least $1-\delta$ that the total approximation error of the output $\widetilde Q_z^\tau$ to the optimal value function $Q_z^\star$ under any measure $\widetilde\mu$ is upper bounded by
    \begin{align*}
        \|\widetilde Q_z^\tau - Q_z^\star\|_{\widetilde\mu} & \leq \sum_{\ell=1}^\tau \gamma^{\tau-\ell} \sqrt{D}\cdot \|\widetilde Q_z^\ell - \calT \widetilde Q_z^{\ell-1}\|_{\widetilde\mu} + \frac{\gamma^\tau}{1-\gamma} \\
        & \lesssim \sum_{\ell=1}^\tau \gamma^{\tau-\ell} \sqrt{D}\cdot \left(B^\frac{d+8}{d+2} (d+1)^\frac{2}{d+2} (dL)^\frac{d}{d+2} M^{-\frac{2}{d+2}}\log M + \epsilon\right) + \frac{\gamma^\tau}{1-\gamma} \\
        & \lesssim \frac{1}{1-\gamma} \sqrt{D}\cdot \left(B^\frac{d+8}{d+2} (d+1)^\frac{2}{d+2} (dL)^\frac{d}{d+2} M^{-\frac{2}{d+2}}\log M + \epsilon\right) + \frac{\gamma^\tau}{1-\gamma},
    \end{align*}
    for $M^\frac{d}{d+2} \gtrsim \log M$, $M^\frac{d}{d+2}\log M \gtrsim B^\frac{3d}{d+2} (d+1)^\frac{-2}{d+2} (dL)^\frac{-d}{d+2} \log\frac{\tau}{\delta}$, and  
    \begin{align*}
        (\log M)^\frac{d}{2} \gtrsim 1 + \log(R_d^*B^\frac{-6}{d+2} (d+1)^\frac{-2}{d+2} (dL)^\frac{-d}{d+2}).
    \end{align*} 
    The first inequality follows from \cref{lem:fqi-error}, the second inequality is due to applying \cref{lem:approx-error} such that the upper bound on $\|\widetilde Q_z^\ell - \calT \widetilde Q_z^{\ell-1}\|_{\widetilde\mu}$ holds with probability at least $1-\delta/\tau$ for all $\ell\in[\tau]$, and the last inequality follows from a sum over geometric series.
    If the number of iteration $\tau$ is large enough, such that
    \begin{align*}
        \tau & \gtrsim \log\left( D^{-1/2} B^{-\frac{d+8}{d+2}}(d+1)^{-\frac{2}{d+2}}(dL)^{-\frac{d}{d+2}}M^\frac{2}{d+2}\frac{1}{\log M} \right) / \log\frac{1}{\gamma},
    \end{align*}
    it follows that
    \begin{align*}
        \|\widetilde Q_z^\tau - Q_z^\star\|_{\widetilde\mu} & \lesssim \frac{\sqrt{D}}{1-\gamma} B^\frac{d+8}{d+2} (d+1)^\frac{2}{d+2} (dL)^\frac{d}{d+2} M^{-\frac{2}{d+2}}\log M + \frac{\sqrt{D}\epsilon}{1-\gamma}.
    \end{align*}
\end{proof}

\subsection{Proof of FQI With Max-Affine Functions}

In this subsection, we present the analysis of our \FQI algorithm when the underlying feasible functions are max-affine.
Max-affine functions generalize the commonly studied linear functions in \RL~\citep{jin2020provably,ayoub2020model,cai2020provably,du2021bilinear,min2022learn,feicascaded}, and more importantly, all max-affine functions constitute a subset of all convex functions.
Let $\scrA_{B,L}^{K}$ denote the set of all bounded $L$-Lipschitz $K$-max-affine functions defined as follows
\begin{align*}
    \scrA_{B,L}^{K} \defeq \{h:\calG\to\RR \given h(x) = \max_{k\in[K]} \alpha_k\transpose x + c_k, \|\alpha_k\|_\infty \leq L, h(x)\in [-dL\diam(\calG),B]\}.
\end{align*}

\begin{lemma}\label{lem:approx-error-affine}
    Let $B \defeq 1/(1-\gamma)$ and $L \defeq L_r/(1-\gamma)$, and for any mean-field term $z$, $Q\in\scrC_{B,L}$, and dataset $\calD_{\calT_z Q} \defeq \{(s_m,a_m,r_m,s'_m)\}_{m=1}^M$ with $(s_m,a_m)\in\Rd$ \iid sampled from any measure $\widetilde\mu\in\Delta_\calG$ for all $m\in[M]$, we define the $\epsilon$-approximate estimator $\widehat f\in\mathrm{LSE}(\scrA_{B,L}^{K,+},\calD_{\calT_z Q};\epsilon)$ of $\calT_z Q$ for $K\in\ZZ_+$. Then for any training set size $M \defeq |\calD_{\calT_z Q}|$ such that
    \begin{align*}
        M \gtrsim \frac{1}{e^{2(d+1)M^{d/(d+4)}}\delta}, \qquad M^\frac{2d+4}{d+4} \log M \gtrsim \frac{B^3(B+4dL)}{d+1}, 
    \end{align*}
    then with probability at least $1-\delta$
    \begin{align*}
        \|\widehat f - \calT_z Q\|_{\widetilde\mu}^2 & \lesssim (d+1)B^4 M^{-4/(d+4)} \log M + d^2L^2 M^{-4/(d+4)} \\
        & \qquad + BdL M^{-4/(d+4)} \sqrt{\log\frac{2}{\delta}} + \frac{B^2\log\frac{2}{\delta}}{M} + \epsilon.
    \end{align*}
\end{lemma}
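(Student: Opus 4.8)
The plan is to reuse the machinery behind \cref{lem:approx-error}, correcting for one essential difference: the regression target is now \emph{misspecified} with respect to the hypothesis class. After the change of variables $-Q+B$, \cref{lem:concavity-general} still guarantees that $\calT_z Q$ is a nonnegative, $B$-bounded, $L$-Lipschitz convex function on $\calG$, so $\calT_z Q\in\scrC_{B,L}$; but $\calT_z Q$ need not be a $K$-max-affine function for any finite $K$, hence $\calT_z Q\notin\scrA_{B,L}^{K,+}$ in general. Accordingly I would split $\|\widehat f-\calT_z Q\|_{\widetilde\mu}^2 \lesssim \|f_K^\star-\calT_z Q\|_{\widetilde\mu}^2 + \big(\text{estimation error of the \LSE over } \scrA_{B,L}^{K,+}\big) + \epsilon$, where $f_K^\star$ is a suitable element of $\scrA_{B,L}^{K,+}$, and then tune $K$ to balance the two leading contributions.

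For the approximation (bias) term I would invoke a Bronshtein-type estimate for approximating convex functions by max-affine functions: any $L$-Lipschitz convex function on a bounded convex domain in $\RR^d$ admits a $K$-max-affine approximant $f_K^\star$ with $\|f_K^\star-\calT_z Q\|_\infty \lesssim dL\,K^{-2/d}$, and taking $f_K^\star$ to be the maximum of $K$ well-chosen supporting hyperplanes of $\calT_z Q$ truncated to $[0,B]$ keeps $f_K^\star\in\scrA_{B,L}^{K,+}$. Squaring gives $\|f_K^\star-\calT_z Q\|_{\widetilde\mu}^2 \lesssim d^2L^2 K^{-4/d}$, which becomes the $d^2L^2M^{-4/(d+4)}$ term once $K$ is fixed.

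For the estimation (variance) term the gain over \cref{lem:approx-error} is that $\scrA_{B,L}^{K,+}$ is a low-complexity class: a $K$-max-affine map has $K(d+1)$ parameters, each confined to a bounded range by $\|\alpha_k\|_\infty\le L$ together with the uniform bound on the output, so $\log\calN_\infty(\eps,\scrA_{B,L}^{K,+})\lesssim Kd\log(BLK/\eps)$, replacing the $d^{d/2}$-type entropy of $\scrC_{B,L}$. Plugging this entropy into the localized uniform-deviation inequality of \cref{thm:empirical-bound} (again with $\varrho=1/2$, exactly as in \cref{lem:approx-error}) controls $\|\widehat f-\calT_z Q\|_{\widetilde\mu}^2$ in terms of $\calL_\calD(\widehat f,\calT_z Q)-\calL_\calD(\calT_z Q,\calT_z Q)$ up to a term of order $B^4 Kd\log M/M$; since $\widehat f$ is an $\epsilon$-approximate \LSE and $f_K^\star\in\scrA_{B,L}^{K,+}$, this gap is at most $\calL_\calD(f_K^\star,\calT_z Q)-\calL_\calD(\calT_z Q,\calT_z Q)+\epsilon$, and a Bernstein bound for this last quantity — which has mean $\|f_K^\star-\calT_z Q\|_{\widetilde\mu}^2$ and variance proxy $\lesssim B^2\|f_K^\star-\calT_z Q\|_{\widetilde\mu}^2$ — produces the $BdL\,M^{-4/(d+4)}\sqrt{\log(2/\delta)}$ and $B^2\log(2/\delta)/M$ contributions. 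Choosing $K\asymp M^{d/(d+4)}$ equates the bias $d^2L^2K^{-4/d}$ with the leading variance term $B^4 dK\log M/M \asymp (d+1)B^4 M^{-4/(d+4)}\log M$, and summing the pieces gives the stated bound; the two lower bounds on $M$ in the hypothesis are precisely what make the covering-number factor $\exp\big((d+1)K\big)\asymp\exp\big((d+1)M^{d/(d+4)}\big)$ negligible in the tail probability and place the auxiliary constants $\alpha,\beta$ of the peeling argument in their admissible ranges.

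The main obstacle is exactly the misspecification. In \cref{lem:approx-error} the target lay in the class, so the whole bound came from a single uniform-deviation step; here one must additionally (i) prove a convex-by-polytope approximation bound with the correct dependence on $d$ and $L$ while keeping the approximant nonnegative, $B$-bounded and $L$-Lipschitz, and (ii) insert a Bernstein step for the bias–variance cross term, and then tune $K$ so that these two \emph{competing} effects — not either one alone — yield the rate $M^{-4/(d+4)}$ rather than the slower $M^{-2/(d+2)}$ of \cref{lem:approx-error}. A secondary technical point is the uniform control of $\calN_\infty(\cdot,\scrA_{B,L}^{K,+})$: max-affine functions can acquire arbitrarily sharp kinks as the affine parameters vary, so the covering argument must proceed through the $K(d+1)$-dimensional parameter space together with the boundedness and Lipschitz constraints, rather than through any smoothness of the functions themselves.
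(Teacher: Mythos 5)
Your proposal matches the paper's proof in all essentials: the same decomposition into a misspecification (bias) term handled by the max-affine approximation bound for Lipschitz convex functions (\cref{lem:approx-affine}, giving $\lesssim d^2L^2K^{-4/d}$), the same parametric covering-entropy bound $(d+1)K\log(\cdot)$ for $\scrA_{B,L}^{K,+}$ fed into \cref{thm:empirical-bound} with $\varrho=1/2$, the same Bernstein step for the empirical excess risk of the best-in-class approximant with variance proxy $B^2\|f_K^\star-\calT_z Q\|_{\widetilde\mu}^2$, and the same choice $K=\lceil M^{d/(d+4)}\rceil$. No substantive differences to report.
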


\begin{proof}
    Let us denote $g_z^\dagger(Q) \defeq \argmin_{f\in\scrA_{B,L}^{K,+}} \calL(f,\calT_z Q)$ to be the estimator of $\calT_z Q$ in $\scrA_{B,L}^{K,+}$ that minimizes the true risk under $\widetilde\mu$ for any mean-field term $z$. 
    We also define $g_z(Q;\calD) \defeq \argmin_{f\in\scrA_{B,L}^{K,+}} \calL_\calD(f,\calT_z Q)$ to be the \LSE of $\calT_z Q$ in $\scrA_{B,L}^{K,+}$ that minimizes the empirical risk on $\calD_{\calT_z Q}$.
    Note that for any $Q\in\scrC_{B,L}$, $\calT_z Q\in\scrC_{B,L}$ following \cref{lem:concavity-general}, but it does not have to be in $\scrA_{B,L}^{K,+}$. 
    Hence, the estimator $g_z^\dagger(Q) = \calT_z Q$ may not hold. 
    For any $\widehat f\in\mathrm{LSE}(\scrA_{B,L}^{K,+},\calD_{\calT_z Q};\epsilon)$, it follows by definition that $\calE_{\calD}(\widehat f,\calT_z Q) = \calL_\calD(\widehat f,\calT_z Q) - \calL_\calD(g_z(Q;\calD),\calT_z Q) \leq \epsilon$.
    
    Recall that $\|\widehat f - \calT_z Q\|_{\widetilde\mu}^2 = \calL(\widehat f,\calT_z Q) - \calL(\calT_z Q,\calT_z Q)$.
    Following \cref{thm:empirical-bound}, it holds that for any $\alpha,\beta > 0$ and $0 < \varrho \leq \frac{1}{2}$,
    \begin{align*}
        & \prob\Big\{\exists f \in \scrA_{B,L}^{K,+}: \calL(f,\calT_z Q) - \calL(\calT_z Q,\calT_z Q) - (\calL_\calD(f,\calT_z Q) - \calL_\calD(\calT_z Q,\calT_z Q)) \\
        & \qquad\qquad\qquad \geq \varrho \cdot(\alpha+\beta+\calL(f,\calT_z Q) - \calL(\calT_z Q,\calT_z Q))\Big\}, \\
        & \qquad \leq 14 \sup_\calD \calN_1\left(\frac{\beta \varrho}{20B}, \scrA_{B,L}^{K,+}, \calD\right) \exp \left(-\frac{\varrho^2(1-\varrho) \alpha M}{214(1+\varrho)B^4}\right),
    \end{align*}
    where the inequality follows from $|f| \leq B$ for all $f\in\scrA_{B,L}^{K,+}$, and $\calN_1\left(\eps, \scrF, \calD\right)$ denotes the $\eps$-covering number of $\scrF$ on $\calD$ with respect to $\ell_1$ metric. 
    The supremum is taken with respect to all possible data $\calD$.
    This provides a high probability upper bound on $\|\widehat f - \calT_z Q\|_{\widetilde\mu}^2$ as 
    \begin{align*}
        & \prob\Big\{(1-\varrho)[\calL(\widehat f,\calT_z Q) - \calL(\calT_z Q,\calT_z Q)] \geq \varrho \cdot(\alpha+\beta) + [\calL_\calD(\widehat f,\calT_z Q) - \calL_\calD(\calT_z Q,\calT_z Q)]\Big\} \\
        & \qquad \leq \prob\Big\{\exists f \in \scrA_{B,L}^{K,+}: \calL(f,\calT_z Q) - \calL(\calT_z Q,\calT_z Q) - (\calL_\calD(f,\calT_z Q) - \calL_\calD(\calT_z Q,\calT_z Q)) \\
        & \qquad\qquad\qquad \geq \varrho \cdot(\alpha+\beta+\calL(f,\calT_z Q) - \calL(\calT_z Q,\calT_z Q))\Big\}.
    \end{align*}
    Further, for the set of all $B$-bounded, $L$-Lipschitz $K$-piece max-affine functions, its covering entropy is bounded by 
    \begin{align*}
        \sup_\calD\log \calN_1\left(\frac{\beta \varrho}{20B}, \scrA_{B,L}^{K,+}, \calD\right) & \leq \log \calN_\infty\left(\frac{\beta \varrho}{20B}, \scrA_{B,L}^K\right) \\
        & \leq (d+1) K \log\frac{(20B+80dL)B}{\beta \varrho},
    \end{align*}
    where the first inequality follows from upper bounding $\ell_1$ covering number with $\ell_\infty$ covering number, and the second inequality follows from \cref{lem:covering-affine} with $L_d \leq dL$. More specifically, for Lipschitz constant $L$ and uniform $\ell_\infty$ bound $B$, the covering entropy of $\scrA_{B,L}^K$ is bounded through
    \begin{align*}
        \log\calN_\infty(\eps,\scrA_{B,L}^K) \leq (d+1) K \log\frac{B+4L_d}{\eps}
    \end{align*}
    for any $\eps \leq B+4L_d$.
    For our regression problem, $d$ is the dimension of $\calG$. 
    Thus, for any $\beta \leq \frac{20BB}{\varrho}(4dL + B)$, we have
    \begin{align*}
        & \prob\Big\{(1-\varrho)[\calL(\widehat f,\calT_z Q) - \calL(\calT_z Q,\calT_z Q)] \geq \varrho \cdot(\alpha+\beta) + [\calL_\calD(\widehat f,\calT_z Q) - \calL_\calD(\calT_z Q,\calT_z Q)]\Big\} \\
        & \qquad \leq 14 \exp\left((d+1) K \log\frac{(20B+80dL)B}{\beta \varrho} - \frac{\varrho^2(1-\varrho) \alpha M}{214(1+\varrho)B^4}\right).
    \end{align*}
    Take $\varrho = \frac{1}{2}$ and for the estimator $\widehat f\in\scrA_{B,L}^K$ that with probability at least 
    \begin{align*}
        1 - 14 \exp\left((d+1) K \log\frac{(40B+160dL)B}{\beta} - \frac{\alpha M}{2568B^4}\right),
    \end{align*}
    it holds that the estimation error is bounded by
    \begin{align*}
        \|\widehat f - \calT_z Q\|_{\widetilde\mu}^2 & = \calL(\widehat f,\calT_z Q) - \calL(\calT_z Q,\calT_z Q) \\
        & \leq \alpha+\beta + 2 (\calL_\calD(\widehat f,\calT_z Q) - \calL_\calD(\calT_z Q,\calT_z Q)) \\
        & = \alpha+\beta + 2 (\calE_{\calD}(\widehat f,\calT_z Q) + \calL_\calD(g_z(Q;\calD),\calT_z Q) - \calL_\calD(\calT_z Q,\calT_z Q)) \\
        & \leq \alpha + \beta + 2\epsilon + 2(\calL_\calD(g_z(Q;\calD),\calT_z Q) - \calL_\calD(\calT_z Q,\calT_z Q)) \\
        & \leq \alpha + \beta + 2\epsilon + 2(\calL_\calD(g_z^\dagger(Q),\calT_z Q) - \calL_\calD(\calT_z Q,\calT_z Q)),
    \end{align*}
    where the second equality is due to $\calE_{\calD}(\widehat f,\calT_z Q) = \calL_\calD(\widehat f,\calT_z Q) - \calL_\calD(g_z(Q;\calD),\calT_z Q)$ and the last inequality is due to the fact that $g_z(Q;\calD)$ is defined to be the \LSE that minimizes the empirical risk, \ie, $\calL_\calD(g_z(Q;\calD),\calT_z Q) \leq \calL_\calD(g_z^\dagger(Q),\calT_z Q)$.
    Consequently, solve for proper $\alpha$ and $\beta$ to get that if  
    \begin{align*}
        M \gtrsim \frac{1}{e^{2(d+1)K}\delta}, \qquad M\log M \gtrsim \frac{(B+4dL)B^3}{(d+1)K},
    \end{align*}
    then for any $d\geq 2$, with probability at least $1-\delta/2$ 
    \begin{align*}
        \|\widehat f - \calT_z Q\|_{\widetilde\mu}^2 & \lesssim \frac{(d+1)K}{B^4} \frac{\log M}{M} + 2\epsilon + 2(\calL_\calD(g_z^\dagger(Q),\calT_z Q) - \calL_\calD(\calT_z Q,\calT_z Q)).
    \end{align*}
    
    Following \cref{lem:approx-affine}, under any measure $\widetilde\mu$ it holds that
    \begin{align*}
        \|g_z^\dagger(Q) - \calT_z Q\|_{\widetilde\mu}^2 \leq \|\overline g_z^\ddagger(Q) - \calT_z Q\|_{\widetilde\mu}^2 \leq 72^2 L_d^2 (K-1)^{-4/d},
    \end{align*}
    where $\overline g_z^\ddagger(Q) \defeq \sigma(g_z^\ddagger(Q))$ is the truncated estimator, where $\sigma$ denotes ReLU function and $g_z^\ddagger(Q) \defeq \argmin_{f\in\scrA_{B,L}^{K-1}} \calL(f,\calT_z Q)$.
    Note that $\|g_z^\dagger(Q) - \calT_z Q\|_{\widetilde\mu}^2 = \calL(g_z^\dagger(Q),\calT_z Q) - \calL(\calT_z Q,\calT_z Q)$, and following Bernstein's inequality, we have
    \begin{align*}
        & |\calL(g_z^\dagger(Q),\calT_z Q) - \calL(\calT_z Q,\calT_z Q) - \calL_\calD(g_z^\dagger(Q),\calT_z Q) + \calL_\calD(\calT_z Q,\calT_z Q)| \\
        & \qquad = \Big|\calL(g_z^\dagger(Q),\calT_z Q) - \calL(\calT_z Q,\calT_z Q) - \frac{1}{n}\sum_{(s,a,r,s')\in\calD} [(g_z^\dagger(Q)(s,a) - (\calT_z Q)(s,a)) \\
        & \qquad\qquad \cdot(g_z^\dagger(Q)(s,a) + (\calT_z Q)(s,a) - 2r - 2\gamma (\frakJ Q)(s'))] \Big| \\
        & \qquad \leq \frac{2B^2\log\frac{1}{\delta}}{3M} + 2\sqrt{\frac{\log\frac{1}{\delta}}{M}} \cdot \Var([(g_z^\dagger(Q)(s,a) - (\calT_z Q)(s,a)) \\
        & \qquad\qquad \cdot(g_z^\dagger(Q)(s,a) + (\calT_z Q)(s,a) - 2r - 2\gamma (\frakJ Q)(s'))])^{1/2}.
    \end{align*}
    Since we have $|g_z^\dagger(Q)(s,a) + (\calT_z Q)(s,a) - 2r - 2\gamma (\frakJ Q)(s')| \leq 2 B$ and $\Var(X) = \expect X^2 - (\expect X)^2$ for any random variable $X$, we further simplify the upper bound into
    \begin{align*}
        & |\calL(g_z^\dagger(Q),\calT_z Q) - \calL(\calT_z Q,\calT_z Q) - \calL_\calD(g_z^\dagger(Q),\calT_z Q) + \calL_\calD(\calT_z Q,\calT_z Q)| \\
        & \qquad \leq \frac{2B^2\log\frac{1}{\delta}}{3M} + 4B \|g_z^\dagger(Q) - \calT_z Q\|_{\widetilde\mu} \sqrt{\frac{\log\frac{1}{\delta}}{M}}.
    \end{align*}
    It follows that with probability at least $1-\delta/2$ 
    \begin{align*}
        \calL_\calD(g_z^\dagger(Q),\calT_z Q) - \calL_\calD(\calT_z Q,\calT_z Q) & \leq \|g_z^\dagger(Q) - \calT_z Q\|_{\widetilde\mu}^2 + 4B \|g_z^\dagger(Q) - \calT_z Q\|_{\widetilde\mu} \sqrt{\frac{\log\frac{2}{\delta}}{M}} \\
        & \qquad + \frac{2B^2\log\frac{2}{\delta}}{3M} \\
        & \leq 72^2 L_d^2 (K-1)^{-4/d} + 288BL_d K^{-2/d} \sqrt{\frac{\log\frac{2}{\delta}}{M}} + \frac{2B^2\log\frac{2}{\delta}}{3M}.
    \end{align*}
    
    Combine the upper bounds above together, we have with probability at least $1-\delta$
    \begin{align*}
        \|\widehat f - \calT_z Q\|_{\widetilde\mu}^2 & \lesssim (d+1)KB^4 \frac{\log M}{M} + d^2L^2 (K-1)^{-4/d} + BdL K^{-2/d} \sqrt{\frac{\log\frac{2}{\delta}}{M}} + \frac{B^2\log\frac{2}{\delta}}{M} + \epsilon.
    \end{align*}
    Take $K = \lceil M^{d/(d+4)}\rceil$, then it holds that
    \begin{align*}
        \|\widehat f - \calT_z Q\|_{\widetilde\mu}^2 & \lesssim (d+1)B^4 M^{-4/(d+4)} \log M + d^2L^2 M^{-4/(d+4)} \\
        & \qquad + BdL M^{-4/(d+4)} \sqrt{\log\frac{2}{\delta}} + \frac{B^2\log\frac{2}{\delta}}{M} + \epsilon.
    \end{align*}
\end{proof}

\begin{lemma}\label{lem:fqi-approx-affine}
    Let $B \defeq 1/(1-\gamma)$ and $L \defeq L_r/(1-\gamma)$.
    For the \MDP with transition kernel $\PP_z$, if the regression of \cref{algo:q-iteration} finds $\epsilon$-approximate \LSE in $\scrA_{B,L}^K$ with $K = \lceil M^{d/(d+4)}\rceil$ at all $\tau$ iterations, then the approximation error of $\widetilde Q_z^\tau$ with respect to the optimal value function $Q_z^\star$ under any measure $\widetilde\mu$ is upper bounded by
    \begin{align*}
        \|\widetilde Q_z^\tau - Q_z^\star\|_{\widetilde\mu} & \lesssim \frac{\sqrt{D}}{1-\gamma} \left( (d+1)B^4 + (B+dL)dL \right) M^{-\frac{4}{d+4}}\log M + \frac{\sqrt{D}\epsilon}{1-\gamma}
    \end{align*}
    with probability at least $1-\delta$ for the number of iterations 
    \begin{align*}
        \tau \asymp \log \frac{M^{4/(d+4)}}{d^2L^2\sqrt{D}} / \log\frac{1}{\gamma}
    \end{align*}
    for any sample size $M = |\calD|$ such that
    \begin{gather*}
        M \gtrsim \frac{1}{\delta\log\frac{1}{\gamma}} \log \frac{M^{4/(d+4)}}{d^2L^2\sqrt{D}}, \qquad M^\frac{2d+4}{d+4} \log M \gtrsim \frac{B^3(B+4dL)}{d+1}, \\
        M \gtrsim \frac{\tau}{\exp(2(d+1)M^{d/(d+4)})\delta}, \qquad M^\frac{d}{d+4} \gtrsim \frac{B^2}{d^2L^2}\log \frac{2\tau}{\delta}.
    \end{gather*}
\end{lemma}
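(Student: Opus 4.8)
The plan is to combine the \FQI error decomposition \cref{lem:fqi-error} with the max-affine regression guarantee \cref{lem:approx-error-affine}, exactly paralleling the proof of \cref{lem:fqi-approx} but with the class $\scrA_{B,L}^{K,+}$ for $K=\lceil M^{d/(d+4)}\rceil$ in place of $\scrC_{B,L}$. First I would instantiate \cref{lem:fqi-error} at $\mu=\widetilde\mu$, which gives
\[
\|\widetilde Q_z^\tau - Q_z^\star\|_{\widetilde\mu} \;\leq\; \sum_{\ell=1}^\tau \gamma^{\tau-\ell}\sqrt{D}\,\bigl\|\widetilde Q_z^\ell - \calT_z\widetilde Q_z^{\ell-1}\bigr\|_{\widetilde\mu} + \frac{\gamma^\tau}{1-\gamma}.
\]
By construction of \texttt{CFQI}, each iterate $\widetilde Q_z^\ell$ is an $\epsilon$-approximate least-squares estimator of $\calT_z\widetilde Q_z^{\ell-1}$ over $\scrA_{B,L}^{K,+}$, and each $\widetilde Q_z^{\ell-1}$ lies in $\scrC_{B,L}$ because the chain starts at a constant and $\calT_z$ preserves $\scrC_{B,L}$ (\cref{lem:concavity-general}); hence \cref{lem:approx-error-affine} applies at every step.

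Next I would apply \cref{lem:approx-error-affine} at confidence $1-\delta/\tau$ and take a union bound over $\ell\in[\tau]$, so that on an event of probability at least $1-\delta$, simultaneously for all $\ell$,
\[
\bigl\|\widetilde Q_z^\ell - \calT_z\widetilde Q_z^{\ell-1}\bigr\|_{\widetilde\mu}^2 \;\lesssim\; (d+1)B^4 M^{-\frac{4}{d+4}}\log M + d^2L^2 M^{-\frac{4}{d+4}} + BdL\,M^{-\frac{4}{d+4}}\sqrt{\log\tfrac{2\tau}{\delta}} + \frac{B^2\log\frac{2\tau}{\delta}}{M} + \epsilon .
\]
The substitution $\delta\mapsto\delta/\tau$ turns the sample-size hypotheses of \cref{lem:approx-error-affine} into the first two displayed conditions of the present lemma, and the extra condition $M^{d/(d+4)}\gtrsim (B^2/d^2L^2)\log(2\tau/\delta)$ makes the last two stochastic summands dominated by the first two, so after taking square roots the per-step error is $\lesssim \bigl((d+1)B^4+(B+dL)dL\bigr)^{1/2} M^{-\frac{2}{d+4}}\sqrt{\log M}+\sqrt\epsilon$. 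Carrying this through the telescoping bound with the same mild square-root convention used in \cref{lem:fqi-approx}, and using $\sum_{\ell=1}^\tau\gamma^{\tau-\ell}\le \frac1{1-\gamma}$, yields
\[
\|\widetilde Q_z^\tau - Q_z^\star\|_{\widetilde\mu} \;\lesssim\; \frac{\sqrt D}{1-\gamma}\bigl((d+1)B^4+(B+dL)dL\bigr)M^{-\frac{4}{d+4}}\log M + \frac{\sqrt D\,\epsilon}{1-\gamma} + \frac{\gamma^\tau}{1-\gamma}.
\]

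Finally I would choose $\tau$ so that the exponentially decaying tail $\gamma^\tau/(1-\gamma)$ is no larger than the statistical error, i.e. $\gamma^\tau \lesssim d^2L^2\sqrt D\,M^{-4/(d+4)}$ (keeping the representative factor $d^2L^2\sqrt D$ from the leading term), which is exactly $\tau\asymp \log\!\bigl(M^{4/(d+4)}/(d^2L^2\sqrt D)\bigr)/\log\frac1\gamma$; the hypothesis $M\gtrsim \frac{1}{\delta\log(1/\gamma)}\log\frac{M^{4/(d+4)}}{d^2L^2\sqrt D}$ guarantees this $\tau$ remains compatible with the union-bounded sample-size requirements. The \ICNN case needs no extra work, since \cref{lem:exact-representation} shows $\scrN_{B,L}^K$ represents exactly $\scrA_{B,L}^{K,+}$, so the identical bound holds when the regression runs over that network class. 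I expect the only real difficulty to be bookkeeping rather than new mathematics: one must verify that the choice $K=\lceil M^{d/(d+4)}\rceil$ (which balances the $O((d+1)K\log M/M)$ estimation error against the $O(d^2L^2K^{-4/d})$ approximation error inside \cref{lem:approx-error-affine}), the choice of $\tau$, and the $\delta\mapsto\delta/\tau$ inflation are all simultaneously consistent with the four sample-size lower bounds stated in the lemma; the substantive covering-number and approximation estimates have already been discharged in \cref{lem:approx-error-affine} and \cref{lem:exact-representation}, so no fresh concentration argument is required here.
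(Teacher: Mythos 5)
Your proposal is correct and follows essentially the same route as the paper's proof: instantiate \cref{lem:fqi-error} at $\widetilde\mu$, apply \cref{lem:approx-error-affine} with confidence $\delta/\tau$ and a union bound over the $\tau$ iterations, use the condition $M^{d/(d+4)}\gtrsim (B^2/d^2L^2)\log(2\tau/\delta)$ to absorb the lower-order stochastic terms, and pick $\tau$ so that $\gamma^\tau/(1-\gamma)$ is dominated by the statistical error. The only cosmetic difference is your justification that each iterate lies in $\scrC_{B,L}$ (the paper relies implicitly on $\scrA_{B,L}^{K,+}\subseteq\scrC_{B,L}$ rather than on $\calT_z$ preserving concavity of the iterates), which does not affect the argument.
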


\begin{proof}
    Combining \cref{lem:fqi-error} and \cref{lem:approx-error-affine}, we have with probability at least $1-\delta$ that the total approximation error of the output $\widetilde Q_z^\tau$ to the optimal value function $Q_z^\star$ under any measure $\widetilde\mu$ is upper bounded by
    \begin{align*}
        \|\widetilde Q_z^\tau - Q_z^\star\|_{\widetilde\mu} & \leq \sum_{\ell=1}^\tau \gamma^{\tau-\ell} \sqrt{D}\cdot \|\widetilde Q_z^\ell - \calT \widetilde Q_z^{\ell-1}\|_{\widetilde\mu} + \frac{\gamma^\tau}{1-\gamma} \\
        & \lesssim \frac{1}{1-\gamma} \sqrt{D}\cdot \Big((d+1)B^4 M^{-4/(d+4)} \log M + d^2L^2 M^{-4/(d+4)} \\
        & \qquad + BdL M^{-4/(d+4)} \sqrt{\log\frac{2\tau}{\delta}} + \frac{B^2\log\frac{2\tau}{\delta}}{M} + \epsilon\Big) + \frac{\gamma^\tau}{1-\gamma} \\
        & \lesssim \frac{1}{1-\gamma} \sqrt{D}\cdot \Big((d+1)B^4 M^{-4/(d+4)} \log M + 2d^2L^2 M^{-4/(d+4)} \\
        & \qquad + BdL M^{-4/(d+4)} \sqrt{\log\frac{2\tau}{\delta}} + \epsilon\Big) + \frac{\gamma^\tau}{1-\gamma},
    \end{align*}
    for $M \gtrsim \frac{\tau}{\exp(2(d+1)M^{d/(d+4)})\delta}$ and $M^\frac{2d+4}{d+4} \log M \gtrsim \frac{B^3(B+4dL)}{d+1}$. The first inequality follows from \cref{lem:fqi-error}, the second inequality is due to applying \cref{lem:approx-error} such that the upper bound on $\|\widetilde Q_z^\ell - \calT \widetilde Q_z^{\ell-1}\|_{\widetilde\mu}$ holds with probability at least $1-\delta/\tau$ for all $\ell\in[\tau]$, and the last inequality follows from $M^{d/(d+4)} \geq \frac{B^2}{d^2L^2}\log \frac{2\tau}{\delta}$.
    Take the number of iteration 
    \begin{align*}
        \tau & = \left\lceil\log \frac{M^{4/(d+4)}}{d^2L^2\sqrt{D}} / \log\frac{1}{\gamma}\right\rceil,
    \end{align*}
    it follows that for $M \geq \frac{1}{\delta\log\frac{1}{\gamma}} \log \frac{M^{4/(d+4)}}{d^2L^2\sqrt{D}}$, the approximation error
    \begin{align*}
        \|\widetilde Q_z^\tau - Q_z^\star\|_{\widetilde\mu} & \lesssim \frac{\sqrt{D}}{1-\gamma} \left( (d+1)B^4 + (B+dL)dL \right) M^{-\frac{4}{d+4}}\log M + \frac{\sqrt{D}\epsilon}{1-\gamma}.
    \end{align*}
\end{proof}

\subsection{Proof of FQI With ICNN}

In this subsection, we present our theoretical work for the \ICNN function family. 
As a convex object, \ICNN function is robust against outliers and input perturbations~\citep{min2021curious}, such merits have been discussed in related works~\citep{christmann2007consistency,blanchet2019multivariate,chen2020more,pfrommer2023asymmetric}. 
We combine \cref{lem:exact-representation} and \cref{lem:covering-affine} to prove \cref{lem:approx-error-nn} and then \cref{lem:fqi-approx-nn}. 
More specifically, \cref{lem:covering-affine} provides an upper bound on the covering entropy for $L$-Lipschitz $K$-piece max-affine functions, and the covering entropy of such function set can provide an upper bound on the covering entropy of $\scrN_{B,L}^K$ as $\scrN_{B,L}^K\subseteq \scrA_{B,L}^{K+1}$.
We note that our argument assumes finding the global minimum of the functions represented by the \ICNN. 
We do not elaborate on the convergence properties of such neural network functions under different backpropagation schemes, and they have been discussed in a vast body of literature \citep{chen2018optimal,schmidt2011convergence,arora2019fine,du2019graph,chen2021multiple,song2021convergence,tan2022data}. 

\begin{lemma}\label{lem:approx-error-nn}
    Let $B \defeq 1/(1-\gamma)$ and $L \defeq L_r/(1-\gamma)$, and for any mean-field term $z$, $Q\in\scrC_{B,L}$, and data $\calD_{\calT_z Q} \defeq \{(s_m,a_m,r_m,s'_m)\}_{m=1}^M$ with $(s_m,a_m)\in\Rd$ \iid sampled from any measure $\widetilde\mu\in\Delta_\calG$ for all $m\in[M]$, we define the $\epsilon$-approximate estimator $\widehat f\in\mathrm{LSE}(\scrN_{B,L}^K,\calD_{\calT_z Q};\epsilon)$ of $\calT_z Q$ for $K\in\ZZ_+$. Then for any training set size $M \defeq |\calD_{\calT_z Q}|$ such that
    \begin{align*}
        M \gtrsim \frac{1}{e^{2(d+1)K}\delta}, \qquad M^\frac{2d+4}{d+4} \log M \gtrsim \frac{B^3(B+4dL)}{d+1},
    \end{align*}
    it holds with probability at least $1-\delta$
    \begin{align*}
        \|\widehat f - \calT_z Q\|_{\widetilde\mu}^2 & \lesssim (d+1)B^4 M^{-4/(d+4)} \log M + d^2L^2 M^{-4/(d+4)} \\
        & \qquad + BdL M^{-4/(d+4)} \sqrt{\log\frac{2}{\delta}} + \frac{B^2\log\frac{2}{\delta}}{M} + \epsilon.
    \end{align*}
\end{lemma}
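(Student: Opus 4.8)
The plan is to reduce the claim to \cref{lem:approx-error-affine} using the exact representation result \cref{lem:exact-representation}. That lemma identifies $\scrN_{B,L}^K$, as a function class, with $\scrA_{B,L}^{K,+}$ --- the set of all non-negative, $B$-bounded, $L$-Lipschitz $K$-max-affine functions. Hence, provided the \ICNN solver returns a global minimizer of the empirical risk over the constrained parameterization, the empirical risk minimization defining an $\epsilon$-approximate \LSE over $\scrN_{B,L}^K$ is literally the same optimization problem --- over the same set of admissible functions --- as the one over $\scrA_{B,L}^{K,+}$. In particular, any $\widehat f\in\mathrm{LSE}(\scrN_{B,L}^K,\calD_{\calT_z Q};\epsilon)$ is also an element of $\mathrm{LSE}(\scrA_{B,L}^{K,+},\calD_{\calT_z Q};\epsilon)$, so the statement follows immediately by applying \cref{lem:approx-error-affine} to $\widehat f$, with $K$ instantiated at $\lceil M^{d/(d+4)}\rceil$ exactly as in its proof.

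Unpacking this, I would retrace the steps of the proof of \cref{lem:approx-error-affine} with $\scrA_{B,L}^{K,+}$ replaced by the equal class $\scrN_{B,L}^K$: first, by \cref{lem:concavity-general}, $\calT_z Q\in\scrC_{B,L}$ whenever $Q\in\scrC_{B,L}$; second, the uniform deviation bound of \cref{thm:empirical-bound} applies to $\scrN_{B,L}^K$ using the covering-entropy estimate $\log\calN_\infty(\eps,\scrA_{B,L}^K)\leq (d+1)K\log\frac{B+4L_d}{\eps}$ from \cref{lem:covering-affine}, which is valid since $\scrN_{B,L}^K = \scrA_{B,L}^{K,+}\subseteq\scrA_{B,L}^K$, together with $L_d\leq dL$; third, choosing $\alpha,\beta$ and $\varrho=1/2$ as before gives, with probability at least $1-\delta/2$,
\begin{align*}
\|\widehat f - \calT_z Q\|_{\widetilde\mu}^2 \lesssim \frac{(d+1)KB^4\log M}{M} + 2\epsilon + 2\bigl(\calL_\calD(g_z^\dagger(Q),\calT_z Q) - \calL_\calD(\calT_z Q,\calT_z Q)\bigr),
\end{align*}
where $g_z^\dagger(Q)\defeq\argmin_{f\in\scrN_{B,L}^K}\calL(f,\calT_z Q)$; and fourth, bounding the last term via the $K$-max-affine approximation estimate \cref{lem:approx-affine} --- which concerns exactly the class $\scrN_{B,L}^K$ represents --- together with Bernstein's inequality, yielding $\calL_\calD(g_z^\dagger(Q),\calT_z Q) - \calL_\calD(\calT_z Q,\calT_z Q)\lesssim d^2L^2(K-1)^{-4/d} + BdL\,K^{-2/d}\sqrt{\log(2/\delta)/M} + B^2\log(2/\delta)/M$ with probability at least $1-\delta/2$.

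Combining the two high-probability events and setting $K=\lceil M^{d/(d+4)}\rceil$ produces the stated rate under the conditions $M\gtrsim 1/(e^{2(d+1)K}\delta)$ and $M^{(2d+4)/(d+4)}\log M\gtrsim B^3(B+4dL)/(d+1)$, which are precisely the thresholds that make the two concentration bounds non-vacuous. I do not expect any genuinely new difficulty beyond the identification of the two function classes; the one point I would flag explicitly is that this reduction is legitimate only because the solver is assumed to reach the global empirical-risk optimum, so that \cref{lem:exact-representation} carries an $\epsilon$-approximate \ICNN fit over to an $\epsilon$-approximate max-affine fit. Everything else is a verbatim repetition of the argument for \cref{lem:approx-error-affine}.
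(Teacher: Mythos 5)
Your proposal is correct and follows essentially the same route as the paper: its proof of this lemma is a near-verbatim rerun of the max-affine argument (the uniform deviation bound of \cref{thm:empirical-bound} with the covering estimate of \cref{lem:covering-affine}, the approximation bound of \cref{lem:approx-affine} combined with Bernstein's inequality, and finally $K=\lceil M^{d/(d+4)}\rceil$), with \cref{lem:exact-representation} used to relate $\scrN_{B,L}^K$ to the max-affine classes. The only cosmetic difference is that the paper bounds the covering entropy of $\scrN_{B,L}^K$ through the inclusion $\scrN_{B,L}^K\subseteq\scrA_{B,L}^{K+1}$ (a truncated $K$-max-affine function is a $(K+1)$-piece max-affine function), incurring a harmless factor $K+1$ in place of $K$, and uses only the one-sided inclusion $\scrA_{B,L}^{K,+}\subseteq\scrN_{B,L}^K$ for the approximation term, rather than the literal equality of the two classes that you invoke to cite \cref{lem:approx-error-affine} as a black box.
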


\begin{proof}
    Let us denote $g_z^\dagger(Q) \defeq \argmin_{f\in\scrN_{B,L}^K} \calL(f,\calT_z Q)$ to be the estimator of $\calT_z Q$ in $\scrN_{B,L}^K$ that minimizes the true risk under $\widetilde\mu$ for any mean-field term $z$ and value function $Q\in\scrC_{B,L}$. 
    We also define $g_z(Q;\calD) \defeq \argmin_{f\in\scrN_{B,L}^K} \calL_\calD(f,\calT_z Q)$ to be the \LSE of $\calT_z Q$ in $\scrN_{B,L}^K$ that minimizes the empirical risk on $\calD$. 
    For any $\widehat f\in\mathrm{LSE}(\scrN_{B,L}^K,\calD;\epsilon)$, it follows by definition that $\calE_{\calD}(\widehat f,\calT_z Q) = \calL_\calD(\widehat f,\calT_z Q) - \calL_\calD(g_z(Q;\calD),\calT_z Q) \leq \epsilon$.
    
    Recall that $\|\widehat f - \calT_z Q\|_{\widetilde\mu}^2 = \calL(\widehat f,\calT_z Q) - \calL(\calT_z Q,\calT_z Q)$.
    Following \cref{thm:empirical-bound}, it holds that for any $\alpha,\beta > 0$ and $0 < \varrho \leq \frac{1}{2}$,
    \begin{align*}
        & \prob\Big\{\exists f \in \scrN_{B,L}^K: \calL(f,\calT_z Q) - \calL(\calT_z Q,\calT_z Q) - (\calL_\calD(f,\calT_z Q) - \calL_\calD(\calT_z Q,\calT_z Q)) \\
        & \qquad\qquad\qquad \geq \varrho \cdot(\alpha+\beta+\calL(f,\calT_z Q) - \calL(\calT_z Q,\calT_z Q))\Big\}, \\
        & \qquad \leq 14 \sup_\calD \calN_1\left(\frac{\beta \varrho}{20B}, \scrN_{B,L}^K, \calD\right) \exp \left(-\frac{\varrho^2(1-\varrho) \alpha B^4 M}{214(1+\varrho)}\right),
    \end{align*}
    where the inequality follows from $|f| \leq B+dL = B$ for all $f\in\scrN_{B,L}^K$, and $\calN_1\left(\eps, \scrF, \calD\right)$ denotes the $\eps$-covering number of $\scrF$ on $\calD$ with respect to $\ell_1$ metric. 
    The supremum is taken with respect to all possible data $\calD$.
    This provides a high probability upper bound on $\|\widehat f - \calT_z Q\|_{\widetilde\mu}^2$ as 
    \begin{align*}
        & \prob\Big\{(1-\varrho)[\calL(\widehat f,\calT_z Q) - \calL(\calT_z Q,\calT_z Q)] \geq \varrho \cdot(\alpha+\beta) + [\calL_\calD(\widehat f,\calT_z Q) - \calL_\calD(\calT_z Q,\calT_z Q)]\Big\} \\
        & \qquad \leq \prob\Big\{\exists f \in \scrN_{B,L}^K: \calL(f,\calT_z Q) - \calL(g^\dagger,\calT_z Q) - (\calL_\calD(f,\calT_z Q) - \calL_\calD(g^\dagger,\calT_z Q)) \\
        & \qquad\qquad\qquad \geq \varrho \cdot(\alpha+\beta+\calL(f,\calT_z Q) - \calL(g^\dagger,\calT_z Q))\Big\}.
    \end{align*}
    Further, for the set of all functions in $\scrN_{B,L}^K$, its covering entropy is bounded by 
    \begin{align*}
        \sup_\calD\log \calN_1\left(\frac{\beta \varrho}{20B}, \scrN_{B,L}^K, \calD\right) & \leq \log \calN_\infty\left(\frac{\beta \varrho}{20B}, \scrA_{B,L}^{K+1}\right) \\
        & \leq (d+1) (K+1) \log\frac{(40B+80dL)B}{\beta \varrho},
    \end{align*}
    where the first inequality follows from upper bounding $\ell_1$ covering number with $\ell_\infty$ covering number and \cref{lem:exact-representation}, and the second inequality follows from \cref{lem:covering-affine} and $L_d \leq dL$. More specifically, for Lipschitz constant $L$ uniform $\ell_\infty$ bound $B$, the covering entropy of $\scrA_{B,L}^{K+1}$ is bounded through
    \begin{align*}
        \log\calN_\infty(\eps,\scrA_{B,L}^K) \leq (d+1) (K+1) \log\frac{B+4L_d}{\eps}
    \end{align*}
    for any $\eps \leq B+4L_d$.
    Thus, for any $\beta \leq \frac{20}{\varrho}(4dL_r + B)B$, we have
    \begin{align*}
        & \prob\Big\{(1-\varrho)[\calL(\widehat f,\calT_z Q) - \calL(\calT_z Q,\calT_z Q)] \geq \varrho \cdot(\alpha+\beta) + [\calL_\calD(\widehat f,\calT_z Q) - \calL_\calD(\calT_z Q,\calT_z Q)]\Big\} \\
        & \qquad \leq 14 \exp\left((d+1) (K+1) \log\frac{(20B+80dL)B}{\beta \varrho} - \frac{\varrho^2(1-\varrho) \alpha B^4 M}{214(1+\varrho)}\right).
    \end{align*}
    Take $\varrho = \frac{1}{2}$ and for the estimator $\widehat f\in\scrN_{B,L}^K$ that with probability at least 
    \begin{align*}
        1 - 14 \exp\left((d+1) (K+1) \log\frac{(40B+160dL)B}{\beta} - \frac{\alpha B^4 M}{2568}\right),
    \end{align*}
    it holds that the estimation error is bounded by
    \begin{align*}
        \|\widehat f - \calT_z Q\|_{\widetilde\mu}^2 & = \calL(\widehat f,\calT_z Q) - \calL(\calT_z Q,\calT_z Q) \\
        & \leq \alpha+\beta + 2 (\calL_\calD(\widehat f,\calT_z Q) - \calL_\calD(\calT_z Q,\calT_z Q)) \\
        & = \alpha+\beta + 2 (\calE_{\calD}(\widehat f,\calT_z Q) + \calL_\calD(g_z(Q;\calD),\calT_z Q) - \calL_\calD(\calT_z Q,\calT_z Q)) \\
        & \leq \alpha + \beta + 2\epsilon + 2(\calL_\calD(g_z(Q;\calD),\calT_z Q) - \calL_\calD(\calT_z Q,\calT_z Q)) \\
        & \leq \alpha + \beta + 2\epsilon + 2(\calL_\calD(g_z^\dagger(Q),\calT_z Q) - \calL_\calD(\calT_z Q,\calT_z Q)),
    \end{align*}
    where the second equality is due to $\calE_{\calD}(\widehat f,\calT_z Q) = \calL_\calD(\widehat f,\calT_z Q) - \calL_\calD(g_z(Q;\calD),\calT_z Q)$ and the last inequality is due to the fact that $g_z(Q;\calD)$ is defined to be the \LSE that minimizes the empirical risk, \ie, $\calL_\calD(g_z(Q;\calD),\calT_z Q) \leq \calL_\calD(g_z^\dagger(Q),\calT_z Q)$.
    Consequently, solve for proper $\alpha$ and $\beta$ to get that if  
    \begin{align*}
        M \gtrsim \frac{1}{e^{2(d+1)K}\delta}, \qquad M\log M \gtrsim \frac{B+4dL}{B^3(d+1)K},
    \end{align*}
    then for any $d\geq 2$, with probability at least $1-\delta/2$ 
    \begin{align*}
        \|\widehat f - \calT_z Q\|_{\widetilde\mu}^2 & \lesssim (d+1)(K+1)B^4 \frac{\log M}{M} + 2\epsilon + 2(\calL_\calD(g_z^\dagger(Q),\calT_z Q) - \calL_\calD(\calT_z Q,\calT_z Q)).
    \end{align*}
    
    Following \cref{lem:approx-affine} and the conclusion that $\scrA_{B,L}^{K,+}\subseteq\scrN_{B,L}^K$ from \cref{lem:exact-representation}, under any measure $\widetilde\mu$ it holds that
    \begin{align*}
        \|g_z^\dagger(Q) - \calT_z Q\|_{\widetilde\mu}^2 \leq \|\overline g_z^\ddagger(Q) - \calT_z Q\|_{\widetilde\mu}^2 \leq 72^2 L_d^2 (K-1)^{-4/d},
    \end{align*}
    where $\overline g_z^\ddagger(Q) \defeq \sigma(g_z^\ddagger(Q))$ is the truncated estimator, where $\sigma$ denotes ReLU function and $g_z^\ddagger(Q) \defeq \argmin_{f\in\scrA_{B,L}^{K-1}} \calL(f,\calT_z Q)$.
    Note that $\|g_z^\dagger(Q) - \calT_z Q\|_{\widetilde\mu}^2 = \calL(g_z^\dagger(Q),\calT_z Q) - \calL(\calT_z Q,\calT_z Q)$, and following Bernstein's inequality, we have
    \begin{align*}
        & |\calL(g_z^\dagger(Q),\calT_z Q) - \calL(\calT_z Q,\calT_z Q) - \calL_\calD(g_z^\dagger(Q),\calT_z Q) + \calL_\calD(\calT_z Q,\calT_z Q)| \\
        & \qquad \leq \frac{2B^2\log\frac{1}{\delta}}{3M} + 4B \|g_z^\dagger(Q) - \calT_z Q\|_{\widetilde\mu} \sqrt{\frac{\log\frac{1}{\delta}}{M}}.
    \end{align*}
    It follows that with probability at least $1-\delta/2$ 
    \begin{align*}
        \calL_\calD(g_z^\dagger(Q),\calT_z Q) - \calL_\calD(\calT_z Q,\calT_z Q) & \leq 72^2 d^2L^2 (K-1)^{-4/d} + 288BdL K^{-2/d} \sqrt{\frac{\log\frac{2}{\delta}}{M}} + \frac{2B^2\log\frac{2}{\delta}}{3M}.
    \end{align*}
    
    Combine the upper bounds above together, we have with probability at least $1-\delta$
    \begin{align*}
        \|\widehat f - \calT_z Q\|_{\widetilde\mu}^2 & \lesssim \frac{(d+1)(K+1)}{(1-\gamma)^4} \frac{\log M}{M} + d^2L^2 (K-1)^{-4/d} + BdL K^{-2/d} \sqrt{\frac{\log\frac{2}{\delta}}{M}} + \frac{B^2\log\frac{2}{\delta}}{M} + \epsilon.
    \end{align*}
    Take $K = \lceil M^{d/(d+4)}\rceil$, then it holds that
    \begin{align*}
        \|\widehat f - \calT_z Q\|_{\widetilde\mu}^2 & \lesssim (d+1)B^4 M^{-4/(d+4)} \log M + d^2L^2 M^{-4/(d+4)} \\
        & \qquad + BdL M^{-4/(d+4)} \sqrt{\log\frac{2}{\delta}} + \frac{B^2\log\frac{2}{\delta}}{M} + \epsilon.
    \end{align*}
\end{proof}

\begin{lemma}\label{lem:fqi-approx-nn}
    Let $B \defeq 1/(1-\gamma)$ and $L \defeq L_r/(1-\gamma)$.
    For the \MDP induced by any mean-field term $z$, if \cref{algo:q-iteration} achieves $\epsilon$-approximate \LSE in $\scrN_{B,L}^K$ with $K = \lceil M^{d/(d+4)}\rceil$ at all $\tau$ iterations, then the approximation error of $\widetilde Q_z^\tau$ with respect to the optimal value function $Q_z^\star$ under any measure $\widetilde\mu$ is upper bounded by
    \begin{align*}
        \|\widetilde Q_z^\tau - Q_z^\star\|_{\widetilde\mu} & \lesssim \frac{\sqrt{D}}{1-\gamma} \left( (d+1)B^4 + (B+dL)dL \right) M^{-\frac{4}{d+4}}\log M + \frac{\sqrt{D}\epsilon}{1-\gamma}
    \end{align*}
    with probability at least $1-\delta$ for the number of iterations 
    \begin{align*}
        \tau \asymp \log \frac{M^{4/(d+4)}}{d^2L^2\sqrt{D}} / \log\frac{1}{\gamma}
    \end{align*}
    for any sample size $M = |\calD|$ such that
    \begin{gather*}
        M \gtrsim \frac{1}{\delta\log\frac{1}{\gamma}} \log \frac{M^{4/(d+4)}}{d^2L^2\sqrt{D}}, \qquad M^\frac{2d+4}{d+4} \log M \gtrsim \frac{B^3(B+4dL)}{d+1}, \\
        M \gtrsim \frac{\tau}{\exp(2(d+1)M^{d/(d+4)})\delta}, \qquad M^\frac{d}{d+4} \gtrsim \frac{B^2}{d^2L^2}\log \frac{2\tau}{\delta}.
    \end{gather*} 
\end{lemma}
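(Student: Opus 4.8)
The plan is to follow verbatim the template of the proof of \cref{lem:fqi-approx-affine}, replacing its per-iteration regression guarantee (\cref{lem:approx-error-affine}) with its \ICNN\ counterpart \cref{lem:approx-error-nn}. First I would invoke \cref{lem:fqi-error}, which is agnostic to the hypothesis class and uses only the concentrability \cref{assum:concentrability} together with the Bellman-contraction structure, to reduce the global estimation error to a weighted sum of per-iteration regression errors:
$$\|\widetilde Q_z^\tau - Q_z^\star\|_{\widetilde\mu} \leq \sum_{\ell=1}^\tau \gamma^{\tau-\ell}\sqrt{D}\cdot\|\widetilde Q_z^\ell - \calT_z \widetilde Q_z^{\ell-1}\|_{\widetilde\mu} + \frac{\gamma^\tau}{1-\gamma}.$$

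Next, since each $\widetilde Q_z^\ell$ is an $\epsilon$-approximate \LSE\ in $\scrN_{B,L}^K$ of the target $\calT_z \widetilde Q_z^{\ell-1}$ and $\calT_z \widetilde Q_z^{\ell-1}\in\scrC_{B,L}$ by \cref{lem:concavity-general}, I would apply \cref{lem:approx-error-nn} with failure probability $\delta/\tau$ and take a union bound over $\ell\in[\tau]$. This yields, with probability at least $1-\delta$, a uniform bound of order $(d+1)B^4 M^{-4/(d+4)}\log M + d^2 L^2 M^{-4/(d+4)} + BdL M^{-4/(d+4)}\sqrt{\log(\tau/\delta)} + B^2 M^{-1}\log(\tau/\delta) + \epsilon$ on each $\|\widetilde Q_z^\ell - \calT_z \widetilde Q_z^{\ell-1}\|_{\widetilde\mu}^2$, valid under the stated sample-size conditions $M \gtrsim \tau/(e^{2(d+1)M^{d/(d+4)}}\delta)$, $M^{(2d+4)/(d+4)}\log M \gtrsim B^3(B+4dL)/(d+1)$, and $M^{d/(d+4)} \gtrsim (B^2/d^2L^2)\log(2\tau/\delta)$; the last condition absorbs the $\sqrt{\log(\tau/\delta)}$ and $M^{-1}\log(\tau/\delta)$ factors into the leading term of order $((d+1)B^4 + (B+dL)dL)M^{-4/(d+4)}\log M + \epsilon$.

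Plugging this into the weighted sum and bounding the geometric series $\sum_{\ell=1}^\tau \gamma^{\tau-\ell}\leq 1/(1-\gamma)$, the statistical part becomes $\frac{\sqrt{D}}{1-\gamma}\big(((d+1)B^4 + (B+dL)dL)M^{-4/(d+4)}\log M + \epsilon\big)$. It remains to kill the optimization term $\gamma^\tau/(1-\gamma)$: choosing $\tau \asymp \log(M^{4/(d+4)}/(d^2L^2\sqrt{D}))/\log(1/\gamma)$ forces $\gamma^\tau \lesssim d^2L^2\sqrt{D}\,M^{-4/(d+4)}$, so $\gamma^\tau/(1-\gamma)$ is dominated by the statistical term, and the condition $M \gtrsim \log(M^{4/(d+4)}/(d^2L^2\sqrt{D}))/(\delta\log(1/\gamma))$ ensures this choice of $\tau$ is admissible for the union bound above. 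Combining the pieces gives the claimed bound.

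The only non-routine ingredient is \cref{lem:approx-error-nn} itself, whose proof relies on \cref{lem:exact-representation} to identify $\scrN_{B,L}^K$ with the max-affine class $\scrA_{B,L}^{K,+}$ (so that the covering-number bound \cref{lem:covering-affine} and the max-affine approximation rate \cref{lem:approx-affine} transfer verbatim) and on the assumption that the \ICNN\ empirical-risk minimizer is solved to global optimality; granting these, the present lemma is a mechanical composition, and the main care is bookkeeping the various $M$-conditions so that none of the invoked high-probability statements fails under the final choice of $\tau$.
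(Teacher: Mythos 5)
Your proposal follows the paper's own proof essentially verbatim: the decomposition via \cref{lem:fqi-error}, the union bound over the $\tau$ applications of the per-iteration \ICNN\ regression guarantee (\cref{lem:approx-error-nn}, itself resting on \cref{lem:exact-representation} and the max-affine covering/approximation bounds), the geometric-series bound $\sum_\ell \gamma^{\tau-\ell}\le 1/(1-\gamma)$, and the choice of $\tau$ to make $\gamma^\tau/(1-\gamma)$ subdominant are all exactly the steps the paper takes, including the role of the condition $M^{d/(d+4)}\gtrsim (B^2/d^2L^2)\log(2\tau/\delta)$ in absorbing the lower-order terms. The argument is correct and no further comparison is needed.
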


\begin{proof}
    Combining \cref{lem:fqi-error} and \cref{lem:approx-error-affine}, we have with probability at least $1-\delta$ that the total approximation error of the output $\widetilde Q_z^\tau$ to the optimal value function $Q_z^\star$ under any measure $\widetilde\mu$ is upper bounded by
    \begin{align*}
        \|\widetilde Q_z^\tau - Q_z^\star\|_{\widetilde\mu} & \leq \sum_{\ell=1}^\tau \gamma^{\tau-\ell} \sqrt{D}\cdot \|\widetilde Q_z^\ell - \calT \widetilde Q_z^{\ell-1}\|_{\widetilde\mu} + \frac{\gamma^\tau}{1-\gamma} \\
        & \lesssim \frac{1}{1-\gamma} \sqrt{D}\cdot \Big((d+1)B^4 M^{-4/(d+4)} \log M + d^2L^2 M^{-4/(d+4)} \\
        & \qquad + BdL M^{-4/(d+4)} \sqrt{\log\frac{2\tau}{\delta}} + \frac{B^2\log\frac{2\tau}{\delta}}{M} + \epsilon\Big) + \frac{\gamma^\tau}{1-\gamma} \\
        & \lesssim \frac{1}{1-\gamma} \sqrt{D}\cdot \Big((d+1)B^4 M^{-4/(d+4)} \log M + 2d^2L^2 M^{-4/(d+4)} \\
        & \qquad + BdL M^{-4/(d+4)} \sqrt{\log\frac{2\tau}{\delta}} + \epsilon\Big) + \frac{\gamma^\tau}{1-\gamma},
    \end{align*}
    for $M \gtrsim \frac{\tau}{\exp(2(d+1)M^{d/(d+4)})\delta}$ and $M^\frac{2d+4}{d+4} \log M \gtrsim \frac{B^3(B+4dL)}{d+1}$. The first inequality follows from \cref{lem:fqi-error}, the second inequality is due to applying \cref{lem:approx-error} such that the upper bound on $\|\widetilde Q_z^\ell - \calT \widetilde Q_z^{\ell-1}\|_{\widetilde\mu}$ holds with probability at least $1-\delta/\tau$ for all $\ell\in[\tau]$, and the last inequality follows from $M^{d/(d+4)} \geq \frac{B^2}{d^2L^2}\log \frac{2\tau}{\delta}$.
    Take the number of iteration 
    \begin{align*}
        \tau & = \left\lceil\log \frac{M^{4/(d+4)}}{d^2L^2\sqrt{D}} / \log\frac{1}{\gamma}\right\rceil,
    \end{align*}
    it follows that for $M \geq \frac{1}{\delta\log\frac{1}{\gamma}} \log \frac{M^{4/(d+4)}}{d^2L^2\sqrt{D}}$, the approximation error
    \begin{align*}
        \|\widetilde Q_z^\tau - Q_z^\star\|_{\widetilde\mu} & \lesssim \frac{\sqrt{D}}{1-\gamma} \left( (d+1)B^4 + (B+dL)dL \right) M^{-\frac{4}{d+4}}\log M + \frac{\sqrt{D}\epsilon}{1-\gamma}.
    \end{align*}
\end{proof}

\subsection{Proof of \cref{lem:exact-representation}}\label{sec:exact-representation-proof}

\cref{lem:exact-representation} shows the existence of a set $\scrN_{B,L}^K$ of \ICNNs that covers the set of all non-negative $L$-Lipschitz $K$-piece max-affine functions bounded by $B$ with a convex parameter set. Under such parameter set, we show that $\scrN_{B,L}^K$ is equivalent to the set of all truncated $L$-Lipschitz $K$-piece max-affine functions bounded by $B$, which is a subset of all positive $L$-Lipschitz $(K+1)$-piece max-affine functions bounded by $B$.

\begin{proof}
    The proof is based on the construction of a $K$-layer \ICNN similar to that in \citet[Theorem 1]{chen2018optimal}, where they showed that any $K$-piece max-affine function can be represented exactly by a $K$-layer \ICNN. 
    However, \citet{chen2018optimal} did not provide a parameter set for the neural network such that the $K$-layer \ICNN is able to represent non-negative $L$-Lipschitz functions bounded by $B$ exclusively, which is what we are going to show in this proof.
    
    Especially, $\scrN_{B,L}^K$ is a set of $K$-layer \ICNNs with input dimension $2d$ and all hidden layer dimension $1$. 
    For any input $x\in\calX$ and $L$-Lipschitz $K$-max-affine function $f_K(x) = \max\{\alpha_1\transpose x + c_1,\ldots,\alpha_K\transpose x + c_K\}$ that maps from $\calX$ to $[0,B]$, the corresponding input vector for the network representation of $f_K$ in $\scrN_{B,L}^K$ is given by $x^\ddagger \defeq [x\transpose,-x\transpose]\transpose$.
    More specifically, we have all layers $y_i$ of the network representation of $f_K$ defined as
    \begin{align*}
        y_i = \sigma(y_{i-1} + (\alpha_{i}-\alpha_{i+1})_+\transpose x - (\alpha_{i+1}-\alpha_{i})_+\transpose x + (c_{i}-c_{i+1}))
    \end{align*}
    for all $i = 2,\ldots,K-1$ and
    \begin{gather*}
        y_1 = \sigma((\alpha_{1}-\alpha_{2})_+\transpose x - (\alpha_{2}-\alpha_{1})_+\transpose x + (c_{1}-c_{2})), \\
        y_K = \sigma(y_{K-1} + (\alpha_{K})_+\transpose x - (-\alpha_{K})_+\transpose x + c_{K}).
    \end{gather*}
    Notice that we have $W_0^{(y)} = 0$ and $W_i^{(y)} = 1$ fixed for all $i\in[K-1]$; the shifting terms $\beta_{K-1} = c_K$ and $\beta_i = c_{i+1} - c_{i+2}$ for all $i = 0,\ldots,K-2$. Further, we have for all $i = 0,\ldots,K-2$
    \begin{align*}
        W_i^{(x)} = [(\alpha_{i+1}-\alpha_{i+2})_+\transpose, (\alpha_{i+2}-\alpha_{i+1})_+\transpose]\transpose \in\RR_+^{2d}
    \end{align*}
    and $W_{K-1}^{(x)} = [(\alpha_K)_+\transpose,(-\alpha_K)_+\transpose]\transpose\in\RR_+^{2d}$, such that $W_i^{(x)}{}\transpose x^\ddagger = (\alpha_{i+1}-\alpha_{i+2})_+\transpose x - (\alpha_{i+2}-\alpha_{i+1})_+\transpose x$ and $\|W_i^{(x)}\|_\infty = \|\alpha_{i+1}-\alpha_{i+2}\|_\infty$ for all $i = 0,\ldots,K-2$ as well as $W_{K-1}^{(x)}{}\transpose x^\ddagger = (\alpha_K)_+\transpose x - (-\alpha_K)_+\transpose x$ and $\|W_{K-1}^{(x)}\|_\infty = \|\alpha_K\|_\infty$.
    
    On the other hand, for any set of parameters $\{(W_i^{(x)},\beta_i)\}_{i=0}^{K-1}$ with $W_i^{(x)}\in\RR_+^{2d}$ and $\beta_i\in\RR$ for all $i = 0,\ldots,K-1$, 
    we deem $W_i^{(x)} = [w_{i,1}\transpose,w_{i,2}\transpose]\transpose$ 
    as concatenations of vectors $w_{i,1},w_{i,2}\in\RR^d$ 
    , then for any input $x^\ddagger$, we have 
    \begin{align*}
        \inner{W_i^{(x)}}{x^\ddagger} = w_{i,1}\transpose x - w_{i,2}\transpose x. 
    \end{align*}
    Further notice that for a concise representation $w_i \defeq w_{i,1} - w_{i,2}$,
    we have $w_i\transpose x = \inner{W_i^{(x)}}{x^\ddagger}.$
    There exists a set of vectors $\{\alpha_i\}_{i=1}^K$ such that 
    we have $\alpha_K = w_{K-1}$ and $\alpha_i = w_{i-1} + \alpha_{i+1}$ for all $i\in[K-1]$; similarly, there exists a set of scalars $\{c_i\}_{i=1}^K$ such that $c_K = \beta_{K-1}$ and $c_i = \beta_{i-1} + c_{i+1}$ for all $i\in[K-1]$.
    Solve the system of equations to get $\alpha_i = \sum_{j=i-1}^{K-1} w_j$ and $c_i = \sum_{j=i-1}^{K-1} \beta_j$ for all $i\in[K]$, and the \ICNN with parameterization $\{(W_i^{(x)},\beta_i)\}_{i=0}^{K-1}$ represents function $f_K \defeq \sigma(\max\{\alpha_1\transpose x + c_1,\ldots,\alpha_K\transpose x + c_K\})$.
    Under the convex constraint set $\|\sum_{j=i-1}^{K-1} w_j\|_\infty \leq L$, it holds that the represented max-affine function $f_K$ is $L$-Lipschitz. Moreover, the represented max-affine function $f_K$ is upper bounded by $B$ on $\calX$ if and only if the composing affine functions $\alpha_i\transpose x + c_i$ for all $i\in[K]$ are upper bounded; this is due to $f_K \geq \alpha_i\transpose x + c_i$ for all $i\in[K]$ by definition.
    It requires that $\alpha_i\transpose x + c_i \leq B$ for all $i\in[K]$ and $x\in\calX$, which can be equivalently written as
    \begin{align*}
        \max_{x\in\calX} \sum_{j=i-1}^{K-1} w_j\transpose x + \sum_{j=i-1}^{K-1} \beta_j \leq B.
    \end{align*}
    Note that for any $\alpha_1,\alpha_2\in\Rd$, it holds that $\max_{x\in\calX} (\alpha_1\transpose x + \alpha_2\transpose x) \leq \max_{x\in\calX} \alpha_1\transpose x + \max_{x\in\calX} \alpha_2\transpose x$,
    and the boundedness requirement forms a convex constraint set following the convexity of $\calX$. It is not hard to see that under the convex constraint set described by all the aforementioned conditions, the set $\scrN_{B,L}^K$ \ICNN represents only the truncated $K$-max-affine functions that are $L$-Lipschitz and upper bounded by $B$.
\end{proof}

\section{Supporting Lemmata}

\begin{lemma}\label{lem:dual}
    Let $f: \frakF \rightarrow \mathbb{R} \cap\{\infty\}$ be a differentiable $\zeta$-strongly convex function with respect to a norm $\|\cdot \|$, where $\frakF$ is the set of all measurable functions on $\calA\subseteq\RR^{d_w}$. Let the effective domain of $f$ be $S=\{x \in \frakF: f(x)\in\RR\}$ and $f^\star$ be the Fenchel conjugate of  $f$. Then, we have
    \begin{enumerate}
        \item 
        $f^\star$ is differentiable on $\frakF$;
        \item
        $\nabla f^\star(y)=\arg \max _{x \in S}\{\langle x, y\rangle-f(x)\}$;
        \item
        $f^\star$ is $\frac{1}{\zeta}$-smooth with respect to the dual norm  $\|\cdot \|_{*}$. That is, for all $y_1, y_2 \in \frakF$, we have
        \begin{align*}
            \|\nabla f^\star(y)-\nabla f^\star(y')\| \leq \frac{1}{\zeta}\|y-y'\|_{*}.
        \end{align*}
    \end{enumerate}
\end{lemma}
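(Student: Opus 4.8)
The plan is to establish the three claims in order, using the $\zeta$-strong convexity of $f$ throughout; the algebraic core is dimension-free and amounts to the standard duality between strong convexity and smoothness. First I would fix $y \in \frakF$ and observe that $x \mapsto \langle x, y\rangle - f(x)$ is $\zeta$-strongly concave on $S$, hence has at most one maximizer, and moreover is coercive: expanding $f(x) \geq f(x_0) + \langle \nabla f(x_0), x - x_0\rangle + \tfrac{\zeta}{2}\|x-x_0\|^2$ shows the objective tends to $-\infty$ as $\|x\|\to\infty$ within $S$, so a unique maximizer $x_y \in S$ exists. The first-order optimality condition for this differentiable strongly concave problem gives $y = \nabla f(x_y)$, so $\nabla f$ is injective on $S$ with inverse $y \mapsto x_y$.

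Next, for claims (1) and (2) I would identify $x_y$ with $\nabla f^\star(y)$. Fix a direction $h \in \frakF$. From the definition of the conjugate, $f^\star(y+th) \geq \langle x_y, y+th\rangle - f(x_y) = f^\star(y) + t\langle x_y, h\rangle$, which yields the lower estimate on the difference quotient. For the matching upper estimate I would use continuity of $y \mapsto x_y$ (which follows from the Lipschitz bound proved in claim (3), or from a direct argument), write $f^\star(y+th) = \langle x_{y+th}, y+th\rangle - f(x_{y+th})$, subtract $f^\star(y) \geq \langle x_{y+th}, y\rangle - f(x_{y+th})$, and divide by $t$; letting $t\to 0$ and using $x_{y+th}\to x_y$ gives $\tfrac{d}{dt}\big|_{t=0} f^\star(y+th) = \langle x_y, h\rangle$. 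Since $h$ is arbitrary, $f^\star$ is differentiable with $\nabla f^\star(y) = x_y = \arg\max_{x\in S}\{\langle x,y\rangle - f(x)\}$, giving (1) and (2).

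For claim (3) I would argue directly from the two-sided strong convexity inequality. Let $y_1, y_2 \in \frakF$ and put $x_i \defeq \nabla f^\star(y_i)$, so $y_i = \nabla f(x_i)$ by (2). Writing strong convexity of $f$ at $x_1$ expanded around $x_2$ and vice versa and adding the two inequalities, the function-value terms cancel and one obtains the monotonicity bound $\langle y_1 - y_2,\, x_1 - x_2\rangle \geq \zeta\|x_1 - x_2\|^2$. Applying H\"older's inequality on the left, $\langle y_1 - y_2, x_1 - x_2\rangle \leq \|y_1 - y_2\|_* \|x_1 - x_2\|$, and dividing through by $\|x_1 - x_2\|$ (the claim being trivial when it vanishes) gives $\|\nabla f^\star(y_1) - \nabla f^\star(y_2)\| \leq \tfrac{1}{\zeta}\|y_1 - y_2\|_*$, which is the asserted $1/\zeta$-smoothness.

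The main obstacle is the infinite-dimensional setting: $\frakF$ is a space of measurable functions on $\calA$ rather than $\RR^n$, so the existence of the maximizer $x_y$ and the passage to differentiability require some care. I would handle this either by working with the natural pairing of densities in $L^1(\calA)$ against $L^\infty(\calA)$ and invoking weak-$\ast$ compactness together with lower semicontinuity of $f$ for existence, or by restricting attention to the compact simplex $\Delta_{\Gamma(s)}$ relevant to the application, on which these measure-theoretic technicalities are immediate; the remaining steps are then purely consequences of strong convexity and carry over verbatim.
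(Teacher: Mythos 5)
Your proposal is correct and follows essentially the same route as the paper, which likewise reduces everything to the strong-monotonicity inequality $\langle x-x',\,y-y'\rangle \geq \zeta\|x-x'\|^2$ for (sub)gradients of $f$ followed by H\"older's inequality, adapting the standard argument of Shalev-Shwartz. Your treatment is somewhat more careful than the paper's sketch — in particular about existence of the maximizer, the difference-quotient derivation of $\nabla f^\star$, and the infinite-dimensional technicalities that the paper glosses over — but the underlying idea is identical.
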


\begin{proof}
    This lemma is adopted from \citet{shalev2007online}, and we provide a brief proof here for completeness.
    Suppose we assume there exist $x,x'\in S$ such that $x\in\partial f^*(y)$ and $x'\in\partial f^*(y)$ for any $y\in S$. 
    For any $x\in\partial f^*(y)$, it holds
    \begin{align*}
        f^*(g) \geq f^*(y) + \inner{y}{g-x}
    \end{align*}
    for any $g\in S$. Rearrange the terms to get 
    Since $(f^*)^* = f$ and the fact that for any $y'$
    \begin{align*}
        f^*(y') & = \sup_{g\in S} \{\inner{g}{y'} - f(g)\} \\
        & \geq \inner{x}{y'} - f(x) \\
        & = f^*(y) + \inner{x}{y'-y},
    \end{align*}
    we conclude that $y\in\partial f(x)$ and $y\in\partial f(x')$. 
    Since $f$ is $\zeta$-strongly convex function, then for any $x,x'\in\frakF$ and $y\in\partial f(x)$, $y'\in\partial f(x')$, we have $\|x-x'\|^2 \leq \frac{1}{\zeta}\inner{x-x'}{y-y'}$. Following Holder's inequality, we conclude that
    \begin{align*}
        \|\nabla f^\star(y)-\nabla f^\star(y')\| \leq \frac{1}{\zeta}\|y-y'\|_{*}.
    \end{align*}
\end{proof}

\begin{theorem}\label{thm:empirical-bound}
    We assume a data set $\{(X_i,Y_i)\}_{i=1}^n$ generated from distribution $(X,Y)\sim\scrD$ such that there exists $B > 1$ and $|Y| \leq B$ almost surely. We further define $m(x) \defeq \expect[Y\given X = x]$. Let $\scrF$ be a set of functions $f: \Rd \rightarrow \RR$ and let $|f(x)| \leq B$. Then, it holds for each $n \geq 1$ that
    \begin{align*}
        & \prob\Big\{\exists f \in \scrF: \expect|f(X)-Y|^2 - \expect|m(X)-Y|^2 - \frac{1}{n} \sum_{i=1}^{n}\{|f(X_{i})-Y_{i}|^2-|m(X_{i})-Y_{i}|^2\} \\
        & \qquad\qquad\qquad \geq \varrho \cdot(\alpha+\beta+\expect|f(X)-Y|^2-\expect|m(X)-Y|^2)\Big\} \\
        & \qquad \leq 14 \sup_{x_{1}^{n}} \calN_1\left(\frac{\beta \varrho}{20 B}, \scrF, x_{1}^{n}\right) \exp \left(-\frac{\varrho^2(1-\varrho) \alpha n}{214(1+\varrho) B^{4}}\right),
    \end{align*}
    where $x_1^n \defeq (x_1,\ldots,x_w)$ denotes $n$ fixed points in $\Rd$, $\alpha, \beta>0$, and $0<\varrho \leq 1 / 2$.
\end{theorem}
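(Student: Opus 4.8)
The plan is to establish this by the classical three-step route from the distribution-free theory of least-squares regression (see, e.g., the monograph of Gy\"orfi et al.): a ghost-sample symmetrization, a Rademacher symmetrization, and a covering-number union bound combined with a one-sided Bernstein inequality. The structural fact that drives the whole argument is obtained by writing, for each $f\in\scrF$,
\[
g_f(x,y) \defeq |f(x)-y|^2 - |m(x)-y|^2 = \big(f(x)-m(x)\big)\big(f(x)+m(x)-2y\big).
\]
Then $\expect[g_f(X,Y)] = \expect[(f(X)-m(X))^2] =: \sigma_f^2 \geq 0$, while $|f(x)+m(x)-2y|\leq 4B$ almost surely gives $\|g_f\|_\infty \leq 4B^2$ and $\expect[g_f(X,Y)^2] \leq 16B^2\sigma_f^2$. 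Thus the variance of every $g_f$ is controlled by $16B^2$ times its mean, which is exactly what makes the $f$-dependent tolerance $\varrho(\alpha+\beta+\sigma_f^2)$ tractable without peeling over the magnitude of $\sigma_f^2$. Note also that the left-hand side of the stated inequality equals $\sigma_f^2 - \frac{1}{n}\sum_{i=1}^n g_f(X_i,Y_i)$.

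First I would symmetrize by a ghost sample: introduce an independent copy $\{(X_i',Y_i')\}_{i=1}^n$ of the data. On the event that some $f$ violates the inequality, Bernstein's inequality applied to $\frac{1}{n}\sum_{i=1}^n g_f(X_i',Y_i')$ --- using $\Var(g_f(X',Y'))\leq 16B^2\sigma_f^2 \leq 16B^2(\alpha+\beta+\sigma_f^2)$ and $\|g_f\|_\infty\leq 4B^2$ --- shows that, with conditional probability at least $\tfrac12$, the ghost average exceeds $\sigma_f^2$ minus a small multiple of $\alpha+\beta$, provided $\alpha n \gtrsim B^4$; for smaller $n$ the right-hand side of the claim is at least one and the statement is vacuous, so it holds for all $n\geq 1$. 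This replaces $\sigma_f^2$ by the ghost empirical average at the price of adjusting the constants attached to $\varrho,\alpha,\beta$. Next I would insert i.i.d.\ Rademacher signs $\varepsilon_i$ in front of $g_f(X_i,Y_i)-g_f(X_i',Y_i')$, which is legitimate by exchangeability of the pairs and again only affects constants.

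Finally, conditioning on the realized covariate points, I would discretize $\scrF$: since $|g_f(x,y)-g_{f'}(x,y)| = |(f(x)-f'(x))(f(x)+f'(x)-2y)|\leq 4B\,|f(x)-f'(x)|$, an empirical $\ell_1$ cover of $\scrF$ at radius $\eta = \beta\varrho/(20B)$ induces an empirical $\ell_1$ cover of $\{g_f:f\in\scrF\}$ at radius $4B\eta = \beta\varrho/5$, and this discretization error is absorbed into the $\beta$-part of the threshold. For each of the at most $\calN_1\!\big(\beta\varrho/(20B),\scrF,x_1^n\big)$ representatives, a one-sided Bernstein bound applied to the sign-weighted difference --- once more exploiting $\Var \leq 16B^2(\alpha+\beta+\sigma^2)$ and range $\lesssim B^2$ --- yields a failure probability of order $\exp\!\big(-c\,\varrho^2(1-\varrho)\alpha n/((1+\varrho)B^4)\big)$. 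A union bound over the cover, followed by unwinding the two symmetrizations and taking the supremum over the sample configuration to drop the conditioning, produces the claimed prefactor $14$ and exponent constant $214$.

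The only delicate point --- what distinguishes this from a routine Glivenko--Cantelli estimate --- is the relative-deviation bookkeeping: because the tolerance scales with the unknown $f$-dependent squared bias $\sigma_f^2$, each of the three steps must be carried out so that the $\sigma_f^2$ term is retained and matched on both sides rather than discarded, and the explicit constants ($20$ in the covering radius; $214$, $1+\varrho$, $1-\varrho$ in the exponent; the factor $14$) must be tracked through the ghost-sample step, the Rademacher step, and the Bernstein tail. Everything else is standard.
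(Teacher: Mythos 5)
The paper does not actually prove this statement---its ``proof'' is a one-line citation to Theorem 11.4 of Gy\"orfi, Kohler, Krzy\.zak and Walk (2002)---and your outline is precisely the ghost-sample symmetrization / Rademacher randomization / covering-plus-Bernstein argument by which that theorem is established there, with the key structural facts correctly identified: $\expect[g_f]=\expect[(f(X)-m(X))^2]$, the variance--mean domination $\expect[g_f^2]\leq 16B^2\,\expect[g_f]$ that makes the relative-deviation threshold workable, and the $4B$-Lipschitz transfer of an empirical $\ell_1$-cover of $\scrF$ at radius $\beta\varrho/(20B)$ to a cover of $\{g_f\}$ at radius $\beta\varrho/5$. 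Your approach therefore matches the source the paper invokes; the only part not carried out is the constant bookkeeping yielding the prefactor $14$ and the $214$ in the exponent, which you flag explicitly.
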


\begin{proof}
    This is \citet[Theorem 11.4]{gyorfi2002distribution}.
\end{proof}

\begin{lemma}\label{lem:covering-convex}
    Define the class of uniformly $B$-bounded, subdifferentiable, and uniformly $L$-Lipschitz functions on $\calX\subseteq\Rd$ as
    \begin{align*}
        \scrC_{\calX,B,L} & \defeq\left\{f: \calX \rightarrow \RR \mid f \text { is convex, }\|f\|_{\infty} \leq B, \forall x \in \calX: \partial f(x) \neq \emptyset, \forall g \in \partial f(x):\|g\|_{\infty} \leq L\right\}.
    \end{align*}
    Let $R_d^* \defeq \max\{8L_d, 2B+4L_d\}$ where $L_d \defeq dL\diam(\calX)$, then for all $\eps\in(0,80L_d]$, the covering entropy of $\scrC_{\calX,B,L}$ is bounded by
    \begin{align*}
        \log\calN_\infty(\eps, \scrC_{\calX,B,L}) \leq 2(d+1)\left(\frac{80 L_d}{\eps}\right)^{d / 2} \ln \left(\frac{10 R_d^*}{\eps}\right).
    \end{align*}
\end{lemma}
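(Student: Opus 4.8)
This is a quantitative form of Bronshtein's classical metric-entropy bound for convex functions, and the plan is two-step: first, approximate every $f\in\scrC_{\calX,B,L}$ to sup-norm error $\eps/2$ by a maximum of a controlled number $N=N(\eps)$ of affine functions with bounded coefficients; second, build an $(\eps/2)$-net on the low-dimensional parameter set of such max-affine functions. The only genuinely convexity-driven ingredient is that $N$ may be taken of order $(L_d/\eps)^{d/2}$ rather than the $(L_d/\eps)^{d}$ a direct discretization of $\calX$ would give, and this square-root saving is exactly what produces the exponent $d/2$ in the statement.

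For the first step, I would pass to the epigraph of $f$ truncated to $\calX\times[-B,B]$, a convex body in $\RR^{d+1}$ whose lower boundary is the graph of $f$ and which, since $\|f\|_\infty\le B$ and $\diam(\calX)$ is finite, is contained in a ball of radius of order $R_d^*=\max\{8L_d,2B+4L_d\}$. Bronshtein's polytopal-approximation theorem --- a convex body inside a ball of radius $R$ in $\RR^n$ admits an $\eta$-Hausdorff approximation by a polytope with $O((R/\eta)^{(n-1)/2})$ facets --- applied with $n=d+1$ yields a supporting-hyperplane representation showing $\|f-g\|_{\infty,\calX}\le\eps/2$ for some $g(x)=\max_{k\le N}(\alpha_k\transpose x+c_k)$ with $N$ of order $(80L_d/\eps)^{d/2}$; the $\alpha_k$ are slopes of supporting hyperplanes of $f$, so $\|\alpha_k\|_\infty\le L$, and $|\alpha_k\transpose x+c_k|\le R_d^*$ on $\calX$, hence $|c_k|\lesssim R_d^*$. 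Equivalently, one can argue this max-affine approximation directly: cover $\calX$ by cells of side of order $\sqrt{\eps\,\diam(\calX)/(dL)}$, take a supporting hyperplane of $f$ at one point of each cell, and use the quadratic (Bregman) growth of $f$ near each anchor point --- the error of a tangent-plane approximation grows quadratically, not linearly, in the distance to the anchor --- to bound the approximation error by $\eps/2$ with only $(L_d/\eps)^{d/2}$ pieces; making this rigorous for merely Lipschitz (non-smooth) convex $f$ is precisely where Bronshtein's theorem does the work.

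For the second step I would use the $1$-Lipschitz stability $\|\max_k g_k-\max_k g_k'\|_\infty\le\max_k\|g_k-g_k'\|_\infty$: an $(\eps/2)$-cover of the \emph{single} affine-function class $\{x\mapsto\alpha\transpose x+c:\|\alpha\|_\infty\le L,\ \|\alpha\transpose(\cdot)+c\|_{\infty,\calX}\le R_d^*\}$ in $\|\cdot\|_{\infty,\calX}$, applied to each of the $N$ pieces, produces an $\eps$-cover of $\scrC_{\calX,B,L}$. Since $\|\alpha\transpose x+c\|_{\infty,\calX}\le\|\alpha\|_\infty\diam(\calX)+|c|$, a product grid in $(\alpha,c)\in\RR^{d+1}$ with spacing of order $\eps/\diam(\calX)$ in $\alpha$ and $\eps$ in $c$ is such a net, of cardinality at most $(10R_d^*/\eps)^{d+1}$ with the stated constants. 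Hence $\calN_\infty(\eps,\scrC_{\calX,B,L})\le(10R_d^*/\eps)^{(d+1)N}$, and taking logarithms gives $\log\calN_\infty(\eps,\scrC_{\calX,B,L})\le(d+1)N\ln(10R_d^*/\eps)$; folding in the universal constant from Bronshtein's theorem and using $\eps\le80L_d$ (which forces $(80L_d/\eps)^{d/2}\ge1$, so the ceiling of $N$ costs at most a factor $2$) then yields the claimed $2(d+1)(80L_d/\eps)^{d/2}\ln(10R_d^*/\eps)$.

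The main difficulty is the quantitative bookkeeping, not the idea. One must (i) turn Bronshtein's asymptotic facet count into an explicit bound carrying the constant $80$, and correctly identify the enclosing radius $R_d^*$ for a general convex domain $\calX$ rather than the unit cube, tracking how $\diam(\calX)$ and the subgradient bound $L$ (packaged as $L_d=dL\diam(\calX)$) enter the curvature/diameter estimates underlying the square-root saving; and (ii) split the error budget $\eps$ cleanly between the polytopal approximation and the coefficient net so that no $d$-dependent factor leaks into the multiplicative constant. A secondary subtlety is making sure the affine pieces can be taken to be genuine supporting hyperplanes of $f$ --- so their slopes automatically lie in $[-L,L]^d$ and their intercepts within order $R_d^*$ --- which, if Bronshtein's theorem is used verbatim, means replacing its facets by supporting hyperplanes at nearby boundary points without spoiling the $\eps/2$ accuracy.
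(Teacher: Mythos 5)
The paper does not actually prove this lemma --- it is quoted verbatim from \citet[Lemma 4.3]{balazs2015near} --- and the proof there follows exactly your two-step route: approximate each convex function by a $K$-max-affine function with $K\asymp(L_d/\eps)^{d/2}$ pieces via a Bronshtein-type argument, then place a parameter net on the max-affine class; these two ingredients are precisely the paper's \cref{lem:approx-affine} and \cref{lem:covering-affine}. Your sketch is therefore correct and structurally identical to the source; the only deviation is the bookkeeping you already flag (the cited proof splits the error budget unevenly, roughly $0.9\eps$ for the max-affine approximation and $0.1\eps$ for the net, which is how the constants $80$ in the exponentiated ratio and $10R_d^*/\eps$ in the logarithm arise, rather than the $\eps/2$--$\eps/2$ split you propose).
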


\begin{proof}
    This is \citet[Lemma 4.3]{balazs2015near}.
\end{proof}

\begin{lemma}\label{lem:covering-affine}
    For any $K\in\ZZ_+$, define the class of uniformly $B$-bounded and $L$-Lipschitz $K$-piece max-affine functions on $\calX\subseteq\Rd$ as
    \begin{align*}
        \scrA_{\calX,B,L}^K \defeq \{h:\calX\to\RR \given h(x) = \max_{k\in[K]} \alpha_k\transpose x + c_k, \|\alpha_k\|_\infty \leq L, h(x)\in [-B_d,B]\}
    \end{align*}
    where $B_d\defeq B+dL\diam(\calX)$.
    Then for any $R_d \defeq 2B+4L_d$ with $L_d \defeq dL\diam(\calX)$ and any $\eps\in(0, R_d]$,
    \begin{align*}
        \log\calN_{\infty}(\eps, \scrA_{\calX,B,L}^K) \leq (d+1) K \log\frac{R_d}{\eps}.
    \end{align*}
\end{lemma}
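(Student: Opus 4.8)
The plan is to parametrize each $h\in\scrA_{\calX,B,L}^K$ by its $K$ affine pieces $(\alpha_k,c_k)\in\RR^{d+1}$, first reduce to the sub-collection in which every piece is \emph{active} (attains the maximum somewhere on $\calX$), then bound the feasible parameter range of an active piece, and finally cover that bounded parameter set by a grid, exploiting that $t\mapsto\max_{k\in[K]}t_k$ is $1$-Lipschitz in $\ell_\infty$.

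First I would note that if $h=\max_k(\alpha_k^\transpose x+c_k)$ and some piece $j$ is strictly dominated on all of $\calX$, one may replace $(\alpha_j,c_j)$ by a copy of an active piece without changing $h$ or its membership in $\scrA_{\calX,B,L}^K$; iterating, every $h$ is represented by a tuple whose $K$ pieces each equal an active one, so it suffices to $\eps$-net such tuples. Fix a reference point $x_0\in\calX$ (passing to the closure of $\calX$, which has the same diameter, so maxima are attained). For an active piece $k$ there is $x_k^\star\in\calX$ with $\alpha_k^\transpose x_k^\star+c_k=h(x_k^\star)\in[-B_d,B]$; since $\|\alpha_k\|_\infty\le L$, setting $\beta_k\defeq\alpha_k^\transpose x_0+c_k$ and $L_d\defeq dL\diam(\calX)$ gives $|\beta_k-h(x_k^\star)|\le\|\alpha_k\|_\infty\|x_0-x_k^\star\|_1\le L_d$, so $\beta_k$ lies in an interval of length $B+B_d+2L_d=2B+3L_d\le R_d$. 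Because each piece value equals $\alpha_k^\transpose(x-x_0)+\beta_k$, it is convenient to use the rescaled slopes $\widetilde\alpha_k\defeq d\diam(\calX)\,\alpha_k\in[-L_d,L_d]^d$, whose coordinates also range over an interval of length $2L_d\le R_d$. Thus every active-piece tuple is encoded by $(d+1)K$ scalars, each confined to an interval of length at most $R_d$.

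Next I would lay a uniform grid of spacing $\eps/2$ on each of these $(d+1)K$ scalars, giving a finite family of at most $(2R_d/\eps+1)^{(d+1)K}$ max-affine functions, and verify it is an $\eps$-cover. If $h,h'$ have encodings differing coordinatewise by at most $\eps/2$, then for each $k$ we get $\|\alpha_k-\alpha_k'\|_\infty\le\eps/(2d\diam(\calX))$, hence $\|\alpha_k-\alpha_k'\|_1\le\eps/(2\diam(\calX))$, so $|\alpha_k^\transpose(x-x_0)-\alpha_k'^\transpose(x-x_0)|\le\eps/2$ for all $x\in\calX$; combined with $|\beta_k-\beta_k'|\le\eps/2$ this bounds the per-piece discrepancy by $\eps$, and $1$-Lipschitzness of the max yields $\|h-h'\|_\infty\le\eps$. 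Taking logarithms and using $\eps\in(0,R_d]$ to absorb the additive constant gives $\log\calN_\infty(\eps,\scrA_{\calX,B,L}^K)\le(d+1)K\log(R_d/\eps)$, with the residual absolute constant removable by a marginally sharper (minimal rather than uniform) net exploiting the $L_d$ slack between the interval length $2B+3L_d$ and $R_d=2B+4L_d$.

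The main obstacle is the reduction step: since the definition of $\scrA_{\calX,B,L}^K$ constrains only $h$ and not the individual pieces, a naive parameter count diverges, and one must first argue that inactive pieces are irrelevant and that active pieces have intercepts pinned to a range of size $O(R_d)$. The second delicate point is the dimensional bookkeeping — the factor $d\diam(\calX)$ has to be folded into the slope parametrization so that the $\ell_\infty\!\to\!\ell_1$ blow-up in the perturbation of $\alpha_k$ is exactly compensated, producing the clean exponent $(d+1)K$ instead of an extra dimensional factor.
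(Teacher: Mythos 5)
The paper does not prove this lemma at all: it is imported verbatim as Lemma~4.2 of Balazs et al.\ (2015), so your from-scratch argument is necessarily a different route. What you give is essentially the standard parametric-net proof (and, in outline, the one behind the cited result): reduce to representations in which every affine piece is active, use activity plus the Lipschitz bound to pin the intercept $\beta_k=\alpha_k^\transpose x_0+c_k$ to an interval of length $2B+3L_d$ and each rescaled slope coordinate to an interval of length $2L_d$, grid the resulting $(d+1)K$ scalars, and pass the perturbation through the $1$-Lipschitzness of $t\mapsto\max_k t_k$ in $\ell_\infty$. The two nontrivial ingredients you identify --- that inactive pieces may be overwritten without changing $h$, and that the factor $d\,\diam(\calX)$ must be absorbed into the slope parametrization so the $\ell_\infty\!\to\!\ell_1$ blow-up is exactly cancelled --- are handled correctly, and the resulting bound is what the paper actually needs (it only ever uses this covering entropy up to universal constants inside a $\lesssim$). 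The benefit of your version is that the paper becomes self-contained on this point; the cost is that it reproves a known result.

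The one loose end is the exact constant. Your grid yields $\log\calN_\infty(\eps,\scrA_{\calX,B,L}^K)\le(d+1)K\log(2R_d/\eps+1)$, and the claim that the slack $L_d=R_d-(2B+3L_d)$ lets a minimal net recover $(d+1)K\log(R_d/\eps)$ is asserted rather than shown; a per-coordinate count $\lceil(2B+3L_d)/\eps\rceil$ can exceed $R_d/\eps$ when $\eps>L_d$ (e.g.\ $\eps$ just above $R_d/2$ forces two net points per coordinate while $R_d/\eps<2$), so the sharpening does not follow coordinatewise and would need a genuinely finer argument (as in the cited lemma). This discrepancy is a universal constant inside the logarithm and is harmless for every downstream use in the paper, but as written your proof establishes the lemma only up to that constant, not with the stated right-hand side.
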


\begin{proof}
    This is \citet[Lemma 4.2]{balazs2015near}.
\end{proof}

\begin{lemma}\label{lem:approx-affine}
    For all $B$-bounded and $L$-Lipschitz convex function $f\in\scrC_{\calX,B,L}$ on $\calX\subseteq\Rd$ and any $K\in\ZZ_+$, it holds that the best approximation $h$ of $f$ in the set of $\scrA_{\calX,B,L}^K$ satisfies
    \begin{align*}
        \inf_{h\in\scrA_{\calX,B,L}^K}\|f-h\|_{\infty} \leq 72 L_d K^{-2/d},
    \end{align*}
    where $L_d \defeq dL\diam(\calX)$.
\end{lemma}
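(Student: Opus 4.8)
The plan is to exhibit a single $h\in\scrA_{\calX,B,L}^K$ with $\|f-h\|_\infty\le 72\,L_d\,K^{-2/d}$; since the lemma only bounds the infimum, no optimality of $h$ is needed. The natural candidate is a maximum of (at most) $K$ supporting hyperplanes of $f$: take $h=\max_{k\in[K]}\ell_k$ with $\ell_k(x)=f(x_k)+\langle g_k,x-x_k\rangle$ and $g_k\in\partial f(x_k)$. By convexity $\ell_k\le f$, hence $h\le f$ pointwise, so $\|f-h\|_\infty=\sup_{x\in\calX}\bigl(f(x)-h(x)\bigr)$ and the whole problem reduces to choosing the base points $x_1,\dots,x_K$ so that this one-sided error is small. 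Membership $h\in\scrA_{\calX,B,L}^K$ is then immediate: every subgradient of $h$ is a convex combination of the $g_k$, so $\|\partial h\|_\infty\le L$; and on $\calX$ one has $\ell_k(x)\ge f(x_k)-L\|x-x_k\|_1\ge f(x_k)-dL\diam(\calX)\ge -B-L_d=-B_d$ while $\ell_k(x)\le f(x)\le B$, so $h$ takes values in $[-B_d,B]$ and is a $K$-piece max-affine function.

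For the choice of base points I would pass to the graph of $f$, i.e.\ to the boundary of the bounded convex body $E=\{(x,t):x\in\calX,\ f(x)\le t\le B\}\subseteq\RR^{d+1}$: selecting $K$ supporting hyperplanes of $f$ is the same as inscribing a polytopal lower surface with $K$ facets into $E$, and the one-sided error is exactly a Hausdorff-type distance between that surface and $\partial E$. The quantitative input is the classical Bronshtein / McClure–Vitale estimate that a $C^2$ convex $g$ on $\calX$ admits a $K$-piece max-affine lower approximant with sup-error $\lesssim K^{-2/d}\bigl(\int_\calX\sqrt{\det\nabla^2 g}\bigr)^{2/d}$ — the improvement over the trivial uniform-grid rate $K^{-1/d}$ coming from an adaptive mesh in which a cell around $x$ on which the tangent plane is $\eta$-accurate may be taken of diameter of order $\sqrt{\eta/\|\nabla^2 g(x)\|}$. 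To reduce the general $L$-Lipschitz convex $f$ to the smooth case I would mollify, $g=f*\phi_\epsilon$ (optionally also adding $\delta\|x\|^2/2$ to enforce strict convexity): this stays convex, keeps $\|\nabla g\|_\infty\le L$, and obeys $\|f-g\|_\infty\lesssim L\epsilon$ after restricting to a copy of $\calX$ shrunk by a factor $1-\epsilon/\diam(\calX)$, so letting $\epsilon,\delta\to0$ kills the auxiliary error. Finally I bound $\int_\calX\sqrt{\det\nabla^2 g}\le\sqrt{\mathrm{Vol}(\calX)}\,\sqrt{\textstyle\int_\calX\det\nabla^2 g}\le\sqrt{\diam(\calX)^d}\,\sqrt{\mathrm{Vol}(\nabla g(\calX))}\le\bigl(2L\diam(\calX)\bigr)^{d/2}$, using Cauchy–Schwarz together with the change-of-variables identity $\int_\calX\det\nabla^2 g=\mathrm{Vol}(\nabla g(\calX))$ and $\nabla g(\calX)\subseteq[-L,L]^d$. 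Substituting into the geometric estimate gives sup-error $\lesssim L\diam(\calX)\,K^{-2/d}$, and bookkeeping the absolute constants (which have ample slack against the factor $d$ inside $L_d=dL\diam(\calX)$) yields the claimed $72\,L_d\,K^{-2/d}$.

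The main obstacle is the geometric approximation estimate in the smooth case, i.e.\ obtaining the exponent $-2/d$ rather than the trivial $-1/d$; a fully self-contained proof of that step requires an adaptive cap-covering construction (in the spirit of the Bárány–Larman economic cap covering lemma, or the explicit mesh of McClure–Vitale), and it is also the place where the concrete constant must be extracted. Everything else — the mollification/perturbation reduction, the Cauchy–Schwarz and Monge–Ampère bound on $\int_\calX\sqrt{\det\nabla^2 g}$, and the verification that the constructed $h$ lies in $\scrA_{\calX,B,L}^K$ with one-sided error — is routine.
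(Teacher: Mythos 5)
The paper does not prove this lemma at all: it is imported verbatim as Lemma~4.1 of \citet{balazs2015near}, and the ``proof'' is a one-line citation. Your proposal is therefore necessarily a different route, and in fact it is a faithful reconstruction of how results of this type are actually established: lower-bound $f$ by $K$ supporting hyperplanes, reduce the placement of the contact points to polytopal approximation of the epigraph, invoke a Bronshtein-type adaptive cap-covering estimate to get the exponent $-2/d$ (rather than the $-1/d$ of a uniform mesh), and control the affine-invariant factor $\bigl(\int_{\calX}\sqrt{\det\nabla^2 g}\bigr)^{2/d}$ by Cauchy--Schwarz and the Monge--Amp\`ere identity $\int_{\calX}\det\nabla^2 g=\mathrm{Vol}(\nabla g(\calX))\le(2L)^d$. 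The peripheral steps you give are all correct: the membership check $h\in\scrA_{\calX,B,L}^K$ (subgradients of a max of affine pieces lie in the convex hull of the $g_k$, and $\ell_k(x)\ge f(x_k)-L\,d\,\diam(\calX)\ge -B_d$ under the paper's $\ell_\infty$ diameter convention), the mollification reduction, and the $2L\diam(\calX)$ bound on the curvature integral.

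That said, as a self-contained proof it has a genuine gap, which you candidly flag yourself: the entire content of the lemma --- the exponent $-2/d$ and the explicit constant $72$ in front of $L_d=dL\diam(\calX)$ --- lives inside the adaptive cap-covering step, which you assert by appeal to Bronshtein / McClure--Vitale rather than carry out. ``Bookkeeping the absolute constants\ldots yields the claimed $72\,L_d K^{-2/d}$'' is not something that can be verified from what you have written; the factor of $d$ inside $L_d$ is indeed where the dimension dependence of the cap covering is absorbed, but establishing that the slack suffices is precisely the computation done in \citet{balazs2015near}. One further minor point: after mollifying, the supporting-hyperplane construction approximates $g=f*\phi_\epsilon$, not $f$, so the resulting $h$ may violate the stated range $[-B_d,B]$ by $O(L\epsilon)$; this washes out as $\epsilon\to 0$ but should be said. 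In net, your sketch is at the same level of rigor as the paper's citation --- it correctly exposes the machinery behind the cited lemma without replacing the citation.
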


\begin{proof}
    This is \citet[Lemma 4.1]{balazs2015near}.
\end{proof}

\end{document}